\newcommand{\nfrac}{\nicefrac}
\renewcommand{\mathbb}{\varmathbb}
\renewcommand{\leq}{\leqslant}
\renewcommand{\geq}{\geqslant}
\newcommand{\cE}{\mathcal E}
\newcommand{\cI}{\mathcal I}
\newcommand{\cS}{\mathcal S}
\newcommand{\abs}[1]{\lvert#1\rvert}
\newcommand{\floor}[1]{\lfloor #1 \rfloor}
\newcommand{\norm}[1]{\lVert#1\rVert}
\newcommand{\Norm}[1]{\left\lVert#1\right\rVert}
\newcommand{\supp}{\mathrm{supp}}
\newcommand{\poly}{\mathrm{poly}}
\newcommand{\vol}{\mathrm{vol}}
\newcommand{\defeq}{\stackrel{\textup{def}}{=}}
\newcommand{\normt}[1]{\norm{#1}_{\scriptstyle 2}}
\newcommand{\R}{\mathbb R}
\newcommand{\sdp}{{\sf sdp}}
\newcommand{\sgn}{{\rm sgn}}
\newcommand{\card}{\abs}
\newcommand{\Esymb}{\mathbb{E}}
\newcommand{\Psymb}{\mathbb{P}}
\DeclareMathOperator*{\E}{\Esymb}
\DeclareMathOperator*{\ProbOp}{\Psymb}
\renewcommand{\Pr}{\ProbOp}
\newcommand{\Tr}{{\rm Tr}}
\newcommand{\e}{\epsilon}
\let\e\varepsilon
\newcommand\bdot\bullet
\newcommand{\BS}{{\sc Balanced Separator}\xspace}
\newcommand{\SDP}{{\sf SDP}\xspace}
\newcommand{\Vol}[1]{\text{vol}(#1)}
\newcommand{\Deg}{D^{\nfrac{1}{2}}}
\newcommand{\Degin}{D^{-\nfrac{1}{2}}}
\numberwithin{equation}{section}
\newcommand{\psdp}{{\ensuremath{{\mathsf{psdp}}}\xspace}}
\newcommand{\dsdp}{{\ensuremath{\mathsf{dsdp}}\xspace}}
\newcommand{\alg}{{\sc BalCut}\xspace}
\newcommand{\avg}{{\ensuremath{\mathsf{avg}}\xspace}}
\begin{document}

\title{
  \textsc{%
Towards an SDP-based Approach to Spectral Methods}\\ {\Large A Nearly-Linear-Time Algorithm for Graph Partitioning and Decomposition}
}

\author{Lorenzo Orecchia  \thanks{UC Berkeley. Supported by NSF grants CCF-0830797 and CCF-0635401. This work was initiated while this author was visiting Microsoft Research India, Bangalore.}\\
\and 
Nisheeth K. Vishnoi \thanks{Microsoft Research India, Bangalore.}}
\date{}

\maketitle	

\begin{abstract}
In this paper, we consider the following graph partitioning problem: The input is an undirected graph $G=(V,E),$ a balance parameter $b \in (0,1/2]$  and a target conductance value $\gamma \in (0,1).$ 
The  output is a cut which, if non-empty, is of conductance at most $O(f),$ for some function $f(G, \gamma),$ and which  is either balanced or well correlated with all cuts of conductance at most $\gamma.$ 
In a seminal paper, Spielman and Teng \cite{ST1} gave an $\tilde{O}(|E|/\gamma^{2})$-time algorithm for $f= \sqrt{\gamma \log^{3}|V|}$ and used it to decompose graphs into a collection of near-expanders \cite{ST2}.

We present a new spectral algorithm for this problem which runs in time $\tilde{O}(|E|/\gamma)$ for $f=\sqrt{\gamma}.$ Our result yields the first nearly-linear time algorithm for the classic \BS problem that achieves the asymptotically optimal approximation guarantee for spectral methods.

Our method has the advantage of being conceptually simple and relies on a primal-dual semidefinite-programming (\SDP) approach. 
We first consider a natural SDP relaxation for the \BS problem.  While it is easy to obtain from this SDP a certificate of the fact that the graph has no balanced cut of conductance less than $\gamma,$ somewhat  surprisingly,  we can obtain a certificate for the stronger correlation condition. 
This is achieved via a novel separation oracle for our SDP and by appealing to Arora and Kale's \cite{AK}  framework to bound the running time.
Our result contains technical ingredients that may be  of independent interest.
\end{abstract}

\section{Introduction}
%\Authornote{Lorenzo}{I think the order is MUCH better this way! with graph partitioning first.}

\subsection{Graph Partitioning.}
Given a graph $G=(V,E),$  the conductance of
a cut $(S,\bar{S})$ is  $\phi(S) \defeq \nfrac{|E(S,\bar{S})|}{\min
\{\vol(S),\vol (\overline{S})\}},$ where $\vol(S)$ is the sum of the degrees of the vertices in the set $S$. 
A cut $(S, \bar{S})$ is $b$-balanced if $\min \{\vol(S), \vol(\bar{S})\} \geq b \cdot \vol{V}.$ 
A graph partitioning problem of widespread interest is the  \BS  problem: given $G=(V,E),$ a constant\footnote{
We will use $O_b(\cdot)$ and $\Omega_b(\cdot)$ in our asymptotic notation when we want to emphasize the dependence of the hidden coefficent on $b.$}
balance parameter $b \in (0,\nfrac{1}{2}],$ and a conductance value $\gamma \in (0,1),$ does $G$ have a $b$-balanced cut $S$ such that $\phi(S) \leq \gamma$?

\BS is an intensely studied problem in both theory and practice.  
It has far-reaching connections to spectral graph theory, the study of random walks and metric embeddings. Besides being a theoretically rich problem, \BS is of great practical importance, as it plays a central role in the design of recursive algorithms, image segmentation and clustering. 
%The best approximation algorithm known is due to Arora, Rao and Vazirani (ARV)~\cite{ARV}, who achieve a $O\left(\sqrt{\log n}\right)$ approximation using {\SDP}-based techniques  and measure concentration.
%There is a huge literature on this problem and  the reader is referred to \cite{OV} for a discussion on spectral and flow based algorithms for this problem.

Since {\BS}  is an  NP-hard problem~\cite{GareyJ79}, we seek approximation algorithms that either output a cut of conductance at-most $f(\gamma,\log |V|)$ and balance $
\Omega_b(1)$ or a certificate that $G$ has no $b$-balanced cut of conductance at most $\gamma.$ 
In their seminal series of papers~\cite{ST1, ST2, ST3}, Spielman and Teng use an approximation algorithm for {\BS} as a fundamental primitive to decompose the instance graph into a collection of near-expanders. This decomposition is then used to construct spectral sparsifiers and solve systems of linear equations in nearly linear time. Their algorithm has two crucial features: first, it runs in nearly linear time; second, in the case that no balanced cut exists in the graph, it outputs a certificate of a special form. This certificate consists of an unbalanced cut of small conductance which is well-correlated with all low-conductance cuts in the graph. We prove in Section \ref{app:cut} in the Appendix that such a cut is indeed a negative certificate for the \BS problem.
Formally, they prove the following:
\begin{theorem}\cite{ST1} \label{thm:st}
Given a graph  $G$, a balance parameter $b \in (0,\nfrac{1}{2}], \; b = \Omega(1)$ and a conductance value $\gamma \in (0,1),$ {\sc Partition}$(G,b, \gamma)$ runs in time $\tau$ and outputs a cut $S \subseteq V$ such that $\vol(S) \leq \nfrac{7}{8}\cdot \vol(G),$  $\phi(S) \leq f_{1}$ or $S = \emptyset,$  and with high probability, either
\begin{enumerate}

	\item $S$ is $\Omega_b(1)$-balanced, or 
	\item for all $C \subset V$ such that $\vol(C) \leq \nfrac{1}{2} \cdot  \vol(G)$ and  $\phi(C) \leq O(\gamma),$ 
$
\frac{\vol(S \cap C)}{\vol(C)} \geq \nfrac{1}{2}.
$
\end{enumerate}
\end{theorem}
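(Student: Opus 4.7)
The plan is to build the \partition algorithm by iterating a local primitive, call it \nibble, that starts from a single vertex and, with good probability, extracts a low-conductance cut near that vertex via a truncated lazy random walk. The outer \partition procedure repeatedly samples a starting vertex with probability proportional to its degree, invokes \nibble, accumulates the returned cuts into a running set $S$, and halts once either enough volume has been removed to yield an $\Omega_b(1)$-balanced cut, or the accumulated $S$ must contain at least half of every low-conductance set.

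\textbf{Step 1: local random-walk analysis.} From a starting vertex $v$, run a lazy random walk for $t = \Theta(\log n / \gamma)$ steps, but at each step truncate all probabilities below a small threshold $\epsilon$, so that the current distribution always has support of size $\tilde O(1/\epsilon)$; this is what enables the $\tilde O(|E|/\gamma^2)$ running time. Apply a sweep cut: sort vertices by probability over degree and check each prefix for low conductance. The core argument is a Lov\'asz--Simonovits-type bound on the level-set curve of the walk's distribution, yielding a sweep cut of conductance $O(\sqrt{\gamma \log^3 n})$ whenever the walk concentrates. The truncation must be shown to preserve this estimate up to an $\ell_1$ error accumulated over the $t$ steps, which is the crux of the analysis.

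\textbf{Step 2: sampling and good starting vertices.} For any set $C$ with $\vol(C) \leq \vol(G)/2$ and $\phi(C) \leq O(\gamma)$, call a vertex $v \in C$ \emph{good} if the random walk from $v$ remains largely inside $C$ at time $t$ and \nibble succeeds from $v$. Since the expected escape probability from $C$ of a degree-weighted random vertex in $C$ after $t$ steps is at most $t \gamma = O(\log n)$, a Markov argument shows that a constant fraction of $\vol(C)$ is good. Hence a degree-proportional sample hits a good vertex of $C$ with probability $\Omega(\vol(C)/\vol(G))$, and on such a vertex \nibble outputs a cut correlated with $C$.

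\textbf{Step 3: iteration and case analysis.} Partition performs $O(\log n)$ rounds of sampling and invoking \nibble, removing the output of each call from the graph and accumulating it into $S$; the construction is stopped early to enforce $\vol(S) \leq \tfrac{7}{8}\vol(G)$. Each successful round eats a constant fraction of the surviving volume of every low-conductance $C$. After enough rounds, either $\vol(S) = \Omega_b(\vol G)$, giving case 1, or at least half of every low-conductance $C$ has been absorbed into $S$, giving case 2. The main obstacle is Step 1: the Lov\'asz--Simonovits bound is classical for the untruncated walk, but aggressive truncation introduces an $\ell_1$ perturbation at each of the $t = \Theta(\log n/\gamma)$ steps, so a ``faithful truncation'' lemma is needed to show that the truncated walk agrees with the true one on a high-probability event, and this faithfulness must be threaded through every invocation of the Lov\'asz--Simonovits inequality. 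A secondary subtlety is calibrating the definition of good vertex so that goodness simultaneously guarantees that \nibble succeeds \emph{and} that a constant-volume fraction of any low-conductance $C$ is good.
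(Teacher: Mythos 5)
This theorem is quoted from Spielman and Teng \cite{ST1}; the present paper does not prove it, so there is no in-paper proof to compare against. Your reconstruction does capture the overall architecture of the Spielman--Teng scheme --- a truncated lazy random walk (Nibble), degree-proportional random restarts, Lov\'asz--Simonovits level-set bounds for the sweep cut, and an outer loop that accumulates the returned cuts into $S$ --- and you correctly flag the ``faithful truncation'' issue as the technical crux of the local primitive.

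That said, Step~2 as written is broken. You set $t = \Theta(\log n/\gamma)$ and then assert that the expected escape probability from $C$ is at most $t\gamma = O(\log n)$; since a probability is at most $1$, this bound is vacuous and no Markov argument follows from it. To make the cumulative-leakage bound $t\,\phi(C)$ useful you need $t \ll 1/\gamma$, but then the walk has not mixed within $C$, which is exactly what the Lov\'asz--Simonovits analysis needs. Spielman and Teng resolve this tension by a more delicate argument: rather than fixing one walk length and bounding a single escape probability, they argue that for a constant volume fraction of starting vertices in $C$, \emph{either} the level-set curve must collapse at some intermediate step (producing a sweep cut of conductance $O(\sqrt{\gamma\log^3 n})$ supported mostly inside $C$) \emph{or} the walk would have had to leak more mass out of $C$ than $\phi(C)$ permits --- a dichotomy, not a direct escape-probability bound. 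A second gap: Theorem~\ref{thm:st} asserts that with high probability condition~2 holds simultaneously \emph{for all} low-conductance $C$; your Step~3 argues per-$C$, and a union bound over exponentially many $C$ is unavailable. You need an argument showing that a single high-probability event over the algorithm's coins guarantees the correlation for every such $C$ at once --- e.g., by arguing deterministically from the accumulated volume removed, or by a covering argument, as Spielman and Teng do. Without fixing both points the proposal does not close.
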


\noindent
Originally, Spielman and Teng showed Theorem \ref{thm:st} with $f_1= O\left( \sqrt{\gamma \log^3 n}\right)$ and $\tau = \tilde{O}\left(\nfrac{m}{\gamma^2}\right).$ This was subsequently improved by Andersen, Chung and Lang \cite{ACL} and then by Andersen and Peres \cite{AP} to the current best of $f_1=O\left(\sqrt{\gamma \log n}\right)$ and $\tau = \tilde{O}(\nfrac{m}{\sqrt{\gamma}}).$ All these results made use of bounds on the convergence of random walk processes on the instance graph, such as the Lovasz-Simonovits bounds \cite{LS}. These bounds yield the $\log n$ factor in the approximation guarantee, which appears hard to remove while following this approach.

\subsection{Our Contribution}

In this paper,  we use a semidefinite programming approach to design a new spectral algorithm, called {\sc BalCut}, that improves on the result of Theorem \ref{thm:st}. The following is our main result.

\begin{theorem}[Main Theorem] \label{thm:main}
Given a graph  $G=(V,E)$, a balance parameter $b \in (0,\nfrac{1}{2}], \; b = \Omega(1),$ and a conductance value $\gamma \in (0,1),$ {\sc BalCut}$(G,b, \gamma)$ runs in time $\tilde{O}\left(\nfrac{m}{\gamma}\right)$ and outputs a cut $S \subset V$  such that $\vol(S) \leq \nfrac{7}{8}\cdot \vol(G),$  if $S \neq \emptyset$ then  $\phi(S) \leq O_b\left(\sqrt{\gamma}\right),$ and with high probability, either
\begin{enumerate}
\item $S$ is $\Omega_b(1)$-balanced, or 
\item for all $C \subset V$ such that $\vol(C) \leq \nfrac{1}{2} \cdot \vol(G) $ and  $\phi(C) \leq O(\gamma),$ 
$
\frac{\vol(S \cap C)}{\vol(C)} \geq \nfrac{1}{2}.
$
\end{enumerate} 
%Moreover, in the second case, \alg can also certify that there is exists no $\Omega_b(1)$-balanced cut of conductance less than $O(\gamma).$
\end{theorem}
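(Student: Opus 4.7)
The plan is to design a primal--dual SDP-based algorithm that proves Theorem~\ref{thm:main} by either producing a balanced sparse cut directly, or producing an unbalanced sparse cut together with a dual certificate that this cut is well-correlated with every low-conductance cut. The algorithmic backbone will be the Arora--Kale matrix multiplicative weights framework \cite{AK} applied to a natural SDP relaxation of \BS. Recall that this framework reduces approximately solving an SDP to repeatedly invoking a separation oracle on candidate density matrices; the price is roughly $\tilde{O}(1/\gamma)$ oracle calls to obtain a $\gamma$-accurate certificate, each call costing essentially the cost of computing the action of a matrix exponential on a few vectors and of running a sweep-cut procedure.

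First, I would formulate a feasibility SDP that asks for vector embeddings $\{v_i\}_{i\in V}$, thought of as a PSD lifting of the $\pm 1$-valued indicator of a balanced cut, satisfying (i) a Laplacian constraint of the form $\sum_{ij\in E} \snorm{v_i - v_j} \leq \gamma \cdot \vol(G)$ so that the cut is sparse, (ii) a spreading/variance constraint weighted by degrees so that the cut is $\Omega_b(1)$-balanced, and (iii) unit-length type constraints on the $v_i$. Infeasibility of this SDP at conductance $\gamma$ is precisely what we want to certify whenever $G$ has no $b$-balanced cut of conductance $\leq \gamma$. The central step is to design a separation oracle that, given the current density matrix $P \sge 0$ produced by the MMW update, either (a) finds a direction (a linear combination of eigenvectors of $I - \Degin L \Degin$ exposed by $P$) whose quadratic form violates the Laplacian or balance constraint, yielding a new constraint to feed back into MMW, or (b) outputs a low-conductance cut $S$ by rounding $P$ via a sweep along a vector drawn from $P$.

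The main technical content, and the hardest step, is to argue that whenever the oracle terminates in case~(b) without producing a balanced cut, the cut $S$ it produces nevertheless satisfies the correlation property of item~2, i.e.\ $\vol(S\cap C)/\vol(C) \geq \nfrac{1}{2}$ for every low-conductance $C$. The strategy I would follow is this: at termination the density matrix $P$ approximately ``dominates'' every feasible direction of the SDP, and the indicator of any low-conductance cut $C$ embeds into such a feasible direction. Hence any would-be $C$ not correlated with $S$ could be combined with $\bar S$ to manufacture a balanced feasible direction not dominated by $P$, contradicting termination. Turning this contradiction into the stated bound requires a Cheeger-style rounding analysis on the sweep cut, which is exactly where the $O(\sqrt{\gamma})$ loss in conductance enters and where the asymptotically optimal spectral guarantee $\phi(S)\leq O_b(\sqrt\gamma)$ comes from.

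Finally, the running time follows from combining the $\tilde{O}(1/\gamma)$ outer iterations of Arora--Kale with an $\tilde{O}(m)$-time implementation of the oracle. The oracle only needs to evaluate a few bilinear forms $x^\top P x$ and to run a sweep along a vector $v = P^{\nfrac{1}{2}} r$ for a random $r$; the action of $P \approx \exp\bigl(-\sum_t M_t\bigr)$ on vectors can be approximated to the required precision by Johnson--Lindenstrauss dimension reduction combined with polynomial approximation of the matrix exponential whose evaluation reduces to a polylogarithmic number of near-linear-time Laplacian solves. Multiplying the per-iteration cost by the number of iterations and accounting for the sweep cut yields the claimed $\tilde{O}(m/\gamma)$ bound. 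The most delicate part of the whole argument, as emphasized in the introduction, is establishing that the primal oracle certificate in case~(b) is strong enough to yield the correlation guarantee rather than merely the absence of a balanced sparse cut.
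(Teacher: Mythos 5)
Your high-level architecture matches the paper: an SDP relaxation of \BS with a Laplacian sparsity constraint, a degree-weighted variance/spreading constraint, and per-vertex radius constraints; the Arora--Kale matrix-multiplicative-weights loop with $\tilde O(1/\gamma)$ iterations; an oracle that either certifies ``roundable'' (and projects+sweeps to a balanced cut) or outputs a hyperplane; and JL sketching plus an approximate matrix exponential to keep each iteration near-linear. All of that is correct and is essentially what the paper does.

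The genuine gap is in the step you yourself flag as the hardest: the correlation guarantee in item~2. Your sketch says that at termination the density matrix approximately dominates every feasible direction, that the indicator of $C$ embeds as a feasible direction, and that an uncorrelated $C$ could be combined with $\bar S$ to manufacture a balanced feasible direction, ``contradicting termination.'' This is not how the argument works and as stated it does not close. Termination of the MMW loop certifies \emph{dual} feasibility, i.e.\ that the averaged coefficients $(\bar\alpha,\bar\beta)$ satisfy $M(\bar\alpha,\bar\beta)\succeq 0$ with $V(\bar\alpha,\bar\beta)=\Omega(\gamma)$. That certificate, by itself, only rules out the existence of a sparse balanced cut; it says nothing about which vertices the output cut $S$ overlaps. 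To get the correlation property you need two additional structural facts that your sketch omits. First, the algorithm does not output one final sweep cut: at each iteration where the embedding is not roundable, the oracle produces an unbalanced low-conductance sweep cut $B^{(t)}$, and the final $S$ is the \emph{union} $\bigcup_t B^{(t)}$. Second, and crucially, the oracle's dual coefficients are designed so that $\beta^{(t)}_i = \mu_i\gamma$ exactly for $i \in B^{(t)}$ and $0$ otherwise, while $\alpha^{(t)}\geq \nfrac{7}{8}\gamma$. With that linkage, plugging the one-dimensional indicator embedding of any candidate cut $C$ (with $\phi(C)=O(\gamma)$) into the PSD inequality $M(\bar\alpha,\bar\beta)\bullet U \geq 0$ produces, after using Facts~\ref{fct:mean} and~\ref{fct:star}, a \emph{quantitative lower bound} on $\nfrac{1}{T}\sum_t \mu(B^{(t)}\cap C)/\mu(C)$, and since $S\supseteq B^{(t)}$ for all $t$ and $\mu(S)$ is small, this collapses to $\mu(S\cap C)/\mu(C)\geq \nfrac{1}{2}$. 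No contradiction argument or ``manufactured balanced direction'' is involved; the bound comes directly from evaluating the dual certificate on $C$ and exploiting that the $\beta$'s are supported on $S$.

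A secondary issue: you attribute the correlation bound to ``a Cheeger-style rounding analysis on the sweep cut.'' The Cheeger-style sweep (Lemma~\ref{lem:cheeger}) is used only to show that the oracle's cut $B^{(t)}$ has conductance $O(\sqrt{\gamma})$ and captures most of the embedding's variance, i.e.\ it justifies the oracle's existence and its dual-value bounds. The correlation with arbitrary $C$ does not come from Cheeger; it comes from the dual PSD inequality as above. Keeping these two roles separate is important, because the $\sqrt{\gamma}$ appears only in the conductance of the output, while the correlation bound is a purely linear-algebraic consequence of dual feasibility and the support structure of $\bar\beta$.
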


\noindent
Note that our result improves the parameters  of previous algorithms by eliminating the $\log n$ factor in the quality of the cut output, making the approximation comparable to the best that can be hoped for using spectral methods~\cite{GM}. Our result is also conceptually simple: we use the primal-dual framework of Arora and Kale \cite{AK} to solve {\SDP}s combinatorially, and we give a new separation oracle that yields  Theorem \ref{thm:main}.
Finally, our result implies an approximation algorithm for {\BS}, as the guarantee of Theorem \ref{thm:main} on the cut $S$ output by \alg also implies a lower bound on the conductance of balanced cuts of $G.$ The proof can be found in Section \ref{app:cut} in the Appendix.
\begin{corollary}\label{cor:cut}
Given an instance graph $G,$ a balance parameter $b \in (0, \nfrac{1}{2}]$ and a target conductance $\gamma \in (0,1],$ \alg$(G,b,\gamma)$ either outputs an $\Omega_b(1)$-balanced cut of conductance at most $O_b(\sqrt{\gamma})$ or a certificate that all $\Omega_b(1)$-balanced cuts have conductance at least $\Omega_b(\gamma).$ The running time of the algorithm is $\tilde{O}(m/{\gamma}).$
\end{corollary}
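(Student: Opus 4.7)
My plan is to deduce the corollary directly from Theorem~\ref{thm:main} by a short case analysis on the cut $S$ returned by \alg$(G,b,\gamma)$; the running-time bound $\tilde O(m/\gamma)$ will carry over verbatim. Fix the absolute constant $c_0>0$ hidden in the $O(\gamma)$ appearing in condition~2 of the theorem, so that every $C$ with $\vol(C)\le\vol(G)/2$ and $\phi(C)\le c_0\gamma$ satisfies $\vol(S\cap C)\ge \tfrac12\vol(C)$.

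In the first case, if $S$ is $\Omega_b(1)$-balanced, Theorem~\ref{thm:main} already certifies $\phi(S)\le O_b(\sqrt\gamma)$, so $S$ is itself the desired balanced low-conductance cut and I simply output it. In the complementary case, $S$ is not $\Omega_b(1)$-balanced (possibly $S=\eset$), and I would output $S$ as a negative certificate; the actual content of the proof is to show that in this case every $\Omega_b(1)$-balanced cut of $G$ has conductance at least $\Omega_b(\gamma)$.

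I would establish this by contradiction: suppose a $b'$-balanced cut $(T,\overline T)$ with $b'=\Omega_b(1)$ and $\phi(T)<c_0\gamma$ exists, swap sides so $\vol(T)\le\vol(G)/2$, and apply condition~2 of Theorem~\ref{thm:main} with $C=T$ to obtain
\[
\vol(S)\ \ge\ \vol(S\cap T)\ \ge\ \tfrac12\vol(T)\ \ge\ \tfrac{b'}{2}\vol(G).
\]
Together with the theorem's upper bound $\vol(S)\le\tfrac78\vol(G)$, this forces $\min\{\vol(S),\vol(\overline S)\}\ge\min\{b'/2,1/8\}\vol(G)$, making $S$ itself $\Omega_b(1)$-balanced and contradicting the branch we are in. Note that the argument covers $S=\eset$ trivially, since the displayed chain would give $0\ge \tfrac{b'}{2}\vol(G)>0$.

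The only step that requires care is pinning down the implicit balance constants so that the dichotomy is truly a dichotomy: the threshold at which \alg declares its output ``$\Omega_b(1)$-balanced'' in case one must be fixed to be no larger than $\min\{b'/2,1/8\}$, where $b'=\Omega_b(1)$ is the balance constant at which the corollary asserts a lower bound on the conductance of balanced cuts. Both quantities are explicit functions of $b$, so this is a matter of choosing constants up front rather than a substantive obstacle, after which the two cases are mutually exclusive and exhaustive and the corollary follows.
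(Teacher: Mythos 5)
Your proof is correct and is essentially the argument the paper gives in Section~\ref{app:cut}: if $S$ is not $\Omega_b(1)$-balanced and the correlation condition holds, then any low-conductance cut $T$ with $\vol(T)\le\vol(G)/2$ satisfies $\vol(T)\le 2\vol(S\cap T)\le 2\vol(S)$, which bounds $\vol(T)$ away from the balance threshold. You phrase it as a contradiction (showing $S$ would have to be balanced) rather than the paper's direct contrapositive (showing $T$ cannot be balanced), and you spell out the constant-pinning between the two appearances of $\Omega_b(1)$ that the paper leaves implicit, but the underlying inequality chain and the reduction to condition~2 of Theorem~\ref{thm:main} are the same.
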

This is the first nearly-linear-time spectral algorithm for \BS that achieves the asymptotically optimal approximation guarantee for spectral methods.

\subsection{Graph Decomposition.} The main application of Theorem \ref{thm:st} is the construction of a particular kind of graph decomposition. In this decomposition, we wish to partition the vertex set of the instance graph $V$ into components $V_1, \ldots, V_i, \ldots, V_k$ such that the graph induced by $G$ on each $V_i$ has conductance as large as possible, while at most a constant fraction of the edges have endpoints in different components. These decompositions are a useful algorithmic tool in several areas \cite{Trevisan05, KM, ST2}.

Kannan, Vempala and Vetta \cite{Kannan} construct such decompositions achieving a conductance value of $\Omega(\nfrac{1}{\log^2 n}).$ However, their algorithm runs in time $\tilde{O}(m^2)$ on some instances. 

Spielman and Teng~\cite{ST2} relax this notion of decomposition by only requiring that each $V_i$ be contained in a superset $W_i$ in $G$, where $W_i$ has large induced conductance in $G.$ In the same work, they show that this relaxed notion of decomposition suffices for the purposes of sparsification by random sampling. The advantage of this relaxation is that it is now possible to compute this decomposition in nearly-linear time by recursively applying the algorithm of Theorem \ref{thm:st}.
\begin{theorem} \label{thm:dec} \cite{ST2} 
Assume the existence of an algorithm achieving parameters $\tau$ and $f_1$ in Theorem \ref{thm:st}. Given $\gamma \in (0,1),$ in time $\tilde{O}(\tau),$ it is possible to construct a decompositions of the instance graph $G$ into components $V_1, \ldots, V_k$ such that:
\begin{enumerate}
\item for each $V_i,$ there exists $W_i \supseteq V_i$ such that the conductance of the graph induced by $G$ on $W_i$ is  $\Omega(\nfrac{\gamma}{\log n}).$
\item the fraction of edges with endpoints in different components is $O(f_1 \log n).$
\end{enumerate}
\end{theorem}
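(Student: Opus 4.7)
The plan is to build the decomposition via a recursive algorithm that applies the Partition primitive of Theorem \ref{thm:st} as a black box on successively smaller subgraphs of $G$. Starting from an active set $A = V$, I would run \textsc{Partition}$(G[A], b, \gamma)$ for a suitable constant $b$ and branch on the output $S$: if $S = \emptyset$, declare $A$ a component with superset $W = A$; if $S$ is $\Omega(1)$-balanced in $G[A]$, recurse on both $G[S]$ and $G[A \setminus S]$, each becoming its own new $W$; if $S$ is unbalanced but well-correlated with all low-conductance cuts of $G[A]$, declare $A \setminus S$ a component whose superset $W$ is inherited from the parent call, so the near-expander context is preserved, and recurse on $G[S]$ to further decompose the peeled-off piece. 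Since each recursion branch either terminates (after declaring a component) or contracts the active volume by a constant factor, the recursion tree has depth $O(\log n)$, and running Partition at each node yields total running time $\tilde{O}(\tau)$.

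To bound cross-component edges, observe that every call to Partition on an active set $A$ separates off a cut of conductance at most $f_1$ and hence contributes at most $O(f_1) \cdot \vol(A)$ such edges. Since the active sets at any single level of the recursion partition $V$, their volumes sum to at most $\vol(G)$, and summing across the $O(\log n)$ levels gives the desired $O(f_1 \log n) \cdot |E|$ bound.

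The main technical obstacle, and the source of the $\log n$ factor in the induced-conductance guarantee, is verifying that for each output component $V_i$ the subgraph $G[W_i]$ has conductance $\Omega(\gamma / \log n)$. Suppose for contradiction that $G[W_i]$ contained a cut of conductance below $\gamma / (c \log n)$ for a small constant $c$. Such a cut would already be visible at the ancestor call to Partition where $W_i$ was fixed: by Theorem \ref{thm:st}, either that call would have returned a balanced cut inside $W_i$ (contradicting the peeling chain that defines $W_i$), or it would have returned an unbalanced well-correlated cut whose intersection with the hypothetical violator covers at least half of the violator's volume, forcing the violator to lie mostly in the peeled-off piece and to be destroyed on the next recursive call. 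Making this reasoning quantitative by composing the correlation guarantee over up to $O(\log n)$ consecutive peelings along the chain from $W_i$ down to $V_i$ --- and in particular tracking how the effective conductance threshold can degrade by a constant factor at each composition --- is the step I expect to be the main obstacle.
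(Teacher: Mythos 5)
The paper offers no proof of this theorem --- it cites Spielman--Teng~\cite{ST2} and only remarks that the decomposition is obtained ``by recursively applying the algorithm of Theorem~\ref{thm:st}.'' So the right benchmark here is the cited construction, and at a high level your recursive-peeling framework matches it: your accounting of running time ($O(\log n)$ recursion depth, level volumes summing to at most $\vol(G)$, nearly-linear work per node) and of the crossing-edge fraction (each peel cuts at most an $O(f_1)$ fraction of the local volume, summed across $O(\log n)$ levels) is the standard argument.

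There are, however, two genuine gaps. First, the superset bookkeeping is wrong: when \textsc{Partition} returns an unbalanced cut $S\subseteq A$, you declare $A\setminus S$ a final component with witness ``inherited from the parent call.'' But the correlation guarantee of Theorem~\ref{thm:st} is stated relative to the \emph{current} active graph $G[A]$, not to any ancestor, so there is no reason $G[W_{\mathrm{parent}}]$ should have large conductance; the only candidate witness that the guarantee speaks to is $A$ itself. Second, and more seriously, a single peel does not certify anything. The correlation property only rules out low-conductance cuts of $G[A]$ lying entirely inside $A\setminus S$; a set $C\subseteq A\setminus S$ can have much lower conductance in the induced graph $G[A\setminus S]$ than in $G[A]$, since deleting $S$ removes boundary edges of $C$ while also shrinking volumes. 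So $G[A\setminus S]$ may still contain cuts of conductance far below $\gamma$, and declaring it a component after one peel is unjustified. The Spielman--Teng construction iterates the peeling within one phase and tracks a volume budget, and it is precisely this iteration that produces both $\log n$ factors in the statement: the witness $W_i$ is the active set at the start of a peeling phase, and $\Omega(\gamma/\log n)$ is what survives after charging the conductance degradation across the $O(\log n)$ peels. Your closing paragraph flags this composition as the ``main obstacle,'' and I agree that it is not resolved: the contradiction sketch is too loose because the hypothetical violating cut in $G[W_i]$ need not be visible to the ancestor call at the threshold $O(\gamma)$ --- its conductance relative to that ancestor's active graph may be substantially higher. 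Making that degradation quantitative along the peeling chain is the content of the proof in~\cite{ST2}, and it is the piece missing here.
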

Using Theorem~\ref{thm:dec}, Spielman and Teng showed the existence of a decomposition achieving conductance $\Omega(\nfrac{1}{\log^6 n}).$
Our improved results in Theorem \ref{thm:main} imply that we can obtain decompositions of the same kind with conductance  bound $\Omega(\nfrac{1}{\log^3 n}).$ Our improvement also implies speed-ups in the sparsification procedure described by Spielman and Teng~\cite{ST2}. However, this result has since been superceded by work of Koutis, Miller and Peng~\cite{KMP} that gives a very fast linear equation solver that can be used to compute sampling probabilities for each edge, yielding a spectral sparsifier with high probability~\cite{SS}.

Our work leaves open the important question posed by Spielman~\cite{SpielmanICM} of whether stronger decompositions, of the kind proposed by Kannan, Vempala and Vetta~\cite{Kannan}, can be produced in nearly-linear time.
%We believe that using our techniques it may be possible to at least extend our result to a intermediate notion of decomposition, in which we able to bound the Dirichlet and Von Neumann eigenvalues in $G$ of the partition components.

\subsection{Overview of Techniques}
\paragraph{Spectral Approach.}
The simplest algorithm for \BS, also used by Kannan et al.~\cite{Kannan}, is the recursive spectral algorithm. 
This algorithm finds the minimum-conductance sweep cut of the second eigenvector of $G$,  removes the cut and all adjacent edges from $G,$ and reiterates on the remaining graph. The algorithm stops when the union of the cuts removed becomes $\nfrac{b}{2}$-balanced or when the residual graph is found to have spectral gap at least $\gamma,$ certifying that no more progress can be made. As every cut may only remove $O(1)$ volume  and the eigenvector computation takes $\Omega(m)$ time, this algorithm may have quadratic running time. It can be shown using Cheeger's Inequality \cite{FAN} that the cut this procedure outputs is of conductance at most $O(\sqrt{\gamma}).$ 
\paragraph{Spielman-Teng Approach.}
The algorithm of Spielman and Teng which proves Theorem~\ref{thm:st} is also spectral in nature and  uses, as the main subroutine, {\it local random walks} that run in time proportional to the volume of the output cut to find sparse cuts around vertices of the graphs. These local methods are based on  non-trivial random walks on the input graph and aggregation of the information obtained from these walks, all performed while maintaining  nearly-linear running time.
\paragraph{Our Approach.} 
We depart from the random-walk paradigm and first consider  a natural {{\SDP}} relaxation for the {\BS} problem, which \alg solves approximately using a primal-dual method.
Intuitively, \alg manages to maintain the approximation guarantee of the recursive spectral algorithm while running in nearly-linear time by considering a distribution over eigenvectors,  represented as a vector embedding of the vertices, rather than a single eigenvector, at each iteration. The sweep cut over the eigenvector is replaced by a sweep cut over the radius of the vectors in the embedding (see Figure~\ref{fig:hedging}).
Moreover, at any iteration, rather than removing the unbalanced cut found, \alg penalizes it by modifying the graph so that it is unlikely but still possible for it to turn up in future iterations. Hence, in both its cut-finding and cut-eliminating procedures, \alg tends to ``hedge its bets" more than the greedy recursive spectral method. This hedging, which ultimately allows \alg to achieve its faster running time, is implicit in the primal-dual framework of Arora and Kale~\cite{AK}.

%\begin{comment}
\begin{figure}[!h]
%\vspace{-6mm}
\begin{center}
\includegraphics[ clip=true, scale=.45]{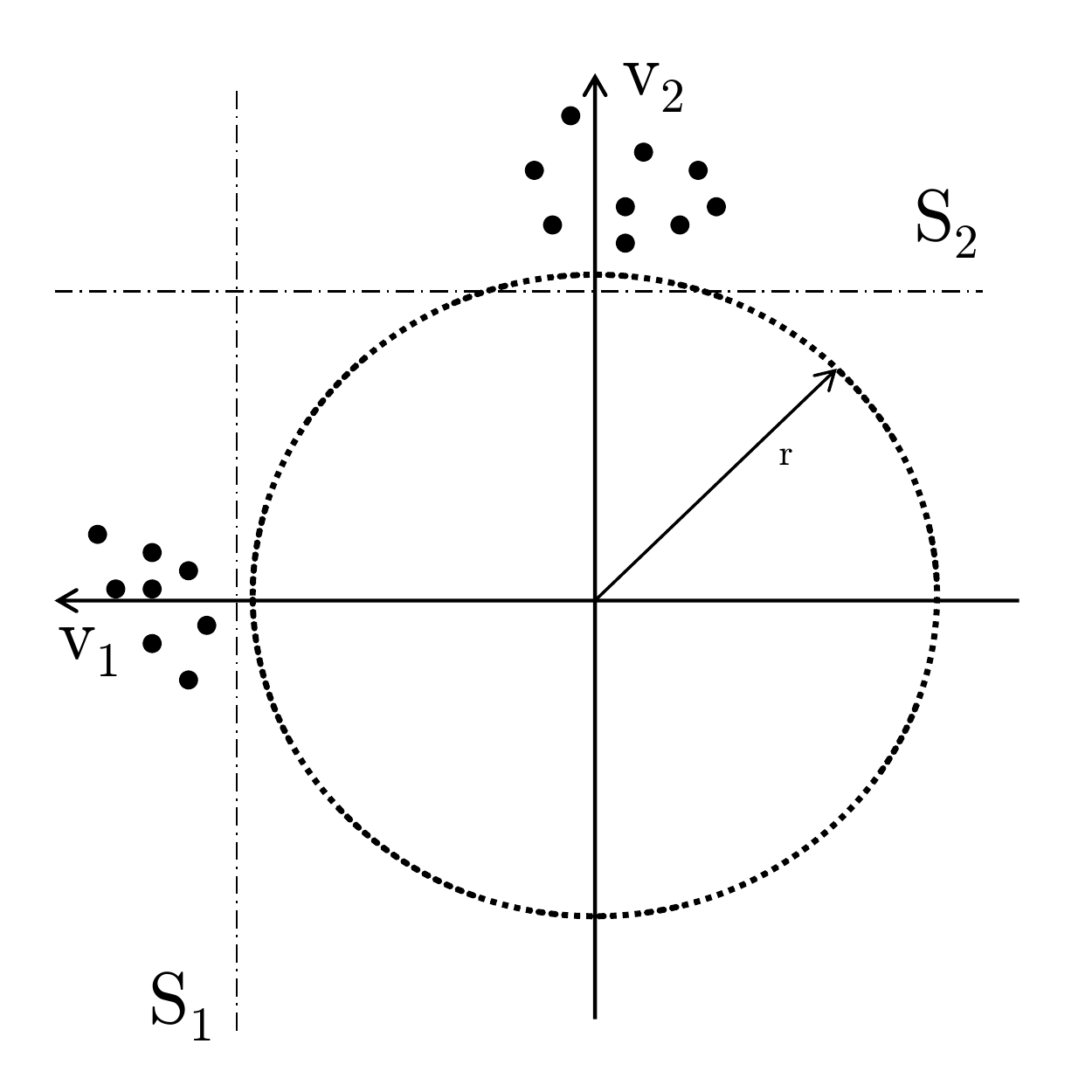}
\caption{Schematic representation of the speed-up introduced by \alg when the instance graph contains many unbalanced cuts of low conductance. Let $v_1$ and $v_2$ be the two slowest-mixing eigenvectors of $G.$ Assume that their minimum-conductance sweep cuts $S_1$ and $S_2$ are unbalanced cuts of conductance less than $\gamma.$ If we use the recursive algorithm of Kannan et al.~\cite{Kannan}, two iterations could be required to remove $S_1$ and $S_2$. However, \alg considers a multidimensional embedding containing contributions from multiple eigenvectors and performs a radial sweep cut. This allows $S_1$ and $S_2$ to be removed in a single iteration.}
\label{fig:hedging}
\end{center}
\vspace{-8mm}
\end{figure}
%\end{comment}

The {{\SDP}} relaxation appears in Figure \ref{fig:sdp-intro}.
We denote by $\mu:V\mapsto \mathbb{R}_{\geq 0}$ the distribution defined as $\mu_{i}\defeq \nfrac{d_{i}}{\vol(G)},$ and by $d_{i}$ the degree of the $i$-th vertex. Also, $v_{avg}\defeq \sum_{i}\mu_{i}v_{i}.$
Even though our algorithm uses the {{\SDP}}, at the core, it is spectral in nature, as it relies on the matrix-vector multiplication primitive. Hence, if one delves deeper, a random walk interpretation can be derived for our algorithm.
\begin{figure}[htb]

%\begin{minipage}[b]{0.5\linewidth}
\begin{alignat*}{4}
 \psdp(G, b, \gamma): & & \nfrac{1}{4} \cdot \E_{\{i,j\} \in E}\;\norm{v_i-v_j}_2^2 \; \leq& \; \gamma   \\
 & & \E_{j \sim \mu}\; \norm{v_j- v_\avg}_2^2 \;=& \;1 \\
    &\forall {i\in V} \;   &  \norm{v_i- v_\avg}_2^2 \;\leq& \; \frac{(1-b)}{b}
\end{alignat*}
%\end{minipage}
\label{fig:sdp-intro}
\caption{{\SDP} for $b$-\BS}
\end{figure}

%While it is easy to obtain from this SDP a certificate of the fact that the graph has no balanced cut of conductance less than $\gamma,$ it is surprising that we can also  obtain 
%the stronger certificate as in Theorem \ref{thm:main}. 
%For obtaining the guarantee on the running time we appeal to Arora and Kale's primal-dual framework to solve  SDPs.

\paragraph{The Primal-Dual Framework.}
For our {\SDP}, the method of Arora and Kale can be understood as a game between two players: an embedding player and an  oracle player. The embedding player, in every round of this game, gives a candidate vector embedding of the vertices of the instance graph to the oracle player. We show that, if we are lucky and  the embedding is feasible for the {\SDP} and, in addition, also has the property that for a large set $S,$ for every $i \in S,$ $\|v_{i}-v_{avg}\|_{2} \leq O(\nfrac{(1-b)}{b})$  (we call such an embedding roundable), then a projection of the vectors along a random direction followed by a  sweep cut  gives an $\Omega_{b}(1)$-balanced cut of conductance at most $O(\sqrt{\gamma}).$ The  difficult case is when the embedding given to the oracle player is not roundable. In this case, the oracle outputs a candidate dual solution along with a cut. The oracle obtains this  cut  by performing a radial sweep cut of the vectors given by the embedding player.  If at any point in this game the union of cuts output by the oracle becomes balanced, we output this union and stop. We show that such a cut is of conductance at most $O_b(\sqrt{\gamma}).$  If this union of cuts is not balanced, then the embedding player uses the dual solution output by the oracle to update the embedding.
Finally, the matrix-exponential update rule ensures that this game cannot keep on going  for more that $O(\nfrac{\log n}/{\gamma})$ rounds. Hence, if a balanced cut is not found after this many rounds, we certify that the graph does not contain any $b$-balanced cut of conductance less than $\gamma.$  
To achieve a nearly-linear running time, we  maintain only  a $\log n$-dimensional sketch of the embedding.  The guarantee on the running time then follows by noticing that,  in each iteration,  the most expensive computational step for each player is a logarithmic number of matrix-vector multiplications, which takes at most $\tilde{O}(m)$ time. 

The reason why our approach yields the desired correlation condition in Theorem \ref{thm:main} is that, if no balanced cut is found, every unbalanced cut of conductance lower than $\gamma$ will, at some iteration, have a lot of its vertices mapped to vectors of large radius. At that iteration, the cut output by the oracle player will have a large correlation with the target cut, which implies that the union of cuts output by the oracle player will also display such large correlation. This intuition is formalized in the proof of Theorem \ref{thm:main}.

%Intuitively, the reason that this SDP-based approach works while the standard spectral doesn't is because the vector embedding {\em sees} cuts through $\log n$ dimensions   while the standard spectral approach can only look at the second eigenvector. This allows our algorithm, through a suitable potential function, kill all un-balanced sparse cuts quickly in $O(\nfrac{\log n}{\gamma})$ iterations as opposed to the spectral method. 

%\Authornote{Lorenzo}{Nisheeth: can you write a sentence comparing the recursive spectral to our method and emphasizing how our oracle ensures we make quick progress towards balanced cuts? this is an advantage of sdp over eigenvector, which sees cuts only along one direction}

\paragraph{Our Contribution.}
The implementation of the oracle player, specifically dealing with the case when the embedding is not roundable,   is the main technical novelty of the paper. Studying the problem in the SDP-framework is the main conceptual novelty. 
The main advantage of using {\SDP}s to design a spectral algorithm seems to be that {\SDP} solutions provide a  simple representation for possibly complex random-walk objects. 
Furthermore, the benefits of using a carefully designed {\SDP} formulation can often  be reaped with little or no burden on the running time of the algorithm, thanks to the primal-dual framework of Arora and Kale~\cite{AK}.

\subsection{Rest of the Paper}
In Section~\ref{sec:notation}, we set the notation for the paper. In Section~\ref{sec:sdp},  we present our {\SDP} and its dual, and also define the notion of a roundable embedding. In Section~\ref{sec:pd},  we present the algorithm {\sc BalCut} and the separation oracle {\sc Oracle}, and reduce the task of  proving  Theorem \ref{thm:main} to proving statements about the {\sc Oracle}. Section~\ref{sec:mp} contains the proof of the main theorem about the {\sc Oracle} used in Section~\ref{sec:pd}.  For clarity of presentation, several proofs are omitted from the above sections and appear in the appendix.

\section{Algorithm Statement and Main Theorems}
%We start with some preliminaries and useful facts.
%\subsection{Preliminaries}

\subsection{Notation} \label{sec:notation}

\paragraph{Instance graph and edge volume.} We denote by $G=(V,E)$ the unweighted instance graph, where $\card{V} = n$ and $\card{E}=m.$ We let $d \in \R^V,$ be the degree vector of $G,$ i.e. $d_i$ is the degree of vertex $i.$ We mostly work with the edge measure $\mu$ over $V,$ defined as $\mu_i \defeq \mu(i) \defeq \nfrac{d_i}{2m}.$ For a subset $ S \subseteq V,$ we also define $\mu_S$ as the edge measure over $S$, i.e. $\mu_S(i) \defeq \nfrac{\mu(i)}{\mu(S)}.$

\paragraph{Special graphs}
For a subset $S \subseteq V,$ we denote by $K_S$ the complete graph over $S$ such that edge $\{i,j\}$ has weight $\mu_i \mu_j$ for $i,j \in S$ and $0$ otherwise. $K_V$ is the complete graph with weight $\mu_i \mu_j$ between every pair $i,j \in V.$
%For $i \in V,$ We denote by $S_i$ the star graph rooted at $i.$ $S_i$ has an edge $\{i,j\}$ of weight $\mu_j$ for all $j \in V.$ 

\paragraph{Graph matrices.} For an undirected graph $H=(V, E_H)$, let
$A(H)$ denote the adjacency matrix of $H$ and $D(H)$ the diagonal matrix of
degrees of $H$.
The (combinatorial) Laplacian of $H$ is defined as $L(H) \defeq D(H) - A(H)$.
Note that for all $x \in \R^{V}$, $x^T L(H) x = \sum_{\{i,j\} \in E_H} (x_i - x_j)^2$.
By $D$ and $L$, we denote $ D(G)$ and $L(G)$ respectively.

\paragraph*{Vector and matrix notation.}

% SDP NOTATION
For a symmetric matrix $M,$ we will use $M \succeq 0$ to denote that it is positive semi-definite and $M \succ 0$ to denote that it is positive definite.
The expression $A \succeq B$ is equivalent to $A - B \succeq 0$. For two matrices $A,B$ of equal dimensions, denote $A \bullet B \defeq \Tr(A^T B) = \sum_{ij} A_{ij}\cdot B_{ij}.$ 
%
%%% MULTIPLICATION TIME 
For a matrix $A$, we indicate by $t_A$ the time necessary to compute the matrix-vector multiplications $Au$ for any vector $u$. 
%
%%% NORM AND FROBENIUS NORM - DO WE NEED IT? YES IN PROOF OF APPROXIMATE COMPUTATION
%For a matrix $A,$ $\|A\|_2$ will denote its spectral norm and $\|A\|_F$ its Frobenius norm.

\paragraph{Embedding notation.} We will deal with vector embeddings of $G$, where each vertex $i \in V$ is mapped to a vector $v_i \in \R^d.$ For such an embedding $\{v_i\}_{i \in V},$ we denote by $v_\avg$ the mean vector, i.e. $v_\avg \defeq \sum_{i \in V} \mu_i v_i.$
Given a vector embedding of $\{v_i \in \R^d\}_{i \in V},$ recall that $X \succeq 0,$ is the Gram matrix of the embedding if $X_{ij} = v_i^T v_j.$ For any $X \in \R^{V \times V}, X\succeq 0,$ we call $\{v_i\}_{i \in V}$ the {\it embedding corresponding to $X$} if $X$ is the Gram matrix of $\{v_i\}_{i \in V}.$ 
For $i \in V,$ we denote by $R_i$ the matrix such that $R_i \bullet X = \norm{v_i - v_\avg}_2^2.$

\paragraph{Basic facts.} We will alternatively use vector and matrix notation to reason about the graph embeddings. 
The following are some simple conversions between vectors and matrix forms and some basic geometric facts which follow immediately from definitions.
\begin{fact}\label{fct:mean}
$ \E_{i \sim \mu}\norm{v_i - v_\avg}_2^2 = \nfrac{1}{2} \cdot \E_{\{i,j\} \sim \mu \times \mu}\norm{v_i - v_j}_2^2 = L(K_V) \bullet X.$
\end{fact}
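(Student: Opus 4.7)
The plan is to prove the two equalities separately, each by a short direct computation; the key ideas are the standard variance identity for the first and the explicit form of the Laplacian inner product for the second.

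For the first equality, I will treat $\{v_i\}_{i\in V}$ as a random vector when $i\sim\mu$ and invoke the textbook identity relating the variance to the expected squared pairwise distance. Concretely, I would expand
\[
\tfrac{1}{2}\E_{(i,j)\sim\mu\times\mu}\norm{v_i-v_j}_2^2 = \E_{i\sim\mu}\norm{v_i}_2^2 - \iprod{v_\avg,v_\avg},
\]
using the fact that $i$ and $j$ are independent and identically distributed so that $\E_{(i,j)} v_i^T v_j = v_\avg^T v_\avg$, together with $\E_{(i,j)}\norm{v_i}_2^2 = \E_{i}\norm{v_i}_2^2$. A parallel expansion of $\norm{v_i-v_\avg}_2^2 = \norm{v_i}_2^2 - 2 v_i^T v_\avg + \norm{v_\avg}_2^2$ and taking $\E_{i\sim\mu}$ yields exactly the same expression, establishing the first equality.

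For the second equality, I would just unpack the definition of $K_V$ and the inner product $\bullet$. Since the Laplacian of a weighted graph $H$ satisfies $L(H)\bullet X = \sum_{\{i,j\}} w_{ij}\norm{v_i-v_j}_2^2$ for any embedding $\{v_i\}$ with Gram matrix $X$ (this follows from $x^T L(H) x = \sum_{\{i,j\}\in E_H}(x_i-x_j)^2$ applied coordinatewise to the embedding and summing), and since in $K_V$ the weight of edge $\{i,j\}$ is $\mu_i\mu_j$, we obtain
\[
L(K_V)\bullet X = \sum_{\{i,j\}}\mu_i\mu_j\norm{v_i-v_j}_2^2 = \tfrac{1}{2}\sum_{i,j}\mu_i\mu_j\norm{v_i-v_j}_2^2,
\]
where the last step uses symmetry of $\mu_i\mu_j\norm{v_i-v_j}_2^2$ and vanishing of the diagonal. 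The right-hand side is by definition $\tfrac{1}{2}\E_{(i,j)\sim\mu\times\mu}\norm{v_i-v_j}_2^2$, giving the second equality.

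There is essentially no obstacle here; the only minor care needed is being consistent about whether sums over edges in the Laplacian formula range over ordered or unordered pairs, which accounts for the factor $\tfrac{1}{2}$. Both identities are routine and purely algebraic, so the proof should fit in a few lines.
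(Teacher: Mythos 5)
Your proof is correct, and it is the natural direct computation: the paper itself gives no proof, stating this fact (along with the other basic facts) as following immediately from the definitions, which is exactly what your expansion shows. Both the variance identity for the first equality and the unwinding of $L(K_V)\bullet X$ using the edge weights $\mu_i\mu_j$ are the intended routine algebra, and you correctly handle the factor of $\nfrac{1}{2}$ coming from ordered versus unordered pairs.
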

%
%\begin{fact}
%$ \E_{i \sim \mu}\norm{v_i - v_\avg}_2^2 = L(K_V) \bullet X.$
%\end{fact}
%
%\begin{fact} For $i \in V,$
%$ \E_{j \sim \mu}\norm{v_i - v_j}_2^2 = L(S_i) \bullet X.$
%\end{fact}
%
%\begin{fact} \label{fct:radius} For $i \in V,$
%$ \norm{v_i - v_\avg}_2^2 = \E_{j \sim \mu}\norm{v_i - v_j}_2^2 - %\E_{j \sim \mu}\norm{v_j - v_\avg}_2^2 = \left(L(S_i) - %L(K_V)\right) \bullet X.$
%\end{fact}
%
\begin{fact} \label{fct:star} For a subset $S \subseteq V,$ 
$\sum_{i \in S} \mu_i R_i \succeq \mu(S) L(K_V) - L(K_S).$
\end{fact}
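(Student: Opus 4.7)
My plan is to verify the PSD inequality by direct matrix computation, reducing both sides to a common form built from the vector $\mu \in \R^V$, its restriction $\mu_S$ (the vector whose $i$-th coordinate equals $\mu_i$ if $i \in S$ and $0$ otherwise), and the diagonal matrices $D = \mathrm{diag}(\mu)$ and $D_S = \mathrm{diag}(\mu_S)$.

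First, I would rewrite $R_i$ in the standard basis. Since $\|v_i - v_{\avg}\|_2^2 = (e_i - \mu)^\top X (e_i - \mu)$, we have the rank-one form $R_i = (e_i - \mu)(e_i - \mu)^\top$. Expanding and taking the $\mu_i$-weighted sum over $i \in S$ yields
\[ \sum_{i \in S} \mu_i R_i \;=\; D_S \,-\, \mu_S \mu^\top \,-\, \mu \mu_S^\top \,+\, \mu(S)\,\mu\mu^\top. \]

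Second, I would put the right-hand side into the same language. From $L(H) = D(H) - A(H)$ one obtains $L(K_V) = D - \mu\mu^\top$, and the analogous bookkeeping restricted to $S$ (degree $\mu_i(\mu(S) - \mu_i)$ at vertex $i \in S$, off-diagonal $-\mu_i\mu_j$ for $i,j \in S$) yields $L(K_S) = \mu(S)\,D_S - \mu_S \mu_S^\top$. Substituting these into $\mu(S)\,L(K_V) - L(K_S)$ produces an expression built from the same four block-matrices $D_S, \mu_S\mu^\top, \mu\mu_S^\top, \mu\mu^\top$ (together with $\mu_S\mu_S^\top$), up to scalar coefficients.

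Third, I would form the difference $\sum_{i \in S} \mu_i R_i - (\mu(S)L(K_V) - L(K_S))$ and collect like terms. I expect the residual to collapse to a manifestly PSD object — either a Laplacian of an auxiliary weighted subgraph, or a rank-one outer product of the form $\tfrac{1}{\mu(S)}(\mu_S - \mu(S)\mu)(\mu_S - \mu(S)\mu)^\top$, whose nonnegativity ultimately reduces to the Cauchy--Schwarz style bound $\mu(S) D_S \succeq \mu_S \mu_S^\top$ (equivalently $(\sum_{i \in S} \mu_i x_i)^2 \le \mu(S)\sum_{i \in S} \mu_i x_i^2$).

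The main obstacle I anticipate is bookkeeping rather than any deep step: the five matrices $D_S, \mu_S\mu^\top, \mu\mu_S^\top, \mu\mu^\top, \mu_S\mu_S^\top$ are easy to miscount when combining terms. As a cross-check, I would specialize to rank-one $X = xx^\top$, which corresponds to a one-dimensional embedding $v_i = x_i$; then by Fact~\ref{fct:mean} and the parallel-axis identity $\sum_{i \in S} \mu_i (x_i - \bar x)^2 = \sum_{i \in S} \mu_i (x_i - \bar x_S)^2 + \mu(S)(\bar x_S - \bar x)^2$, the inequality collapses to a scalar variance decomposition that can be compared term-by-term with $\mu(S)L(K_V)\bullet X - L(K_S)\bullet X$.
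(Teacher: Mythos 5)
Your building-block identities are all correct: $R_i=(e_i-\mu)(e_i-\mu)^{\top}$, and with $\mu_S\in\R^V$ the restriction of $\mu$ to $S$ (zero on $\bar S$) and $D_S=\mathrm{diag}(\mu_S)$, one indeed has $\sum_{i\in S}\mu_iR_i = D_S-\mu_S\mu^{\top}-\mu\mu_S^{\top}+\mu(S)\mu\mu^{\top}$, $L(K_V)=D-\mu\mu^{\top}$, and $L(K_S)=\mu(S)D_S-\mu_S\mu_S^{\top}$. But the residual you expect to be a manifestly PSD object is not: collecting terms gives
$\Delta \;=\; D_S - \mu(S)D_{\bar S} - \mu_S\mu^{\top} - \mu\mu_S^{\top} + 2\mu(S)\mu\mu^{\top} - \mu_S\mu_S^{\top},$
whose $(i,i)$ entry for $i\in\bar S$ equals $-\mu(S)\mu_i(1-2\mu_i)<0$ whenever $\mu_i<\nfrac12$. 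Your own rank-one sanity check exposes this: with three vertices of equal weight $\nfrac13$, $S=\{1,2\}$, and $x=(0,0,1)$, one computes $\sum_{i\in S}\mu_i(x_i-\bar x)^2 = \nfrac{2}{27}$ while $\mu(S)\cdot L(K_V)\bullet xx^{\top} - L(K_S)\bullet xx^{\top} = \nfrac23\cdot\nfrac29 - 0 = \nfrac{4}{27}$. So the inequality as printed is false, and no amount of bookkeeping will produce a PSD residual.

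What the paper actually \emph{uses} (proof of Theorem~\ref{thm:oracle}, Case~3, applied with $S$ replaced by $\bar R$) is the complemented form $\sum_{i\in S}\mu_iR_i \succeq \mu(\bar S)L(K_V)-L(K_{\bar S})$, and your expansion method proves it cleanly: substituting and using $\mu=\mu_S+\mu_{\bar S}$, $D=D_S+D_{\bar S}$, the residual is not a rank-one matrix but collapses \emph{exactly} to a Laplacian,
$\sum_{i\in S}\mu_iR_i - \mu(\bar S)L(K_V) + L(K_{\bar S}) \;=\; \mu(S)D_S - \mu_S\mu_S^{\top} \;=\; L(K_S) \;\succeq\; 0.$
(Equivalently, $\sum_{i\in S}\mu_iR_i = \mu(\bar S)L(K_V)-L(K_{\bar S})+L(K_S)$; taking $S=V$ recovers $\sum_{i\in V}\mu_iR_i=L(K_V)$, consistent with Fact~\ref{fct:mean}.) So the method is right; the gap is that the target inequality contains a typo swapping $S$ and $\bar S$ on the right-hand side, and the ``manifestly PSD'' object you should be hunting for is $L(K_S)$, not a rank-one outer product.
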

\begin{fact} \label{fct:subset}
For a subset $S \subseteq V,$ $ \E_{\{i,j\} \sim \mu_S \times \mu_S}\norm{v_i - v_j}_2^2 = 2 \cdot \nfrac{1}{\mu(S)^2} \cdot L(K_S) \bullet X.$
\end{fact}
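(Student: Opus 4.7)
My plan is to simply unfold both sides of the identity and check that they agree after careful bookkeeping on unordered versus ordered pairs. There is no conceptual difficulty here; the only things to track are (i) the distinction between the edge weights $\mu_i\mu_j$ defining $K_S$ and the normalized weights $\mu_S(i)\mu_S(j)$ associated with the product measure $\mu_S \times \mu_S$, and (ii) the factor of $2$ coming from ordering a symmetric sum.

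First I would use the definition $\mu_S(i) = \mu_i/\mu(S)$ for $i \in S$ to rewrite the left-hand side as an ordered double sum:
$$\E_{\{i,j\} \sim \mu_S \times \mu_S}\norm{v_i - v_j}_2^2 = \sum_{i,j \in S} \mu_S(i)\mu_S(j)\,\norm{v_i - v_j}_2^2 = \frac{1}{\mu(S)^2}\sum_{i,j \in S} \mu_i\mu_j\,\norm{v_i - v_j}_2^2.$$

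Next I would unfold $L(K_S)\bullet X$. Since $X$ is the Gram matrix of $\{v_i\}_{i \in V}$, the standard Laplacian quadratic-form identity $x^T L(H) x = \sum_{\{i,j\}\in E(H)} w_{ij}(x_i-x_j)^2$ extended coordinatewise to the embedding gives, for $H = K_S$ with edge weights $w_{ij} = \mu_i \mu_j$,
$$L(K_S) \bullet X = \sum_{\{i,j\}\subseteq S,\, i\neq j} \mu_i\mu_j\,\norm{v_i - v_j}_2^2 = \tfrac{1}{2}\sum_{i,j \in S} \mu_i\mu_j\,\norm{v_i - v_j}_2^2,$$
the last step replacing the sum over unordered pairs by the sum over ordered pairs (the diagonal $i=j$ contributes zero). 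Combining this with the expression above immediately yields
$$\E_{\{i,j\} \sim \mu_S \times \mu_S}\norm{v_i - v_j}_2^2 \;=\; \frac{2}{\mu(S)^2}\,L(K_S)\bullet X,$$
as claimed.

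The ``main obstacle'' is therefore only notational: keeping straight that the factor $\mu(S)^{-2}$ is exactly the renormalization price for $K_S$ being defined with the unnormalized weights $\mu_i\mu_j$, and that the factor $2$ reflects the passage between unordered and ordered pairs, in the same way as in Fact~\ref{fct:mean}. Indeed, specializing the proof to $S = V$ (so $\mu(S)=1$ and $\mu_S = \mu$) recovers the second equality of Fact~\ref{fct:mean} as a sanity check.
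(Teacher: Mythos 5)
Your proof is correct and is precisely the straightforward expansion the paper has in mind: the paper lists Fact~\ref{fct:subset} among the "basic facts" that "follow immediately from definitions" and gives no separate proof, and your unfolding of both sides (tracking the $\mu(S)^{-2}$ normalization and the ordered-vs-unordered factor of $2$) is exactly that verification. The sanity check against Fact~\ref{fct:mean} at $S=V$ is a nice touch.
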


\paragraph{Modified matrix exponential update.}
Let $\cS_D$ e the subspace of $\R^V$ orthogonal to $\hat{v} \defeq \nfrac{1}{\sqrt{2m}} \cdot \Deg 1$ and let $\cI$ be the identity over $\cS_D,$ i.e. $\cI \defeq I - \nfrac{1}{2m} \cdot \Deg 1 1^T \Deg.$
 For a positive $\e$ and a symmetric matrix $M \in \R^{V \times V},$ we define 
$$
U_{\e}(A) \defeq 2m \cdot \frac{\Degin e^{-(2m \cdot \e) \cdot \Degin M \Degin } \Degin}{\cI \bullet e^{-(2m \cdot \e) \Degin M \Degin}}.
$$
The following fact about $\cI$ will also be needed:
\begin{fact}\label{fct:identity}
$ \cI = 2m \cdot \Degin L(K_V) \Degin.$
\end{fact}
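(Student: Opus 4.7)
The plan is a direct calculation: expand $L(K_V)$ in terms of $D$ and $d$, conjugate by $\Degin$, and check that the result matches the definition $\cI = I - \tfrac{1}{2m}\Deg 1 1^T \Deg$.

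First I would write down $L(K_V)$ explicitly. Since $K_V$ has edge weights $\mu_i\mu_j = \tfrac{d_i d_j}{(2m)^2}$, its weighted adjacency matrix is $A(K_V) = \mu\mu^T$ (up to the diagonal, which cancels against $D(K_V)$), and its weighted degree at vertex $i$ equals $\sum_j \mu_i\mu_j = \mu_i$. Hence
\[
L(K_V) \;=\; \operatorname{diag}(\mu) - \mu\mu^T \;=\; \tfrac{1}{2m}\,D \;-\; \tfrac{1}{(2m)^2}\,d d^T \mper
\]

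Next I would conjugate by $\Degin$ on both sides and multiply by $2m$. The first term gives $2m\cdot\tfrac{1}{2m}\Degin D \Degin = I$. For the second term, I would use the identity $\Degin d = \Deg 1$ (componentwise, $d_i^{-1/2}\cdot d_i = d_i^{1/2}$), so $\Degin d d^T \Degin = \Deg 1 1^T \Deg$, yielding
\[
2m \cdot \Degin L(K_V) \Degin \;=\; I - \tfrac{1}{2m}\,\Deg 1 1^T \Deg \;=\; \cI \mper
\]

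There is no real obstacle here; the only thing to be careful about is the conversion between the $\mu$-weighted form of $L(K_V)$ and the degree matrix $D = 2m \cdot \operatorname{diag}(\mu)$, and the identity $\Degin d = \Deg 1$ that bridges the rank-one correction to the definition of $\cI$. As a sanity check one can also note that $\tfrac{1}{2m}\Deg 1 1^T\Deg$ is precisely the orthogonal projector onto the span of $\hat v = \tfrac{1}{\sqrt{2m}}\Deg 1$ (since $\|\Deg 1\|_2^2 = 2m$), so both sides equal the orthogonal projector onto $\cS_D$, consistent with the fact that the kernel of $\Degin L(K_V)\Degin$ is exactly $\operatorname{span}(\Deg 1)$.
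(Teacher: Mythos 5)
Your proof is correct, and it is the direct computation the paper evidently has in mind (the paper states this fact without proof as an immediate consequence of the definitions). Expanding $L(K_V)=\operatorname{diag}(\mu)-\mu\mu^T=\tfrac{1}{2m}D-\tfrac{1}{(2m)^2}dd^T$, conjugating by $\Degin$, and using $\Degin d=\Deg 1$ and $\Degin D\Degin=I$ gives exactly $\cI=I-\tfrac{1}{2m}\Deg 1 1^T\Deg$; the sanity-check observation that both sides are the orthogonal projector onto $\cS_D$ is a nice confirmation.
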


\subsection{\SDP Formulation} \label{sec:sdp}

We consider an $\SDP$ relaxation to the decision problem of determining whether the instance graph $G$ has a $b$-balanced cut of conductance at most $\gamma.$
The $\SDP$ feasibility program $\psdp(G,b,\gamma)$ appears in Figure $\ref{fig:sdp},$ where we also rewrite the program in matrix notation, using Fact \ref{fct:mean} and the definition of $R_i.$
\begin{figure}[htb]

\begin{minipage}[b]{0.5\linewidth}
\begin{alignat*}{4}
 \psdp(G, b, \gamma): & & \E_{\{i,j\} \in E}\;\norm{v_i-v_j}_2^2 \; \leq& \; 4\gamma   \\
 & & \E_{j \sim \mu}\; \norm{v_j- v_\avg}_2^2 \;=& \;1 \\
    &\forall i\in V \;   &  \norm{v_i- v_\avg}_2^2 \;\leq& \; \frac{1-b}{b}
\end{alignat*}
\end{minipage}
\begin{minipage}[b]{0.5\linewidth}
\begin{alignat*}{4}
 \psdp(G,b,\gamma):  & &  \frac{1}{m} \cdot L \bullet X \; \leq& \; 4\gamma \\
  & &  L(K_V) \bullet X \; =& \; 1 \\
     \forall i\in V & \quad  &  R_i \bullet X \; \leq& \; \frac{1-b}{b}
%-2mm}
\end{alignat*}
\end{minipage}
\label{fig:sdp}
\caption{$\SDP$ for $b$-\BS}
\end{figure}
$\psdp$ can be seen as a scaled version of the balanced-cut $\SDP$ of \cite{ARV}, modified by replacing $v_\avg$ for the origin and removing the triangle-inequality constraints. The first change makes our $\psdp$ invariant under translation of the embeddings and makes the connection to spectral methods more explicit. Indeed, the first two constraints of \psdp now exactly correspond to the standard eigenvector problem, with the addition of the $R_i$ constraint ideally forcing all entries in the eigenvector not to be too far from the mean, just as it would be the case if the eigenvector exactly corresponded to a balanced cut. 
The removal of the triangle-inequality constraints causes $\psdp$ to only deal with the spectral structure of $L$ and not to have a flow component.
For the rest of the paper, denote by $\Delta$ the set $\{X \in \R^{V \times V}, X \succeq 0 : L(K_V) \bullet X = 1\}.$

\noindent
The following simple lemma establishes that $\psdp$ is indeed a relaxation for the integral decision question  and is proved in Section \ref{app:basic}. 
\begin{lemma}[SDP is a Relaxation]\label{lem:relax} If there exists a $b$-balanced cut $S$ with $\phi(S) \leq \gamma,$ then $\psdp(G,b,\gamma)$ has a feasible solution.
\end{lemma}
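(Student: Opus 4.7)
The plan is to exhibit an explicit one-dimensional embedding built from the indicator of the hypothesized $b$-balanced low-conductance cut $S$, and verify each of the three SDP constraints in turn.

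First I would assign real scalars (which trivially give a PSD Gram matrix) as follows:
\[
 v_i \;\defeq\; \sqrt{\mu(\bar S)/\mu(S)} \quad\text{if } i\in S, \qquad v_i \;\defeq\; -\sqrt{\mu(S)/\mu(\bar S)} \quad\text{if } i\in\bar S.
\]
The choice is made so that $v_{\avg}=\sum_i \mu_i v_i = \sqrt{\mu(S)\mu(\bar S)} - \sqrt{\mu(S)\mu(\bar S)}=0$, which makes all subsequent computations clean. In particular $\norm{v_i-v_{\avg}}_2^2$ equals $\mu(\bar S)/\mu(S)$ for $i\in S$ and $\mu(S)/\mu(\bar S)$ for $i\in\bar S$.

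Next I would verify the three constraints of $\psdp(G,b,\gamma)$ one by one. For the spherical (mean) constraint, a direct computation gives
\[
 \E_{j\sim\mu}\norm{v_j-v_{\avg}}_2^2 \;=\; \mu(S)\cdot\tfrac{\mu(\bar S)}{\mu(S)}+\mu(\bar S)\cdot\tfrac{\mu(S)}{\mu(\bar S)}\;=\;1.
\]
For the per-vertex radius constraint, $b$-balance yields $\mu(S),\mu(\bar S)\in[b,1-b]$, so each of the two possible values of $\norm{v_i-v_{\avg}}_2^2$ is at most $(1-b)/b$, giving $R_i\bdot X\leq(1-b)/b$ for every $i$. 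Finally, for the conductance constraint, every edge crossing the cut contributes $(\sqrt{\mu(\bar S)/\mu(S)}+\sqrt{\mu(S)/\mu(\bar S)})^2 = 1/(\mu(S)\mu(\bar S))$, and every non-crossing edge contributes $0$, so
\[
 \tfrac{1}{m}L\bdot X \;=\; \tfrac{|E(S,\bar S)|}{m}\cdot\tfrac{1}{\mu(S)\mu(\bar S)} \;=\; \tfrac{2\phi(S)}{\max(\mu(S),\mu(\bar S))} \;\leq\; 4\phi(S)\;\leq\;4\gamma,
\]
using $\vol(S)=2m\mu(S)$ and the definition of conductance in the middle equality, and $\max(\mu(S),\mu(\bar S))\geq 1/2$ at the end.

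There is essentially no obstacle here: the lemma is a sanity check that the SDP really relaxes the decision problem, and the only mild subtlety is normalizing the two scalar values so that simultaneously $v_{\avg}=0$ and $\E_{j\sim\mu}\norm{v_j}_2^2=1$; once this is done, the $b$-balance assumption handles the $R_i$ bound and the conductance assumption handles the Laplacian bound, with the factor of $4$ coming from the worst case $\mu(S)=\mu(\bar S)=1/2$.
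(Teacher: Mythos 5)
Your proposal is correct and follows essentially the same approach as the paper: the identical one-dimensional cut embedding scaled so that $v_{\avg}=0$ and the variance is $1$, with the three constraints verified directly (the paper just additionally assumes WLOG $\mu(S)\le 1/2$ to slightly shorten the radius-bound check).
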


\noindent
\alg will use the primal-dual approach of \cite{AK} to determine the feasibility of $\psdp(G,b, \gamma).$ When \psdp is infeasible, \alg will output a solution to  the dual $\dsdp(G,b,\gamma),$ shown in Figure \ref{fig:dspd}. 

\begin{figure}[htb]
\begin{alignat*}{4}
\dsdp(G,b,\gamma):\quad &  \alpha - \frac{1-b}{b} \sum_{i \in V} \beta_i > 4\gamma\\
 &  \frac{1}{2m} \cdot L + \sum_{i \in V} \beta_i R_i-\alpha L(K_V) \succeq 0\\
 & \alpha \in \R, \; \; \beta \geq 0 
%-2mm}
\end{alignat*} \caption{$\dsdp(G,b,\gamma)$ feasibility problem}
\label{fig:dspd}
\end{figure} 
\noindent
In the rest of the paper, we are going to use the following shorthands for the dual constraints
$$
V(\alpha, \beta) \defeq  \alpha - \frac{1-b}{b} \sum_{i \in V} \beta_i  , \;
M(\alpha, \beta) \defeq \frac{L}{2m}  + \sum_{i \in V} \beta_i R_i -\alpha L(K_V).
$$
Notice that $V(\alpha, \beta)$ is a scalar, while $M(\alpha, \beta)$ is a matrix in $\mathbb{R}^{V \times V}.$
Given $X \succeq 0,$ a choice of $(\alpha, \beta)$ such that $V(\alpha, \beta) > 4\gamma$ and $M(\alpha, \beta) \bullet X \geq 0$ corresponds to a hyperplane separating $X$ from the feasible region of $\psdp(G,b,\gamma)$ and constitutes a certificate that $X$ is not feasible. 

\noindent
Ideally, \alg would produce a feasible solution to \psdp and then round it to a balanced cut. However, as discussed in \cite{AK}, it often suffices to find a solution ``close'' to feasible for the rounding procedure to apply. In the case of \psdp, the concept of ``closeness'' is captured by the notion of {\it roundable} solution.
\begin{Definition}[Roundable Embedding]\label{def:roundable}
 Given an embedding $\{v_i\}_{i \in V},$ let $R = \{i \in V : \norm{v_i - v_\avg}_2^2 \leq 32 \cdot \nfrac{(1-b)}{b}\}.$ We say that $\{v_i\}_{i \in V}$ is a {\it roundable} solution to $\psdp(G, b, \gamma)$ if:
\begin{itemize}
\item $\E_{\{i,j\} \in E}\;\norm{v_i-v_j}_2^2 \; \leq \; 2\gamma $,
\item $\E_{j \sim \mu}\; \norm{v_j- v_\avg}_2^2 \;= \;1 $,
\item $  \E_{\{i,j\} \sim \mu_R \times \mu_R }\norm{v_i - v_j}_2^2 \geq \nfrac{1}{64}.$
\end{itemize} 
\end{Definition}

\noindent
A roundable embedding can be converted into a balanced cut of the conductance required by Theorem \ref{thm:main} by using a standard projection rounding, which is a simple extension of an argument already appearing in \cite{ARV} and \cite{AK}. The rounding procedure {\sc ProjRound} is described precisely in Section \ref{app:round}, where the following theorem is proved.
\begin{theorem}[Rounding Roundable Embeddings]\label{thm:stdround}
If $\{v_i \in \R^d \}_{i \in V}$ is a roundable solution to $\psdp(G,b,\gamma)$, then {\sc ProjRound}$(\{v_i\}_{i \in V}, b)$ produces a $\Omega_b(1)$- balanced cut of conductance $O_b\left(\sqrt{\gamma}\right)$ with high probability in time $\tilde{O}(n d + m).$
\end{theorem}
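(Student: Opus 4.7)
The plan is to instantiate {\sc ProjRound} as the standard random-projection / sweep-cut procedure and then adapt the rounding analyses of Arora--Rao--Vazirani and Arora--Kale to the roundable setting. Specifically, {\sc ProjRound} samples a random Gaussian direction $g \in \R^d,$ forms the one-dimensional values $x_i \defeq g^T (v_i - v_\avg),$ sorts the vertices by $x_i,$ and returns the minimum-conductance sweep cut among those whose two sides each carry $\mu$-mass $\Omega_b(1).$ Repeating with $O(\log n)$ independent Gaussians and retaining the best cut boosts the success probability to $1 - \nfrac{1}{\poly(n)}.$

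The first step is to transport the three defining properties of a roundable embedding to the projection. Rotational invariance gives $\E_g(x_i-x_j)^2 = \norm{v_i-v_j}_2^2$ for every pair $i,j,$ so with constant probability over $g$ the following events hold simultaneously: (i) Markov applied to $\E_{\{i,j\}\in E}(x_i-x_j)^2$ yields $\E_{\{i,j\}\in E}(x_i-x_j)^2 = O(\gamma);$ (ii) the Paley--Zygmund lower tail for the Gaussian quadratic form $\sum_{i,j \in R}\mu_i\mu_j(x_i-x_j)^2$ forces $\Var_{\mu_R}(x) = \Omega(1);$ (iii) each $x_i$ with $i \in R$ is a scalar Gaussian of variance at most $32\nfrac{(1-b)}{b},$ so a Gaussian-tail union bound makes $\max_{i \in R}|x_i| = O(\sqrt{\nfrac{(1-b)}{b}\cdot \log n})$ with probability $1 - \nfrac{1}{\poly(n)}.$ Markov applied to the $\psdp$ variance constraint also yields $\mu(R) \geq 1 - \nfrac{b}{16}.$

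The second step exhibits a balanced threshold and converts it into a low-conductance cut. Since $\mu_R$ has $\Omega(1)$-variance supported on a range of length $O(\sqrt{\nfrac{(1-b)}{b}\cdot \log n}),$ there exists a threshold $t$ with both $\mu_R(\{x_i \leq t\})$ and $\mu_R(\{x_i > t\})$ in $\Omega_b(1);$ otherwise the $R$-mass would concentrate near a single value, contradicting the constant variance. Combined with $\mu(R) \geq 1 - \nfrac{b}{16},$ both sides of the sweep cut at $t$ then carry $\mu$-mass $\Omega_b(1).$ To get the conductance bound, I would follow the standard Cheeger-style argument applied to $y = x - t$ after the positive/negative decomposition $y = y^+ - y^-$: the pointwise inequality $(y_i^+ - y_j^+)^2 + (y_i^- - y_j^-)^2 \leq (y_i-y_j)^2$ keeps the edge numerator at $O(\gamma m),$ while the balance and spread imply $\sum_i d_i (y_i^+)^2 + \sum_i d_i (y_i^-)^2 = \Omega_b(m).$ Hence one of the two sides has Rayleigh quotient $O_b(\gamma),$ and Cheeger's sweep lemma on that side produces a sweep cut of conductance $O_b(\sqrt{\gamma}).$

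The main obstacle is reconciling Cheeger's sweep with the balance requirement: the conductance-minimizing threshold from the usual sweep may drift outside the $\Omega_b(1)$-balanced range, so the analysis must verify that restricting attention to balanced thresholds costs only a $b$-dependent constant in the Rayleigh quotient. This is where the bounded range and constant variance of the projections on $R$ are essential, since they prevent the bulk of $\sum_i d_i (y_i^\pm)^2$ from escaping any particular balanced window. The running time is dominated by the $O(\log n)$ projections at total cost $O(nd),$ together with $O(m + n \log n)$ per sort-and-sweep pass, yielding the claimed $\tilde{O}(nd + m)$ bound.
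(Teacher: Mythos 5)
The central gap is in your balance argument. You establish two facts about the projections restricted to $R$: constant variance (by Paley--Zygmund), and a worst-case range bound $\max_{i\in R}|x_i| = O\bigl(\sqrt{\nfrac{(1-b)}{b}\cdot\log n}\bigr)$ (by a Gaussian union bound). You then claim these force a threshold $t$ with both $\mu_R(\{x_i\le t\})$ and $\mu_R(\{x_i>t\})$ in $\Omega_b(1)$, ``otherwise the $R$-mass would concentrate near a single value, contradicting the constant variance.'' This implication is false when the range is $\Theta(\sqrt{\log n})$. For instance, a configuration placing $\mu_R$-mass $1-\nfrac{1}{\log n}$ at $0$ and mass $\nfrac{1}{\log n}$ at $\sqrt{\log n}$ has variance $\Theta(1)$ and range $\Theta(\sqrt{\log n})$, yet every threshold is $\Omega(1-\nfrac{1}{\log n})$-lopsided. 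Quantitatively, the argument you sketch only rules out thresholds that are $o(\nfrac{1}{\log n})$-balanced, which is far from the required $\Omega_b(1)$. To get constant balance you need a constant-scale anti-concentration statement, and the high-probability range bound throws that away.

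What the paper's Lemma~\ref{lem:projection} actually proves is strictly stronger than ``a balanced threshold exists'': it exhibits indices $l\le r$ such that both $\{1,\ldots,l\}$ and $\{r,\ldots,n\}$ carry $\Omega_b(1)$ volume \emph{and} $x_l - x_r = \Omega_b(1)$. The route is pairwise rather than range-based: because every $i,j\in R$ satisfies $\norm{v_i-v_j}\le 2\sigma$ with $\sigma=O_b(1)$, roundability ($\E_{\mu_R\times\mu_R}\norm{v_i-v_j}^2 \ge \nfrac{1}{64}$) gives a \emph{first}-moment lower bound $\E_{\mu_R\times\mu_R}\norm{v_i-v_j} = \Omega_b(1)$; a reverse Markov argument then yields a constant fraction of pairs at constant distance, which survives projection with constant probability, and finally a constant fraction of pairs $(i,k)$ with $|x_i-x_k|\ge 2\delta$ forces a constant $\mu$-mass at distance $\ge\delta$ from the median. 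This uses only $O_b(1)$ scales throughout, never $\sqrt{\log n}$, which is exactly what the Cheeger step needs. Note also that the spread $x_l-x_r = \Omega_b(1)$ between the \emph{balanced endpoints} is essential downstream: the paper lower-bounds the $\ell_1$-Cheeger sum $\E_{\{i,j\}\in E}|y_i-y_j|$ (with $y_i = \sgn(x_i)\,x_i^2$) by $\Omega_b(1)\cdot\phi\cdot(y_l-y_r)$, and $(y_l-y_r)\ge\tfrac12(x_l-x_r)^2$ supplies the $\Omega_b(1)$ factor. Your proposed positive/negative decomposition gives a Rayleigh-quotient bound on one side, but as you yourself flag, the conductance-minimizing sweep on that side may fall outside the balanced window; you would still need precisely the spread condition above (or an equivalent) to pin the low-conductance sweep cut inside the $\Omega_b(1)$-balanced range, and that piece is missing.
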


\subsection{Primal-Dual Framework} \label{sec:pd}

\begin{figure*}[htb]
  	\begin{tabularx}{\textwidth}{|X|}
    \hline
	 \vspace{1mm}

  {\bf \textsc{Input:}} An instance graph $G=(V,E),$ a balance value $b \in (0, \nfrac{1}{2}]$ such that $b = \Omega(1),$ a conductance value $\gamma \in (0,1).$
	\vspace{1mm}

	%%% MAIN LOOP
	Let $\e = \nfrac{1}{130} .$ 
	For $t = 1, 2, \ldots, T = O\left(\frac{\log n}{\gamma}\right):
$
	\begin{itemize}
	\item Compute the embedding $\{\tilde{v}^{(t)}_i\}_{i \in V}$ corresponding to $\tilde{X}^{(t)} = \tilde{U}_\e\left(\sum_{j=1}^{t-1} P^{(j)}\right).$ If $t=1,$ $\tilde{X}^{(1)} = \tilde{U}_\e\left(0\right) = \nfrac{2m}{n-1} \cdot D^{-1}.$
   \item Execute {\sc Oracle}$\left(G, b, \gamma, \{\tilde{v}^{(t)}_i\}_{i \in V}\right).$
	
	\item If {\sc Oracle}\xspace finds that $\{\tilde{v}^{(t)}_i\}_{i \in V}$ is roundable, run {\sc ProjRound}$\left(G,b, \{\tilde{v}^{(t)}_i\}_{i \in V}\right),$ output the resulting cut and terminate. 
	\item Otherwise, {\sc Oracle}\xspace outputs coefficients $\left(\alpha^{(t)}, \beta^{(t)}\right)$ and cut $B^{(t)}.$ 
	
	\item Let $C^{(t)} \defeq \bigcup_{i=1}^t B^{(i)}.$ If $C^{(t)}$ is $\nfrac{b}{4}$-balanced, output $C^{(t)}$ and terminate.

	\item Otherwise, let $ P^{(t)} \defeq -\nfrac{\e}{6} \cdot \left(M\left(\alpha^{(t)}, \beta^{(t)}\right) + \gamma L(K_V)\right)$ and proceed to the next iteration.
  \end{itemize}

Output $S = \bigcup_{t=1}^T B^{(t)}.$ Also output  $\bar{\alpha} =\nfrac{1}{T} \sum_{t=1}^T  \alpha^{(t)}$ and $\bar{\beta} = \nfrac{1}{T} \sum_{t=1}^T \beta^{(t)}.$ 
\\
\\
\hline 
\end{tabularx}
  \caption{The \alg Algorithm}
  \label{fig:algorithm}
\end{figure*}

\paragraph{Separation Oracle.} The problem of checking the feasibility of a {\SDP} can be reduced to that of, given a candidate solution $X,$ to check whether it is close to feasible and, if not, provide a certificate of infeasibility in the form of a hyperplane separating $X$ from the feasible set. The algorithm performing this computation is known as a separation oracle. 
Arora and Kale show that the original feasiblity problem can be solved very efficiently if there exists a separation oracle obeying a number of conditions. We introduce the concept of {\it good} separation oracle to capture these conditions for the program $\psdp(G,\beta, \gamma).$

\begin{Definition}[Good Separation Oracle] An algorithm is a {\it good} separation oracle if, on input some representation of $X,$ the algorithm either finds $X$ to be a  roundable solution to $\psdp(G, b, \gamma)$ or outputs coefficents $\alpha, \beta$ such that $V(\alpha, \beta) \geq \nfrac{3}{4} \cdot \gamma,$  $M(\alpha, \beta) \bullet X \geq \nfrac{1}{64} \cdot \gamma$ and $ -\gamma L(K_V) \preceq  M(\alpha, \beta) \preceq 3 L(K_V).$
\end{Definition}

%\Authormarginnote{Lorenzo}{maintain parts of comment to explain meaning of good separation oracle. mention importance of width condition}
\begin{comment}
More formally \Authormarginnote{Lorenzo}{and specifically to our case, must emphasize otherwise confusing, must say slight modification, these are our defintions}, let $F_\gamma$ be the set of feasible solutions to $\sdp_b(G).$  A $\delta$-separation oracle is an algorithm that, given $X \in \Delta$ as input, checks wether $X$ is "close" to $F_\gamma.$ In our case, $X$ is considered "close" to $F_\gamma$ if it can be rounded w.h.p. to produce a $\Omega_b(1)$-balanced cut of conductance $O_b(\sqrt{\gamma}).$
If $X$ is not "close" to $F_\gamma,$ a $\delta$-separation oracle exhibits a separating hyperplane by outputing $\alpha, \beta$ such that $V(\alpha, \beta) \geq \gamma$ and $M(\alpha, \beta) \bullet X \geq \delta.$
A separation oracle has $(\ell, \rho)$ width if, in the case that $X$ is not "close" to $F_\gamma,$ the output coefficents $\alpha, \beta$ obey $\ell \cdot L(K_V) \leq M(\alpha, \beta) \leq \rho \cdot L(K_V).$
\end{comment}

\paragraph{Algorithmic Scheme.} 
We adapt the techniques of \cite{AK} to our setting, where we require feasible solutions to be in $\Delta$ rather than having trace equal to 1.  The argument is a simple modification of the anaylsis of \cite{AK} and in \cite{Steurer}.
The algorithmic strategy of \cite{AK} is to produce a sequence of candidate primal solutions $X^{(1)}, \ldots, X^{(T)}$ iteratively, such that $X^{(t)}  \in \Delta$ for all $t.$ 

Our starting point $X^{(1)}$ will be the solution $\nfrac{2m}{n-1} \cdot D^{-1}.$ 
At every iteration, a {\it good} separation oracle {\sc Oracle}\xspace will take $X^{(t)}$ and either guarantee that $X^{(t)}$ is roundable or output coefficents $\alpha^{(t)}, \beta^{(t)}$ certifying the infeasiblity of $X^{(t)}.$ The algorithm makes use of the information contained in $\alpha^{(t)}, \beta^{(t)}$ by updating the next candidate solution as follows:
\begin{align} \label{eqn:update}
P^{(t)} \defeq -\nfrac{\e}{6} \cdot \left(M(\alpha^{(t)}, \beta^{(t)}) + \gamma L(K_V)\right) \\ 
 X^{(t+1)} \defeq U_\e\left( \sum_{i=1}^t P^{(i)} \right) \notag
\end{align}
where $\e > 0$ is a parameter of the algorithm. The following is immediate.
\begin{lemma}\label{lem:delta} For all $t > 0,$ $X^{(t)} \in \Delta.$
\end{lemma}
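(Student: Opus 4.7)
The claim is a direct verification of the two conditions defining $\Delta$: (i) $X^{(t)} \succeq 0$ and (ii) $L(K_V) \bullet X^{(t)} = 1$, where $X^{(t)} = U_\e(A^{(t-1)})$ for $A^{(t-1)} \defeq \sum_{i=1}^{t-1} P^{(i)}$ (and $A^{(0)}=0$). The entire argument rests on Fact~\ref{fct:identity}, which supplies the bridge between the normalizer $\cI$ appearing in the denominator of $U_\e$ and the matrix $L(K_V)$ appearing in the normalization constraint.

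First I would handle positive semidefiniteness and well-definedness. Since $\Degin A^{(t-1)} \Degin$ is symmetric, the matrix $E \defeq e^{-(2m\e)\,\Degin A^{(t-1)} \Degin}$ is positive definite. Conjugating by the positive diagonal $\Degin$ preserves PSD-ness, so the numerator $\Degin E \Degin \succeq 0$. For the denominator, $\cI$ is the orthogonal projector onto $\cS_D$, hence $\cI \succeq 0$ with rank $n-1$, and the inner product of the PSD matrix $\cI$ with the strictly positive definite matrix $E$ is strictly positive. Thus $U_\e(A^{(t-1)})$ is well-defined and PSD.

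Next I would verify the normalization by a short trace calculation. Using Fact~\ref{fct:identity}, $L(K_V) = \tfrac{1}{2m} \Deg\, \cI\, \Deg$, so
\begin{align*}
L(K_V) \bullet X^{(t)}
&= \Tr\!\left(L(K_V)\cdot 2m \cdot \tfrac{\Degin E \Degin}{\cI \bullet E}\right)\\
&= \tfrac{2m}{\cI \bullet E}\,\Tr\!\left(\tfrac{1}{2m}\Deg\,\cI\,\Deg\,\Degin E \Degin\right)\\
&= \tfrac{1}{\cI \bullet E}\,\Tr(\cI\, E) \;=\; 1,
\end{align*}
where the third equality uses the cyclic property of trace and $\Deg \Degin = I$. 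The base case $X^{(1)} = \tfrac{2m}{n-1}\,D^{-1}$ is the specialization with $E=I$: one checks $\cI \bullet I = \Tr(\cI) = n - \tfrac{1}{2m}\,\Tr(\Deg 1 1^T \Deg) = n-1$, so $U_\e(0) = \tfrac{2m}{n-1} D^{-1}$, which is PSD and satisfies $L(K_V)\bullet X^{(1)}=1$ by the same computation as above.

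There is no real obstacle here; the only mildly delicate point is ensuring $\cI \bullet E > 0$ so that $U_\e$ is well-defined, which I handle in the first step. Note also that the argument makes no use of the specific form of $A^{(t-1)}$ produced by the oracle -- it is a structural property of the update rule $U_\e$, depending only on the identity $\cI = 2m\cdot \Degin L(K_V) \Degin$ from Fact~\ref{fct:identity}.
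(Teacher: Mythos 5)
Your proof is correct and follows essentially the same route as the paper's: the normalization $L(K_V)\bullet X^{(t)}=1$ is established via Fact~\ref{fct:identity} and the cyclic property of the trace, reducing it to $\cI\bullet W^{(t)}=1$, which holds by construction of $E_\e$. You additionally spell out positive semidefiniteness and the positivity of the normalizing denominator $\cI\bullet E$, which the paper treats as immediate from the definition of $U_\e$ and leaves implicit.
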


\noindent
Following \cite{AK}, we prove that, after a small number of iterations this algorithm either yields a roundable embedding or a feasible solution to \dsdp$(G,b, \Omega(\gamma)).$
We present the proof in Section \ref{app:ak} for completeness.
\begin{theorem}[Iterations of Oracle, \cite{AK}]\label{thm:ak}
Let $\e = \nfrac{1}{130}.$ Assume that the procedure $\textsc{Oracle}$ is a {\it good} separation oracle . Then, after $T=O\left(\nfrac{\log n}{\gamma}\right)$ iterations of the update of Equation \ref{eqn:update},  we either find a roundable solution to $\psdp(G,b,\gamma)$ or the coefficents  $\bar{\alpha} =\nfrac{1}{T} \sum_{t=1}^T  \alpha^{(t)}$ and $\bar{\beta} = \nfrac{1}{T} \sum_{t=1}^T \beta^{(t)}$ are a feasible solution to $\dsdp(G,b, \nfrac{3}{16} \cdot\gamma).$ 
\end{theorem}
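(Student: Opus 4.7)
The plan is to apply the standard matrix multiplicative weights (MMW) regret analysis of Arora--Kale, specialized to the normalization $L(K_V)\bullet X=1$ on $\Delta$ in place of the usual density-matrix trace normalization. First I would pass to the subspace $\cS_D$ via the isometry $Q\mapsto 2m\,\Degin Q\Degin$, which by Fact \ref{fct:identity} sends $L(K_V)$ to $\cI$ and $\Delta$ to the standard density-matrix simplex on $\cS_D$. In this picture the modified matrix exponential $U_\e$ reduces to the ordinary matrix-exponential Gibbs state, and the potential $\Phi^{(t)}\defeq\log\bigl(\cI\bullet\exp(-(2m\e)\,\Degin(\sum_{s<t}P^{(s)})\Degin)\bigr)$ plays the role of the usual log-partition function.

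Next I would verify the bounded-width hypothesis required by MMW. The good-oracle guarantee $-\gamma L(K_V)\preceq M^{(t)}\preceq 3L(K_V)$, together with $\gamma\in(0,1)$, gives $0\preceq M^{(t)}+\gamma L(K_V)\preceq 4L(K_V)$, and hence $0\succeq P^{(t)}\succeq-\tfrac{2\e}{3}L(K_V)$. In the transformed picture the losses have $\cI$-operator norm at most $\tfrac{2\e}{3}$, so with $\e=\tfrac{1}{130}$ the second-order term in the MMW regret is only a small constant fraction of the linear term.

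Third, I would carry out the standard potential-function calculation. The one-step MMW estimate, obtained from Golden--Thompson and the second-order matrix Taylor expansion $e^A\preceq I+A+A^2$ valid under the width bound, gives a one-step change in $\Phi$ that is linear in $(-P^{(t)})\bullet X^{(t)}$ up to a quadratic correction of size $O(\e)$ times the linear term. By Lemma \ref{lem:delta} we have $L(K_V)\bullet X^{(t)}=1$, so the oracle's guarantee $M^{(t)}\bullet X^{(t)}\geq\tfrac{\gamma}{64}$ yields $(-P^{(t)})\bullet X^{(t)}\geq\tfrac{\e}{6}\cdot\tfrac{65\gamma}{64}$. Telescoping over $T$ iterations and comparing the resulting lower bound on $\Phi^{(T)}$ against the elementary upper bound $\Phi^{(T)}\leq (2m\e)\tfrac{\e}{6}\,T\cdot(\bar M+\gamma L(K_V))\bullet X^*+O(\log n)$ valid for the extremal $X^*\in\Delta$, and rearranging, gives
\[
(\bar M+\gamma L(K_V))\bullet X^* \;\geq\; \tfrac{65}{64}\gamma\cdot(1-O(\e)) \;-\; \tfrac{6\log n}{\e^2\, T}
\]
for every $X^*\in\Delta$. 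Choosing $T=\Theta(\log n/\gamma)$ makes the last term at most $\tfrac{\gamma}{64}$, and the $O(\e)$ correction is at most $\tfrac{\gamma}{64}$ for $\e=\tfrac{1}{130}$; combining, $\bar M\bullet X^*\geq\tfrac{3\gamma}{16}$ for every $X^*\in\Delta$, and since $\bar M\cdot 1=0$ (as $L\cdot 1$, every $R_i\cdot 1$, and $L(K_V)\cdot 1$ all vanish) this is equivalent to $\bar M\succeq 0$.

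Combined with $V(\bar\alpha,\bar\beta)\geq\tfrac{3}{4}\gamma=4\cdot\tfrac{3}{16}\gamma$, which follows by linearity from the per-iteration $V(\alpha^{(t)},\beta^{(t)})\geq\tfrac{3}{4}\gamma$, the pair $(\bar\alpha,\bar\beta)$ is feasible for $\dsdp(G,b,\tfrac{3}{16}\gamma)$. The main obstacle is the careful bookkeeping of constants -- the $\tfrac{1}{64}$, $\tfrac{3}{4}$, $\tfrac{1}{130}$, and the various factors of $2m$ -- to verify that all the MMW Taylor-error and regret terms stay below $\tfrac{\gamma}{64}$ uniformly; once the reduction to the density-matrix simplex on $\cS_D$ has been made, the remainder is an essentially verbatim application of the analysis in \cite{AK}.
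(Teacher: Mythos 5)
Your proposal follows the same route as the paper's appendix proof: pass to $\cS_D$ via $Q\mapsto 2m\,\Degin Q\Degin$ (so $L(K_V)\mapsto\cI$ and $\Delta$ becomes the density-matrix simplex on $\cS_D$), verify the width bound from $-\gamma L(K_V)\preceq M^{(t)}\preceq 3L(K_V)$, apply the matrix-MMW/Golden--Thompson regret estimate, telescope, and combine $M(\bar\alpha,\bar\beta)\succeq 0$ with the linearity of $V$ to land in $\dsdp(G,b,\tfrac{3}{16}\gamma)$. The one extra factor of $\e$ in your $\tfrac{6\log n}{\e^2 T}$ relative to the paper's $\tfrac{6\log n}{\e T}$ comes from reading the $-\tfrac{\e}{6}$ in the displayed definition of $P^{(t)}$ literally while the exponential $U_\e$ already contributes a $-(2m\e)$; the appendix treats $P^{(t)}$ as $\tfrac{1}{6}(M+\gamma L(K_V))$ so only one $\e$ appears. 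Either reading still yields $T=O(\log n/\gamma)$, so this is a harmless constant difference.

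There is, however, one arithmetic slip in the last step that, as written, does not close. You allot $\tfrac{\gamma}{64}$ to the multiplicative $O(\e)$ correction and another $\tfrac{\gamma}{64}$ to the additive regret term, which gives
$(\bar M+\gamma L(K_V))\bullet X^*\geq \tfrac{65}{64}\gamma-\tfrac{\gamma}{64}-\tfrac{\gamma}{64}=\tfrac{63}{64}\gamma$,
hence after subtracting $\gamma L(K_V)\bullet X^*=\gamma$ you would get $\bar M\bullet X^*\geq -\tfrac{\gamma}{64}$, which is negative, not the claimed $\tfrac{3\gamma}{16}$. The headroom is only $\tfrac{\gamma}{64}$ total: with $\e=\tfrac{1}{130}$ the $(1-\e)$ factor eats $\tfrac{65}{64}\e\gamma=\tfrac{\gamma}{128}$, so the regret term must be capped at $\tfrac{\gamma}{128}$ as well (i.e. take $T$ a constant factor larger), at which point you get $\bar M\bullet X^*\geq 0$ — which is all that dual feasibility requires, since $\dsdp$ only asks for $M(\bar\alpha,\bar\beta)\succeq 0$. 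The specific quantity $\tfrac{3}{16}\gamma$ enters only through the scalar constraint $V(\bar\alpha,\bar\beta)>4\cdot\tfrac{3}{16}\gamma=\tfrac{3}{4}\gamma$, not through any lower bound on $\bar M\bullet X^*$; the phrasing ``$\bar M\bullet X^*\geq\tfrac{3\gamma}{16}$'' conflates these two conditions. Once you replace that with ``$\bar M\bullet X^*\geq 0$ on $\Delta$, hence (using $\bar M\,\mathbf{1}=0$) $\bar M\succeq 0$,'' the argument matches the paper's.
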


\paragraph{Approximate Computation.} 
Notice that, while we are seeking to construct a nearly-linear-time algorithm, we cannot hope to compute $X^{(t)}$ exactly and explicitly, as just maintaining the full $X^{(t)}$ matrix requires quadratic time in $n.$  
Instead, we settle for a approximation $\tilde{X}^{(t+1)}$  to $X^{(t+1)}$ which we define as
$$
\tilde{X}^{(t+1)} = \tilde{U}_\e\left( \sum_{i=1}^t P^{(i)} \right).
$$
The function $\tilde{U}_\e$ is a randomized approximation to $U_\e$ obtained by applying the Johnson-Linderstrauss dimension reduction to the embedding corresponding to $U_\e.$ $\tilde{U}_\e$ is described in full in Section \ref{app:exp}, where we also prove the following lemma about the accuracy and sparsity of the approximation. It is essentially the same argument appearing in \cite{Kthesis} applied to our context.

\begin{lemma}[Approximate Computation]\label{lem:approx}
Let $\e = \nfrac{1}{130}.$
For a matrix $M \in \R^{V \times V},$ $M \succeq 0,$ let $\tilde{X} \defeq \tilde{U}_\e(M)$ and $X \defeq U_\e(M).$
%Let $\{\tilde{v}_i\}_{i \in V}$ be the graph embedding corresponding to $\tilde{X}.$ The following conditions hold:
\begin{enumerate}
\item $\tilde{X} \succeq 0$ and $\tilde{X} \in \Delta.$
\item The embedding $\{\tilde{v}_i\}_{i \in V}$ corresponding to $\tilde{X}$ can be represented in $d=O(\log n)$ dimensions.
\item $\{\tilde{v}_i \in \R^d \}_{i \in V}$ can be computed in time $\tilde{O}(t_M + n).$
\item for any graph $H=(V,E_H)$, with high probability
$$ (1 - \nfrac{1}{64}) \cdot L(H) \bullet X -\tau \leq L(H) \bullet \tilde{X} \leq (1 + \nfrac{1}{64}) \cdot L(H) \bullet X + \tau,$$ and, for any vertex $i \in V,$
$$ (1 - \nfrac{1}{64}) \cdot R_i \bullet X -\tau \leq R_i \bullet \tilde{X} \leq (1 + \nfrac{1}{64}) \cdot R_i \bullet X + \tau,
$$

\end{enumerate}
where $\tau \leq O(\nfrac{1}{\poly(n)}).$
\end{lemma}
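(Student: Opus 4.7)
\textbf{Proof plan for Lemma~\ref{lem:approx}.} The strategy combines a Johnson--Lindenstrauss (JL) sketch for dimension reduction with a polynomial approximation for matrix-exponential--vector products. Since $\Degin M \Degin$ is symmetric PSD we can factor $U_\e(M) = \kappa \cdot YY^T$, where $Y \defeq \Degin\, e^{-m\e \Degin M \Degin}$ and, by combining Fact~\ref{fct:identity} with the cyclicity of trace, $\kappa = 1/(L(K_V) \bullet YY^T)$. Thus the rows of $\sqrt{\kappa}\, Y$ form an $n$-dimensional embedding whose Gram matrix is exactly $U_\e(M)$. I propose to define
\begin{equation*}
\tilde{U}_\e(M) \defeq \frac{\tilde Y \tilde Y^T}{L(K_V) \bullet \tilde Y \tilde Y^T}, \qquad \tilde Y \defeq \Degin \tilde E,
\end{equation*}
where $\tilde E$ is a column-by-column approximation of $e^{-m\e \Degin M \Degin} Q$ for a random $n \times d$ JL matrix $Q$ (e.g., with i.i.d.\ $\pm 1/\sqrt d$ entries) with $d = O(\log n)$, computed via the Sachdeva--Vishnoi polynomial approximation of the exponential. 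By construction $\tilde X = \tilde U_\e(M) \succeq 0$ and $L(K_V) \bullet \tilde X = 1$, giving~(1), and since $\tilde Y \in \R^{n \times d}$ its rows provide the required $O(\log n)$-dimensional embedding, giving~(2).

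For~(3), the matrix $A \defeq m\e \Degin M \Degin$ is PSD with $\|A\| \le \poly(n)$ in our setting (which follows from the width bound $M(\alpha,\beta) \sle 3 L(K_V)$ enforced by the good-separation-oracle property, together with the iteration bound $T = \tilde O(1/\gamma)$). The Sachdeva--Vishnoi approximation then computes $e^{-A} q$ to $\ell_2$ error $1/\poly(n)$ using $k = \tilde O(1)$ matrix--vector multiplications with $A$, each of cost $O(t_M + n)$ after accounting for the diagonal rescalings $\Degin$. Iterating over the $d$ columns of $Q$ produces $\tilde Y$ in $\tilde O(t_M + n)$ total time, and the normalizing scalar $L(K_V) \bullet \tilde Y \tilde Y^T$ is then computed from the explicit $n \times d$ matrix $\tilde Y$ in $O(nd) = \tilde O(n)$ time using Fact~\ref{fct:mean}.

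The approximation bounds~(4) follow from the identities $L(H) \bullet YY^T = \sum_{\{i,j\} \in E_H} \norm{Y_i - Y_j}^2$ and $R_i \bullet YY^T = \norm{Y_i - Y_\avg}^2$, where $Y_i$ denotes the $i$-th row of $Y$. The JL lemma with $d = O(\log n)$, together with a union bound over the $O(n^2)$ pairs $\{i,j\}$ and the $n$ indices for the mean, guarantees that with high probability all these squared distances are preserved to within a multiplicative factor of $(1 \pm \tfrac{1}{256})$ when $Y$ is replaced by its JL sketch. Replacing the sketch in turn by $\tilde Y$ (coming from $\tilde E$) contributes only $1/\poly(n)$ additive error, which is absorbed into $\tau$. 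Summing over the (weighted) edges of $H$ yields simultaneous multiplicative approximation of $L(H) \bullet \tilde Y \tilde Y^T$ and of the denominator $L(K_V) \bullet \tilde Y \tilde Y^T$; dividing through gives the quoted $(1 \pm \tfrac{1}{64})$ bound on $L(H) \bullet \tilde X$, and the same argument applied to each $R_i$ gives the second bound.

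\textbf{Main obstacle.} The subtlest point is managing the simultaneous interaction of the three sources of error---the JL sketch, the polynomial approximation of the exponential, and the estimated normalization---while retaining $\tilde X \in \Delta$ exactly. The key observation, enabled by Fact~\ref{fct:identity}, is that both the normalization factor $\kappa^{-1}$ and the quadratic forms of interest take the common shape $Q \bullet YY^T$, so a single high-probability event over the JL sketch controls all of them uniformly, and the polynomial error enters only additively and can be driven to $1/\poly(n)$ by choosing degree $\tilde O(1)$. Verifying the a~priori spectral bound $\|A\| \le \poly(n)$ needed for the polynomial approximation is where the separation oracle's width bound is used, so this part of the proof must be coordinated with the analysis of \alg rather than treated in isolation.
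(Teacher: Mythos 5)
Your plan --- a Johnson--Lindenstrauss sketch combined with fast matrix-exponential-vector products, with the normalization computed exactly from the low-dimensional sketch so that $\tilde X \in \Delta$ holds with no slack --- is the same architecture as the paper's proof, which goes from $Y$ (exact) to $\hat Y$ (exact exponential after JL) to $\tilde Y$ (approximate exponential) and then divides by the estimated normalization $L(K_V)\bullet\tilde Y$. Your reduction of all quantities of interest to the common shape ``quadratic form in the rows of $\tilde Y$'' is exactly the mechanism used in Section~\ref{app:exp}.

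The genuine gap is in part~(3). You replace the paper's exponential subroutine --- the Lanczos-based {\sf EXPV} of Lemma~\ref{lem:expv}, cited from~\cite{Kthesis,YPS} with stated running time $O(t_A \log^3(1/\eta))$ --- with the Sachdeva--Vishnoi \emph{polynomial} approximation, and you assert that $k=\tilde O(1)$ matrix--vector products suffice given only $\|A\|\le\poly(n)$. That inference does not hold: the degree-$k$ polynomial uniformly approximating $e^{-x}$ on $[0,\|A\|]$ to additive error $\delta$ requires $k=\Theta\bigl(\sqrt{\|A\|\log(1/\delta)}\bigr)$, so $\|A\|\le\poly(n)$ only gives $k=\poly(n)$ matvecs, not $\tilde O(1)$. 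Even the correct bound that actually holds here --- using $0\preceq P^{(i)}\preceq L(K_V)$ and Fact~\ref{fct:identity} to get $\|A\| = 2m\e\,\normt{\Degin\bigl(\sum_{i<t}P^{(i)}\bigr)\Degin}\le \e T = O(\log n/\gamma)$ --- gives $k=\tilde O(1/\sqrt{\gamma})$ per call, hence $\tilde O\bigl(m/\gamma^{3/2}\bigr)$ total time, missing the $\tilde O(m/\gamma)$ that the lemma and Theorem~\ref{thm:main} require. To close this you would need either the SV \emph{rational} approximation, which has $\tilde O(1)$ terms but costs one SDD linear-system solve (not a matvec) per term, or the {\sf EXPV} routine the paper actually invokes, whose claimed cost is independent of $\|A\|$. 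As written, your step~(3) does not establish the running-time bound; the rest of the argument is sound and matches the paper.
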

This lemma shows that  $\tilde{X}^{(t)}$ is a close approximation to $ X^{(t)}.$ 
We will use this lemma to show that {\sc Oracle}\xspace can receive  $\tilde{X}^{(t)}$ as input, rather than  $X^{(t)},$ and still meet the conditions of Theorem \ref{thm:ak}.
In the rest of the paper, we assume that $\tilde{X}^{(t)}$ is represented by its corresponding embedding $\{\tilde{v}^{(t)}_i\}_{i \in V}.$

\paragraph{The Oracle.} {\sc Oracle}\xspace is described in Figure \ref{fig:oracle}.
 We show that {\sc Oracle}\xspace on input $\tilde{X}^{(t)}$ meets the condition of Theorem \ref{thm:ak}. Moreover, we show that {\sc Oracle}\xspace obeys an additional condition, which, combined with the dual guarantee of Theorem \ref{thm:ak} will yield the correlation property of {\sc BalCut}.
% nkv \Authornote{Lorenzo}{add a couple of lines saying in words what the oracle does, just to restate it in words. the important thing is that a lot of the variance of the embedding is inside B}
\begin{theorem}[Main Theorem on {\sc Oracle}] \label{thm:oracle}
On input $\tilde{X}^{(t)},$ {\sc Oracle} runs in time $\tilde{O}(m)$ and is a {\it good} separation oracle for $X^{(t)}.$ Moreover, the cut $B$ in Step \ref{stp:cut} is guaranteed to exist.

\end{theorem}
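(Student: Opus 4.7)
The plan is to analyze the Oracle in three components: detecting roundability, constructing a certificate in the non-roundable case, and bounding the running time. Throughout, I use that $\tilde X^{(t)} \in \Delta$ by Lemmas~\ref{lem:delta} and~\ref{lem:approx}, so condition~2 of Definition~\ref{def:roundable} (the normalization $L(K_V)\bullet X = 1$) holds automatically and never needs to be checked.

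\textbf{Detecting roundability.} The first task is to verify conditions~1 and~3 from the low-dimensional representation of $\{\tilde v_i^{(t)}\}$. Condition~1 reduces to a single evaluation of $L \bullet \tilde X^{(t)}$ by summing squared differences along the $m$ edges. Condition~3 requires computing each radius $\|\tilde v_i - \tilde v_\avg\|_2^2$ to identify $R$, then evaluating $L(K_R) \bullet \tilde X^{(t)}$ via Fact~\ref{fct:subset}. Using the $O(\log n)$-dimensional embedding both evaluations take $\tilde O(m)$ time, and by Lemma~\ref{lem:approx} they track the corresponding quantities for $X^{(t)}$ within a $(1\pm\tfrac{1}{64})$ factor---sufficient slack to distinguish the roundable case with the required constants.

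\textbf{Non-roundable case.} If condition~1 fails, i.e.\ $L\bullet \tilde X^{(t)}$ exceeds its threshold, I would set $\beta = 0$ and $\alpha$ of order $3\gamma/4$. Then $V(\alpha,\beta)=\alpha \geq 3\gamma/4$; the identity $M(\alpha,0)\bullet X^{(t)} = L\bullet X^{(t)}/(2m) - \alpha$ is at least $\gamma/64$ after absorbing the approximation error; and the width bound $-\gamma L(K_V)\preceq M(\alpha,0) \preceq 3L(K_V)$ follows from the standard Rayleigh-quotient comparison $0\preceq L \preceq 2m\,L(K_V)$. The required cut $B$ can be taken as the best sweep cut along the radial ordering. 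If condition~3 fails while condition~1 holds, condition~3's failure forces substantial $\mu$-mass at large radius: ordering vertices by $\|\tilde v_i-\tilde v_\avg\|_2$ and scanning prefixes, a Cheeger-type argument---using the short-edge bound from condition~1 and the radial concentration from the failure of condition~3---produces a sweep cut $B$ of conductance $O(\sqrt\gamma)$ that is moreover unbalanced. I would then set $\beta_i = c$ uniformly on $B$ (zero elsewhere) with $c$ of order $\gamma/(1-b)$, and tune $\alpha$. By Fact~\ref{fct:star}, $\sum_{i\in B}\beta_i R_i \succeq c(\mu(B)L(K_V) - L(K_B))$; because $B$ is a radial prefix the vertices cluster in the embedding so $L(K_B)\bullet X^{(t)}$ is small compared to $\mu(B)^2 = \mu(B)^2 L(K_V)\bullet X^{(t)}$ (Fact~\ref{fct:subset}), pushing $M(\alpha,\beta)\bullet X^{(t)}$ past $\gamma/64$ while $V(\alpha,\beta)\geq 3\gamma/4$ holds by the choice of $c$. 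The upper width bound follows from $\sum_i\beta_i R_i \preceq (\sum_i\beta_i)L(K_V)$ together with $L\preceq 2m\,L(K_V)$.

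\textbf{Main obstacle and running time.} The principal technical difficulty is the joint construction in the condition-3 failure case: the radial sweep threshold must be chosen so that the prefix $B$ realizes both a Cheeger-type conductance bound $O(\sqrt\gamma)$ \emph{and} supports a dual $(\alpha,\beta)$ that simultaneously satisfies the separation condition $M\bullet X^{(t)}\geq \gamma/64$, the nontriviality $V \geq 3\gamma/4$, and the two-sided width bound. Showing that a single radial sweep level can meet all of these---i.e.\ that Step~\ref{stp:cut}'s cut is \emph{guaranteed} to exist---is where the exploitation of the failed condition~3 (together with condition~1) is essential, and is where I expect most of the work. The running-time bound is comparatively routine: computing radii, sorting, scanning sweep conductances, and evaluating a constant number of Laplacian inner products each cost $\tilde O(m)$ using the $O(\log n)$-dimensional embedding from Lemma~\ref{lem:approx}, with no matrix exponential invoked inside the oracle itself.
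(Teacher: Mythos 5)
Your high-level structure (detect roundability, produce a separating dual, bound the running time) matches the paper's proof, and your treatment of Case~1 and of the running time is essentially right. The gap is in the ``condition~3 fails'' case, and it is substantive in two places. First, the choice $\beta_i=c$ uniform on $B$ with $c=\Theta(\gamma/(1-b))$ does not work: the nontriviality term becomes $V(\alpha,\beta)=\alpha-\tfrac{1-b}{b}\,c\,\card{B}$, which scales with the \emph{cardinality} $\card{B}$ and can be arbitrarily negative, and the width bound $\sum_{i\in B}\beta_iR_i\preceq(\sum_i\beta_i)L(K_V)$ you invoke is false for degree-unweighted $\beta$ (a single $R_i\preceq L(K_V)$ fails). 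The paper instead takes $\beta_i=\mu_i\gamma$ on $B$; this is what makes $\sum_i\beta_i=\gamma\mu(B)\leq\gamma b/8$ tiny (so $V\geq 3\gamma/4$) and what lets the width bound come from $\mu_iR_i\preceq\sum_j\mu_jR_j=L(K_V)$ via Fact~\ref{fct:mean}.

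Second, and more fundamentally, the mechanism you propose for $M(\alpha,\beta)\bullet X^{(t)}\geq\gamma/64$ is reversed. You argue that the radial prefix $B$ ``clusters in the embedding'' so $L(K_B)\bullet X^{(t)}$ is small, then apply Fact~\ref{fct:star} to $B$. But $B$ is the set of \emph{largest}-radius vertices, precisely those far from $v_\avg$ and from one another; there is no reason for $L(K_B)\bullet X^{(t)}$ to be small, and even if it were, Fact~\ref{fct:star} on $B$ only yields $\sum_{i\in B}\mu_iR_i\bullet X\gtrsim\mu(B)\leq b/8$, far too small to beat $\alpha=\Theta(\gamma)$ in $M(\alpha,\beta)\bullet X^{(t)}=\tfrac{1}{2m}L\bullet X+\gamma\sum_{i\in B}\mu_iR_i\bullet X-\alpha$. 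The paper's argument applies the clustering reasoning to the complementary set $R$ of \emph{small}-radius vertices (which genuinely is clustered, since condition~3 failed gives $L(K_R)\bullet\tilde X<1/128$), concludes via Fact~\ref{fct:star} that almost all of the unit variance $\sum_i\mu_ir_i^2=1$ sits on $\bar R$ and hence on a short prefix $S_k$ of the radial ordering, applies Lemma~\ref{lem:cheeger} to the shifted radial vector $x_i\propto r_i-r_z$ (supported on $S_z$) to find a low-conductance sweep cut $S_h$ with $k\leq h<z$, and finally notes that the ``most balanced'' admissible sweep cut $B$ must contain $S_h\supseteq S_k$, giving $\sum_{i\in B}\mu_ir_i^2\geq 1-\nfrac{3}{64}$, which is what makes $M(\alpha,\beta)\bullet X^{(t)}\geq\gamma/64$ go through. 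This chain---variance concentration on $\bar R$, a shifted Cheeger sweep, and the containment $B\supseteq S_k$---is exactly the step you flag as the main obstacle, but the route you sketch does not get there. (A minor further issue: in your Case~1 you would output a radial sweep cut $B$; the paper outputs $B=\emptyset$, which is important since an arbitrary sweep cut there need not have conductance $O(\sqrt\gamma)$ and would contaminate $\bigcup_tB^{(t)}$.)
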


\begin{figure*}[htb]
  	\begin{tabularx}{\textwidth}{|X|}
    \hline
  	\begin{enumerate}
\item {\bf \textsc{Input:}} The embedding $\{\tilde{v}_i\}_{i \in V},$ corresponding to $\tilde{X} \in \Delta.$ Let $r_i = \norm{\tilde{v}_i - \tilde{v}_\avg}_2$ for all $i \in V.$
Denote $R \defeq \{i \in V : r_i^2 \leq 32 \cdot \nfrac{(1-b)}{b}\}.$
\item {\sc Case 1}:  $\E_{\{i,j\} \in E}\;\norm{\tilde{v}_i-\tilde{v}_j}_2^2  \geq  2\gamma.$ 
	Output $\alpha = \gamma,$ $\beta = 0$ and $B = \emptyset.$
	\item {\sc Case 2}: not {\sc Case 1} and $\E_{\{i,j\} \sim \mu_R \times \mu_R }\norm{v_i - v_j}_2^2 \geq \delta.$ Then $\{\tilde{v}_{i}\}_{i\in V}$ is roundable, as $\tilde{X} \in \Delta$ implies $\E_{j \sim \mu}\; r_j^2 = 1.$
\item \label{stp:cut} {\sc Case 3}: not {\sc Case 1} or {\sc 2}. Relabel the vertices of $V$ such that $r_1 \geq r_2 \geq \ldots \geq r_n$ and let $S_i=\{1, \ldots,i\}$ be the $j$-th sweep cut of $r.$ 
Let $z$ the smallest index such that $\mu(S_z) \geq \nfrac{b}{8}.$ 	Let $B$ the most balanced sweep cut among $\{S_1, \ldots, S_{z-1}\}$ such that $\phi(B) \leq 2048 \cdot \sqrt{\gamma}.$  Output $\alpha = \nfrac{7}{8} \gamma,$ $\beta_i = \mu_i \cdot \gamma$ for $i \in B$ and $\beta_i = 0$ for $i \notin B.$ Also output the cut $B.$ 
 \end{enumerate}\\
   \hline
    \end{tabularx}
  \caption{{\sc Oracle}}
  \label{fig:oracle}
\end{figure*}

\paragraph{Proof of Main Theorem.}  We are now ready to prove Theorem \ref{thm:main}. To show the overlap condition,  we consider the dual condition implied by Theorem \ref{thm:ak} together with the cut $B^{(t)}$ and the values of the coefficents output by the {\sc Oracle}.

\begin{proof}[Proof of Theorem \ref{thm:main}]
If at any iteration $t,$ the embedding $\{\tilde{v}_i^{(t)}\}_{i \in V}$ corresponding to $\tilde{X}^{(t)}$  is roundable, the standard projection rounding {\sc ProjRound}\xspace produces a cut of balance $\Omega_b(1)$ and conductance $O_b(\sqrt{\gamma})$ by Theorem \ref{thm:stdround}. Similarly, if for any $t,$ $C^{(t)}$ is $\nfrac{b}{4}$-balanced, \alg satisfies the balance condition  in Theorem \ref{thm:main}, as $\phi(C^{(t)}) \leq O(\sqrt{\gamma})$ because $C^{(t)}$ is the union of cuts of conductance at most $O(\sqrt{\gamma}).$

Otherwise, after $T=O\left(\nfrac{\log n}{\gamma}\right)$ iterations, by Theorem \ref{thm:ak}, we have that  $\bar{\alpha} =\nfrac{1}{T} \sum_{t=1}^T  \alpha^{(t)}$ and $\bar{\beta} = \nfrac{1}{T} \sum_{t=1}^T \beta^{(t)}$ constitute a feasible solution  $\dsdp(G,b, \nfrac{3}{16} \cdot\gamma).$ This implies that
$ 
M(\bar{\alpha}, \bar{\beta}) \succeq 0,$ i.e.
\begin{equation}\label{eqn:dual}
\frac{1}{m} \cdot L +  \sum_{i \in V} \bar{\beta}_i R_i -  \bar{\alpha} L(K_V) \succeq 0 .
\end{equation}

For any cut $C$ such that $\mu(C) \leq \nfrac{1}{2}$ and $\phi(C) \leq \nfrac{\gamma}{16},$ let the embedding $\{ u_i \in \R \}_{i \in V}$ be defined as $u_i = \sqrt{\nfrac{\mu(\bar{C})}{\mu(C)}}$ for $i \in C$ and $u_i = -\sqrt{\nfrac{\mu(C)}{\mu(\bar{C})}}$ for $i \notin C.$
Then $u_\avg = 0$ and $\E_{i \sim \mu} \norm{u_i - u_\avg}_2^2 = 1.$ Moreover, $\E_{\{i,j\} \in E} \norm{u_i - u_j}_2^2 = \nfrac{1}{m} \cdot \nfrac{\card{E(C, \bar{C})}}{\mu(C)\mu(\bar{C})} \leq 4 \cdot \phi(C) \leq \nfrac{\gamma}{4}.$  
Let $U$ be the Gram matrix of  $\{ u_i \in \R \}_{i \in V}.$

We apply the lower bound of Equation \ref{eqn:dual} to $U.$ By Facts \ref{fct:mean} and \ref{fct:star}.
\begin{align*}
\E_{\{i,j\} \in E} \norm{u_i - u_j}_2^2 + \sum_{i \in V} \bar{\beta}_i \norm{u_i - u_\avg}^2_2 - \bar{\alpha} \E_{i \sim \mu}\norm{u_i - u_\avg}_2^2 \\
= M(\bar{\alpha}, \bar{\beta}) \bullet U \geq 0
\end{align*}

Recall that, by the definition of {\sc Oracle}, for all $t \in [T],$ $ \alpha^{(t)} \geq \nfrac{7}{8} \cdot \gamma$ and $\beta^{(t)}_i = \mu_i \cdot \gamma$ for $i \in B^{(t)}$ and $\beta^{(t)}_i = 0$ for $i \notin B^{(t)}.$ Hence,
\begin{eqnarray*}
\nfrac{\gamma}{4} + \nfrac{\gamma}{T} \cdot \sum_{t=1}^T  \left( \mu(B^{(t)} \cap C) \cdot \nfrac{\mu(\bar{C})}{\mu(C)} + \mu(B^{(t)} \cap \bar{C}) \cdot \nfrac{\mu(C)}{\mu(\bar{C})} \right) \\ - \nfrac{7}{8} \cdot \gamma \geq 0 
\end{eqnarray*}

Dividing by $\gamma$ and using the fact that $\mu(C) \leq \nfrac{1}{2}$ and $\mu(\bar{C}) \leq 1,$ we obtain
$$
\nfrac{1}{T} \cdot \sum_{t=1}^T  \left(\frac{\mu(B^{(t)} \cap C)}{\mu(C)} + \frac{\mu(B^{(t)} \cap \bar{C})}{2 \cdot \mu(\bar{C})} \right) \geq \left(\nfrac{7}{8} - \nfrac{1}{4}\right)=\nfrac{5}{8}. 
$$
Now,
$$
\frac{\mu(S \cap C)}{\mu(C)} + \frac{\mu(S\cap \bar{C})}{2 \cdot \mu(\bar{C})} \geq \nfrac{1}{T} \cdot \sum_{t=1}^T  \left(\frac{\mu(B^{(t)} \cap C)}{\mu(C)} + \frac{\mu(B^{(t)} \cap \bar{C})}{2 \cdot \mu(\bar{C})} \right), 
$$
so that we have
$$
\frac{\mu(S \cap C)}{\mu(C)} + \frac{\mu(S\cap \bar{C})}{2 \cdot \mu(\bar{C})} \geq \nfrac{5}{8}.
$$
Moreover, being the union of cuts of conductance $O(\sqrt{\gamma}),$ $S$ also has $\phi(S) \leq O(\sqrt{\gamma}).$
As $\mu(S) \leq \nfrac{b}{4},$ $\nfrac{\mu(S\cap \bar{C})}{2 \cdot \mu(\bar{C})} \leq \mu(S) \leq \frac{b}{4} \leq \nfrac{1}{8}.$ This finally implies that
$$
\frac{\mu(S \cap C)}{\mu(C)} \geq \nfrac{1}{2}.
$$
Finally,  both {\sc ProjRound} and {\sc Oracle} run in time $\tilde{O}(m)$ as the embedding is $O(\log n)$ dimensional. By Lemma \ref{lem:approx}, the update at time $t$ can be performed in time $t_M,$ where $M=\sum_{i=1}^{t-1} P^{(i)}.$ This is a matrix of the form $a \cdot L + \sum_{i \in V} b_i R_i + c L(K_V).$ The first two terms can be multiplied by a $O(\log n)$ vectors in time $\tilde{O}(m),$ while the third term can be decomposed as $L(K_V) = \E_{i \sim \mu} R_i$ by Fact \ref{fct:mean} and can therefore be also multiplied in time $\tilde{O}(m).$
Hence, each iteration runs in time $\tilde{O}(m),$ which shows that the total running time is $\tilde{O}(\nfrac{m}{\gamma})$ as required.

\end{proof}

\section{Proof of Main Theorem on {\sc Oracle}} \label{sec:mp}

\subsection{Preliminaries}

The following is a variant of the sweep cut argument of Cheeger's Inequality \cite{FAN}, tailored to ensure that a constant fraction of the variance of the embedding is contained inside the output cut.
For a vector $x \in \R^{V},$ let ${\rm supp}(x)$ be the set of vertices where $x$ is not zero. 
\begin{lemma} \label{lem:cheeger}
Let $x \in \R^V, x \geq 0,$ such that  $x^T L x \leq \lambda$ and $\mu(\supp(x)) \leq \nfrac{1}{2}.$ Relabel the vertices so that $x_1 \geq x_2 \geq \ldots \geq x_{z-1} > 0$ and $x_{z} = \ldots = x_n = 0.$ For $i \in [z-1],$ denote by $S_i \subseteq V,$ the sweep cut $\{1, 2, \ldots, i\}.$ 
Further, assume that $\sum_{i=1}^n d_i x_i^2 \leq 1,$ and, for some fixed $k \in [z-1],$ $\sum_{i=k}^{n} d_i x_i^2 \geq \sigma.$
Then, there is a sweep cut $S_h$ of $x$ such that $ z-1 \geq h \geq k$ and $\phi(S_h) \leq \nfrac{1}{\sigma} \cdot \sqrt{ 2 \lambda}.$
\end{lemma}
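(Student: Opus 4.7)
The statement is a variant of Cheeger's inequality with a lower bound on the sweep-cut index $h$, so I would start from the standard proof and restrict the range of the thresholds. Define $y_i \defeq x_i^2$. First I would establish the analogue of the usual ``boundary'' bound
\[
\sum_{\{i,j\} \in E} \abs{y_i - y_j} \;\leq\; \sqrt{2\lambda}\mper
\]
This follows from writing $\abs{y_i - y_j} = \abs{x_i-x_j}\abs{x_i+x_j}$ and applying Cauchy--Schwarz, using $x^T L x \leq \lambda$ for the first factor and $\sum_{\{i,j\}\in E}(x_i+x_j)^2 \leq 2\sum_i d_i x_i^2 \leq 2$ for the second.

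Next, for $t \in [0, y_k)$, consider the level-set sweep cut $S(t) \defeq \{i : y_i > t\}$. Any such $S(t)$ is of the form $S_h$ with $h \geq k$, since $y_k > t$ forces the indices $1,\dots,k$ into $S(t)$; likewise $S(t) \subseteq \supp(x) = \{1,\dots,z-1\}$ for $t > 0$, so $h \leq z-1$. Since $\mu(S(t)) \leq \mu(\supp(x)) \leq \nfrac12$, we have $\min(\vol(S(t)), \vol(\bar{S(t)})) = \vol(S(t))$. By the coarea-style identity,
\[
\int_0^{y_k} \abs{E(S(t), \bar{S(t)})}\, dt \;\leq\; \sum_{\{i,j\}\in E} \abs{y_i - y_j} \;\leq\; \sqrt{2\lambda}\mper
\]
For the denominator, using $\int_0^{y_k} \mathbb{I}[y_i > t]\, dt = \min(y_i, y_k)$,
\[
\int_0^{y_k} \vol(S(t))\, dt \;=\; \sum_{i \in V} d_i \min(y_i, y_k) \;\geq\; \sum_{i=k}^n d_i y_i \;\geq\; \sigma\mper
\]

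Finally, I would apply the standard ``ratio of averages dominates the minimum of a ratio'' argument: for nonnegative $f,g$ with $g>0$ over an interval, $\min_t f(t)/g(t) \leq \int f\,dt/\int g\,dt$. Thus there exists $t^\ast \in [0, y_k)$ with
\[
\phi(S(t^\ast)) \;=\; \frac{\abs{E(S(t^\ast), \bar{S(t^\ast)})}}{\vol(S(t^\ast))} \;\leq\; \frac{\sqrt{2\lambda}}{\sigma}\mper
\]
Letting $h$ be the sweep-cut index corresponding to $S(t^\ast)$ yields the desired cut with $z-1 \geq h \geq k$.

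\textbf{Main obstacle.} The technical content is standard Cheeger; the only wrinkle is ensuring that restricting $t$ to $[0, y_k)$ still gives a large enough denominator. This is handled by the simple observation $\min(y_i, y_k) \geq y_i$ for $i \geq k$, which lets the tail-mass hypothesis $\sum_{i\geq k} d_i y_i \geq \sigma$ carry through directly. Potential tie-breaking issues (when $y_k = y_{k+1}$ or when several $y_i$'s coincide so $S(t)$ jumps by more than one vertex at a time) are harmless: any realized $S(t^\ast)$ is still a sweep cut of $x$, and its size lies in the required interval $[k, z-1]$.
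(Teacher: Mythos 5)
Your proof is correct and is essentially the same argument as the paper's: you both bound $\sum_{\{i,j\}\in E}|y_i-y_j|$ above by $\sqrt{2\lambda}$ via Cauchy--Schwarz, and below by the tail mass $\sigma$ times the minimum conductance among sweep cuts $S_h$ with $h\geq k$; the only cosmetic difference is that you express the latter bound via the coarea (level-set integral) identity while the paper uses the equivalent discrete Abel-summation telescoping. If anything, your handling of the index range is slightly cleaner, since it uses the hypothesis $\sum_{i\geq k}d_ix_i^2\geq\sigma$ directly rather than the tail from $k+1$ as in the paper's writeup.
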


\noindent
We will also need the following simple fact.
\begin{fact} \label{fct:triangle}
Given $v,u, t \in \R^d,$ $\left(\norm{v-t}_2 - \norm{u-t}_2\right)^2 \leq \norm{v - u}_2^2.$
\end{fact}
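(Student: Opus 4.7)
The plan is to derive this as the square of the standard reverse triangle inequality for the Euclidean norm. First, I would invoke the ordinary triangle inequality $\norm{a+b}_2 \leq \norm{a}_2 + \norm{b}_2$ applied to the decompositions $v - t = (v - u) + (u - t)$ and $u - t = (u - v) + (v - t)$. This yields
\[
\norm{v-t}_2 - \norm{u-t}_2 \leq \norm{v-u}_2 \quad \text{and} \quad \norm{u-t}_2 - \norm{v-t}_2 \leq \norm{v-u}_2,
\]
so $\bigl|\norm{v-t}_2 - \norm{u-t}_2\bigr| \leq \norm{v-u}_2$. Squaring both sides (both are nonnegative) gives the claim.

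Alternatively, and perhaps more self-contained, I would note that expanding yields
\[
\bigl(\norm{v-t}_2 - \norm{u-t}_2\bigr)^2 = \norm{v-t}_2^2 + \norm{u-t}_2^2 - 2\norm{v-t}_2\norm{u-t}_2,
\]
whereas
\[
\norm{v-u}_2^2 = \norm{(v-t) - (u-t)}_2^2 = \norm{v-t}_2^2 + \norm{u-t}_2^2 - 2\langle v-t,\, u-t\rangle.
\]
Subtracting, the desired inequality reduces to $\langle v-t, u-t\rangle \leq \norm{v-t}_2 \norm{u-t}_2$, which is precisely the Cauchy--Schwarz inequality applied to the vectors $v-t$ and $u-t$.

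There is essentially no obstacle here: the statement is just the squared reverse triangle inequality, and either route (triangle inequality applied twice, or direct Cauchy--Schwarz after expansion) furnishes a two-line proof. I would present the Cauchy--Schwarz version since it makes the only nontrivial step completely explicit and avoids needing to square an inequality between nonnegative quantities.
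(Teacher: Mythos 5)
Your proof is correct: the statement is exactly the squared reverse triangle inequality, and both of your routes (triangle inequality applied twice, or expansion plus Cauchy--Schwarz applied to $v-t$ and $u-t$) establish it cleanly. The paper itself states Fact~\ref{fct:triangle} without proof, treating it as immediate, so there is no argument to compare against; either of your two-line derivations would serve as the omitted justification, and the Cauchy--Schwarz version you favor is a perfectly reasonable choice.
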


\subsection{Proof of Theorem \ref{thm:oracle}}

\begin{proof}
Notice that, by Markov's Inequality,  $\mu(\bar{R}) \leq \nfrac{b}{(32 \cdot(1-b))} \leq \nfrac{b}{16}.$ Recall that $\tau = O\left(\nfrac{1}{\poly(n)}\right).$
\begin{itemize}
\item {\sc Case 1}: $\E_{\{i,j\} \in E}\;\norm{\tilde{v}_i-\tilde{v}_j}_2^2 \;= \frac{1}{2m} \cdot L \bullet \tilde{X} \; \geq \; 2\gamma .$ We have $V(\alpha, \beta) \geq \gamma$ and, by Lemma \ref{lem:approx}, 
$$
M(\alpha, \beta) \bullet X \geq (1 - \nfrac{1}{64}) \cdot  2\gamma - \gamma - \tau \geq \nfrac{1}{64} \cdot \gamma.
$$
\item {\sc Case 2}: $\E_{\{i,j\} \sim \mu_R \times \mu_R }\norm{v_i - v_j}_2^2 \geq \nfrac{1}{64}.$ Then $\{\tilde{v}_{i}\}_{i\in V}$ is {\it roundable} by Definition \ref{def:roundable}.

\item {\sc Case 3}: $\E_{\{i,j\} \sim \mu_R \times \mu_R }\norm{v_i - v_j}_2^2 < \nfrac{1}{64}.$  This means that, by Fact \ref{fct:subset}, $L(K_R) \bullet \tilde{X} < \nfrac{1}{2} \cdot \mu(R)^2 \cdot \nfrac{1}{64} < \nfrac{1}{128}.$ Hence, by Fact \ref{fct:star},
\begin{align*}
\sum_{i \in \bar{R}} \mu_i R_i \bullet \tilde{X} =  
\sum_{i \in \bar{R}} \mu_i r_i
\geq \mu(R) - \nfrac{1}{128} \geq 1 - \nfrac{1}{32} - \nfrac{1}{128}\\ \geq 1 - \nfrac{5}{128}.
\end{align*}

We then have $\bar{R} = S_g$ for some $g \in [n],$ where we also  denote by  $S_{g}$  the $g$ largest coordinates dictated by the sweep cut $S_{g}.$ 
Let $k \leq z$ be the the vertex in $R$ such that $\sum_{j=1}^{k} \mu_j r_j \geq  (1 - \nfrac{1}{128}) \cdot (1 - \nfrac{5}{128})$ and  $\sum_{j=k}^{g}\mu_j r_j \geq \nfrac{1}{128} \cdot (1 - \nfrac{5}{128}).$ 
By the definition of $z,$ we have $k \leq g < z$ and $r_z^2 \leq \nfrac{8} {b} \leq 16 \cdot \nfrac{(1-b)}{b}.$ Hence, we have $r_z \leq \nfrac{1}{2} \cdot r_i,$ for all $i \geq g.$
Define the vector $x$ as $x_i \defeq \nfrac{1}{2m} \cdot (r_i - r_z)$ for $i \in S_z$ and $r_i \defeq 0$ for $i \notin S_z.$ 
Notice that:
\begin{align*}
x^T L x = \sum_{\{i,j\} \in E} (x_i - x_j)^2 
%= \nfrac{1}{2m} \cdot \left(\sum_{\substack{
%  \{i,j\} \in E\\ i \in S_z, j \in S_z}} (r_i - r_j)^2 + \sum_{\substack{\{i,j\} \in E\\ i \in S_z, j \notin S_z}} (r_i - r_z)^2 \right)
\leq \nfrac{1}{2m} \cdot  \sum_{\{i,j\} \in E} (r_i - r_j)^2 \\
\stackrel{\rm Fact \; \ref{fct:triangle}}{\leq} 
\nfrac{1}{2m} \cdot \sum_{\{i,j\} \in E} \norm{\tilde{v}_i - \tilde{v}_j}_2^2 \leq  2\gamma.
\end{align*}
Also, $x \geq 0$ and $\mu(\supp(x)) \leq \nfrac{b}{8} \leq \nfrac{1}{2},$ by the definition of $z.$ Moreover,
$$
\sum_{i=1}^n d_i x_i^2 = \nfrac{1}{2m} \cdot \sum_{i=1}^z d_i (r_i - r_z)^2 \leq \nfrac{1}{2m}\cdot \sum_{i=1}^z d_i r_i^2 = 1, 
$$ and
\begin{align*}
\sum_{i=k}^n d_i x_i^2 = \nfrac{1}{2m}\cdot  \sum_{i=k}^z d_i (r_i -r_z)^2 \\
\geq \nfrac{1}{2m} \cdot \sum_{i=k}^g d_i (r_i - \nfrac{1}{2} \cdot r_i)^2 
\\= \nfrac{1}{2m} \cdot \nfrac{1}{4} \cdot \sum_{i=k}^g   d_i r_i^2  
\\ 
\geq \nfrac{1}{512} \cdot (1 - \nfrac{5}{128}) \geq \nfrac{1}{1024}
\end{align*}
Hence, by Lemma \ref{lem:cheeger}, there exists a sweep cut $S_h$ with $ z > h \geq k,$ such that $\phi(S_h) \leq 2048 \cdot \sqrt{\gamma} .$ This shows that $B,$ as defined in Figure \ref{fig:oracle} exists. Moreover, it must be the case that $S_h \subseteq B.$ As $h \geq k,$ we have
$$
\sum_{i \in B} \mu_i r_i^2 \geq \sum_{i \in S_h} \mu_i r_i^2 \geq \sum_{i=1}^k \geq (1 - \nfrac{1}{128}) \cdot (1-\nfrac{5}{128}) \geq 1 - \nfrac{3}{64}.
$$
%which implies, by Fact \ref{fct:radius}
%$$
%\sum_{i \in B} \mu_i L(S_i) \bullet \tilde{X}  \geq 1 - %\nfrac{3}{64}.
%$$
Recall also that, by the construction of $z,$ $\mu(B) \leq \nfrac{b}{8}.$
Hence, we have 
$$
V(\alpha, \beta) = \nfrac{7}{8} \cdot  \gamma - \nfrac{(1-b)}{b} \cdot \mu(B) \cdot \gamma \geq (\nfrac{7}{8} - \nfrac{1}{8}) \cdot \gamma \geq \nfrac{3}{4} \gamma.
$$
$$
M(\alpha, \beta) \bullet X \geq (1 - \nfrac{1}{64}) \cdot  (1 - \nfrac{3}{64}) \gamma  - \nfrac{7}{8} \gamma - \tau \geq \nfrac{1}{64} \cdot \gamma.
$$
\end{itemize}
This completes all the three cases.
Notice that in every case we have:
$$
\nfrac{1}{2m} \cdot L - \gamma L(K_V) \preceq M(\alpha, \beta) \preceq \nfrac{1}{2m} \cdot L + \gamma L(K_V) .
$$
\noindent
Hence,
$$
 -\gamma L(K_V) \preceq M(\alpha, \beta) \preceq 3 L(K_V).
$$

\noindent
Finally,  using the fact that $\{\tilde{v_i}\}_{i \in V}$ is embedded in $O(\log n)$ dimensions, we can compute $L \bullet \tilde{X}$ in time $\tilde{O}(m).$ $L(K_R) \bullet \tilde{X}$ can also be computed in time $\tilde{O}(n)$ by using the decomposition  $\E_{\{i,j\} \sim \mu_R \times \mu_R }\norm{v_i - v_j}_2^2 = 2 \cdot\E_{i \sim \mu_R} \norm{v_i - v_{\avg_R}}^2_2,$ where $v_{\avg_R}$ is the mean of vectors representing vertices in $R.$ 
The sweep cut over $r$ takes time $\tilde{O}(m).$ Hence, the total running time is $\tilde{O}(m).$ 
\end{proof}

\addcontentsline{toc}{section}{References}
\bibliographystyle{plain}
\bibliography{balanced}

\appendix
\section{Appendix}

\subsection{Proof of Corollary \ref{cor:cut}}\label{app:cut}
\begin{lemma}
If only the second condition in Theorem \ref{thm:st} holds, then $G$ has no $\Omega_b(1)$-balanced cut of conductance $O(\gamma).$
\end{lemma}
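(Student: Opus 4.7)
The plan is a short contradiction argument based entirely on the correlation guarantee of condition (2). Let $\beta_1 = \Omega_b(1)$ denote the balance constant hidden inside condition (1) of Theorem \ref{thm:st}. The hypothesis ``only condition (2) holds'' means the returned cut $S$ is not $\beta_1$-balanced, i.e.\ $\min(\vol(S), \vol(\bar S)) < \beta_1 \vol(G)$; combined with the global bound $\vol(S) \leq \nfrac{7}{8}\vol(G)$, the small side must be $S$ itself, so $\vol(S) < \beta_1 \vol(G)$.

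Now I would suppose for contradiction that $G$ possesses some $\beta_2$-balanced cut $C$ with $\phi(C) \leq O(\gamma)$, where $\beta_2 \defeq 2\beta_1 = \Omega_b(1)$. Replacing $C$ by its complement if necessary, assume $\vol(C) \leq \nfrac{1}{2} \vol(G)$, so that $\vol(C) \geq \beta_2 \vol(G)$. Because $\phi(C) \leq O(\gamma)$ and $\vol(C) \leq \nfrac{1}{2}\vol(G)$, the correlation conclusion of condition (2) applies to $C$ and yields
$$ \vol(S \cap C) \;\geq\; \tfrac{1}{2}\,\vol(C) \;\geq\; \tfrac{\beta_2}{2}\,\vol(G) \;=\; \beta_1 \vol(G). $$
On the other hand, $\vol(S \cap C) \leq \vol(S) < \beta_1 \vol(G)$, which is the desired contradiction. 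Hence no $\beta_2$-balanced cut of conductance at most $O(\gamma)$ can exist, and $\beta_2 = \Omega_b(1)$ delivers exactly the statement of the lemma.

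The only point that needs care is the bookkeeping of the two $\Omega_b(1)$ constants: the balance threshold $\beta_2$ appearing in the conclusion must be set strictly larger than the balance threshold $\beta_1$ implicit in condition (1), so that the upper bound $\vol(S) < \beta_1 \vol(G)$ and the lower bound $\vol(S\cap C) \geq \tfrac{\beta_2}{2}\vol(G)$ genuinely collide. Choosing $\beta_2 = 2\beta_1$ (or any fixed multiple greater than one) suffices, and both remain $\Omega_b(1)$ since $\beta_1$ depends only on $b$. No further machinery is required.
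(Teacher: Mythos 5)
Your proof is correct and follows the same line of reasoning as the paper's: both use the correlation guarantee to deduce $\vol(C) \leq 2\vol(S \cap C) \leq 2\vol(S)$, which is small once $S$ is unbalanced, so no balanced low-conductance cut can exist. You are somewhat more careful than the paper about the explicit constant bookkeeping (choosing $\beta_2 = 2\beta_1$), which is a minor improvement over the paper's terse ``$\leq \Omega_b(1)$'' phrasing.
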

\begin{proof}
We may assume that the cut $S$ output by \alg is not $\Omega_b(1)$-balanced. Then, by the second condition, any cut $T$ with $\Vol{T} \leq \nfrac{1}{2} \cdot \vol(G)$ and $\phi(T) \leq O(\gamma)$ must have $\nfrac{\Vol{S \cap T}}{\Vol{T}} \geq \nfrac{1}{2}.$
Hence, $\Vol{T} \leq 2 \cdot \Vol{S \cap T} \leq 2 \cdot \Vol{S} \leq \Omega_b(1).$ This implies that there are no $\Omega_b(1)$-balanced cuts of conductance less than $O(\gamma).$
\end{proof}

\subsection{Proof of Basic Lemmata} \label{app:basic}

\begin{proof}[Proof of Lemma \ref{lem:relax}]
For a $b$-balanced cut $(S, \bar{S})$ with $\phi(S) \leq \gamma.$ Without loss of generality, assume $\mu(S) \leq \nfrac{1}{2}.$ Consider the one-dimensional solution assigning $v_i = \sqrt{\nfrac{\mu(\bar{S}}{\mu(S)}}$ to $i \in S$ and $v_i = -\sqrt{\nfrac{\mu(S}{\mu(\bar{S})}}$ to $i \in \bar{S}.$ Notice that $v_\avg = 0$ and that $\norm{v_i - v_j}_2^2 = \nfrac{1}{\mu(S) \mu(\bar{S})}$ for $i \in S, j \notin S.$
We then have:
\begin{itemize}
 \item \begin{align*}
	  \E_{\{i,j\} \in E}\;\norm{v_i-v_j}_2^2 = \frac{1}{m} \cdot \frac{|E(S, \bar{S})|}{\mu(S) \mu(\bar{S})}\leq 2 \cdot \frac{|E(S, \bar{S})|}{2m \cdot \mu(S) \mu(\bar{S})} \\
\leq 4 \cdot \phi(S) \leq 4 \gamma.
\end{align*}
\item
% By Fact \ref{fct:mean}, 
$$ 
 \E_{i \sim \mu}\norm{v_i - v_\avg}_2^2 
%= \nfrac{1}{2} \cdot \E_{\{i,j\} \sim \mu \times \mu}\norm{v_i - v_j}_2^2 = 
= \mu(S) \cdot \nfrac{\mu(\bar{S})}{\mu(S)} + \mu(\bar{S}) \cdot \nfrac{\mu(S)}{\mu(\bar{S})}  = 1. 
$$ 
\item for all $i \in V,$ 
$$
 \norm{v_i- v_\avg}_2^2 \leq \frac{\mu(\bar{S})}{\mu(S)} \leq \frac{1-b}{b},
$$

\noindent
where the last inequality follows as $S$ is $b$-balanced.
\end{itemize}
\end{proof}

\begin{proof}[Proof of Lemma \ref{lem:cheeger}]
For all $i \in V,$ let $y_i = x_i^2.$
By Cauchy-Schwarz,
\begin{align*}
\sum_{\{i,j\} \in E} |y_i - y_j| = \sum_{\{i,j\} \in E,\; x_i \geq x_j} (x_i - x_j) (x_i + x_j) \\
\leq \sqrt{\left(\sum_{\{i,j\} \in E} (x_i - x_j)^2 \right) \cdot \left(\sum_{\{i,j\} \in E} (x_i + x_j)^2 \right)}.
\end{align*}
Hence,
$$
\sum_{\{i,j\} \in E} |y_i - y_j| \leq \sqrt{\lambda \cdot 2\left(\sum_{i \in V} d_i x_i^2. \right)} \leq \sqrt{2\lambda}.
$$
Then, let $\phi$ be the conductance of the least conductance cut among $S_{k+1}, S_{k+2}, \ldots, S_h.$
\begin{align*}
\sum_{\{i,j\} \in E} |y_i - y_j| \\
\geq \sum_{i= k+1}^h \card{E(S_i, \bar{S}_i)} \cdot (y_i - y_{i+1}) \geq \phi  \sum_{i= k+1}^h \Vol{S_i} \cdot (y_i - y_{i+1})\\
 \geq \phi \sum_{i= k+1}^h d_i y_i = \sum_{i=k+1}^h d_i x_i^2 = \phi \sum_{i \in V - S_k} d_i x_i^2 = \phi \sigma.
\end{align*}
Hence, $ \phi \leq \nfrac{1}{\sigma} \cdot \sqrt{2\lambda}.$

\end{proof}

\subsection{Primal-Dual Framework}\label{app:ak}

\subsubsection{Preliminaries}
Recall that $\cS_D$ is the subspace of $\R^V$ orthogonal to $\hat{v} = \nfrac{1}{\sqrt{2m}} \cdot \Deg 1$ and that $\cI$ be the identity over $\cS_D.$

Define 
$$
E_\e(Y) \defeq \frac{\exp(-\e Y)}{\cI \bullet \exp(-\e Y)}.
$$

The following simple facts will be useful.
\begin{fact}\label{fct:exp}
Let $M$ be a symmetric matrix in $\R^{V \times V}$ such that $M \hat{v} = 0.$ Then, $\exp(M) \hat{v} = \hat{v}.$
\end{fact}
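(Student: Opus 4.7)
The plan is to use the power series definition of the matrix exponential, $\exp(M) = \sum_{k=0}^{\infty} \tfrac{M^k}{k!}$, together with the simple observation that $\hat{v}$ being a null vector of $M$ means it is a null vector of every positive power of $M$ as well. Concretely, from $M\hat{v} = 0$ I would argue inductively that $M^k \hat{v} = M^{k-1}(M\hat{v}) = M^{k-1} \cdot 0 = 0$ for all $k \geq 1$. Then applying $\exp(M)$ to $\hat{v}$ termwise yields
\[
\exp(M)\hat{v} = \sum_{k=0}^{\infty} \frac{M^k \hat{v}}{k!} = \hat{v} + \sum_{k=1}^{\infty} \frac{0}{k!} = \hat{v},
\]
which is exactly the claim. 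Symmetry of $M$ is not strictly needed for this argument, though it guarantees that the series converges absolutely in any matrix norm so the termwise manipulation is unambiguously valid.

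Alternatively, one could diagonalize $M$ via the spectral theorem (using symmetry): write $M = \sum_i \lambda_i u_i u_i^T$ with orthonormal eigenvectors $u_i$, note that $\hat{v}$ lies in the kernel of $M$ and therefore in the span of the eigenvectors with eigenvalue $0$, and conclude $\exp(M)\hat{v} = \sum_i e^{\lambda_i} u_i u_i^T \hat{v} = \sum_{i : \lambda_i = 0} u_i u_i^T \hat{v} = \hat{v}$, since $e^{0} = 1$ and the remaining eigencomponents of $\hat{v}$ vanish. Either argument is elementary.

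There is no real obstacle here; the statement is essentially immediate from the series definition of the exponential and the fact that the kernel of $M$ is an invariant subspace. I would present the power-series proof since it is the shortest and requires neither symmetry nor a spectral decomposition.
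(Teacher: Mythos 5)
Your proof is correct: the paper states Fact~\ref{fct:exp} without proof (it is listed among the ``simple facts''), and your power-series argument, $\exp(M)\hat{v} = \sum_{k\geq 0} M^k\hat{v}/k! = \hat{v}$ since $M^k\hat{v}=0$ for all $k\geq 1$, is exactly the standard justification the paper implicitly relies on. Your observation that symmetry is not needed for this step is also accurate.
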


\begin{theorem}\label{thm:expo}
Let $\e > 0$ and let $Y^{(1)}, \ldots, Y^{(T)}$ be a sequence of symmetric matrices in $\R^{V \times V}$ such that, for all $i \in [T],$ $ Y_i \hat{v} = 0$ and $0 \preceq Y_i \preceq I.$
Then:
$$
\sum_{y=1}^T Y^{(t)} \succeq \left((1-\e)\sum_{t=1}^T Y^{(t)} \bullet Z^{(t)} - \frac{\log n}{\e}\right) \cI,
$$
where $Z^{t+1} = E_\e(\sum_{i=1}^t Y^{(i)}).$
\end{theorem}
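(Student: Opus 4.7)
The plan is to mimic the standard Matrix Multiplicative Weights analysis of Arora--Kale, with the single twist that the matrix exponential lives naturally on the subspace $\cS_D$ cut out by the kernel condition $Y^{(t)}\hat v=0$. Throughout write $W^{(t)}\defeq \exp(-\e \sum_{i=1}^{t-1} Y^{(i)})$ and define the potential
\[
\Phi_{t-1}\defeq \cI\bullet W^{(t)},
\qquad
Z^{(t)}=E_\e\bigl(\sum_{i=1}^{t-1}Y^{(i)}\bigr)=W^{(t)}/\Phi_{t-1}.
\]
The plan is to prove $\Phi_{t}\le\Phi_{t-1}\bigl(1-\e(1-\e)\,Y^{(t)}\bullet Z^{(t)}\bigr)$, telescope, and then convert the resulting scalar bound on $\Phi_T$ into the desired PSD inequality using that every $Y^{(i)}$ (and hence $\sum Y^{(i)}$) vanishes on $\hat v$.

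The per-step bound proceeds in three moves. First, by the Golden--Thompson inequality,
\[
\Phi_{t}=\cI\bullet \exp\!\Bigl(-\e\sum_{i=1}^{t-1}Y^{(i)}-\e Y^{(t)}\Bigr)\le \cI\bullet \bigl(W^{(t)}\exp(-\e Y^{(t)})\bigr).
\]
Second, since $0\preceq Y^{(t)}\preceq I$ and $\e\le 1$, the scalar inequality $e^{-\e y}\le 1-(1-e^{-\e})y\le 1-\e(1-\e)y$ for $y\in[0,1]$ lifts (by diagonalising $Y^{(t)}$) to the operator inequality $\exp(-\e Y^{(t)})\preceq I-\e(1-\e)Y^{(t)}$. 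Third, I use the kernel condition: because $Y^{(t)}\hat v=0$ and $Y^{(t)}$ is symmetric, $\cI Y^{(t)}=Y^{(t)}\cI=Y^{(t)}$, and by Fact~\ref{fct:exp} $W^{(t)}\hat v=\hat v$, so $\cI\bullet W^{(t)}Y^{(t)}=\Tr(W^{(t)}Y^{(t)})=W^{(t)}\bullet Y^{(t)}=\Phi_{t-1}\,Z^{(t)}\bullet Y^{(t)}$. Combining,
\[
\Phi_t\le \Phi_{t-1}-\e(1-\e)\,\Phi_{t-1}\,Y^{(t)}\bullet Z^{(t)}
= \Phi_{t-1}\bigl(1-\e(1-\e)\,Y^{(t)}\bullet Z^{(t)}\bigr).
\]

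Telescoping and using $1-x\le e^{-x}$ yields $\Phi_T\le \Phi_0\cdot\exp\!\bigl(-\e(1-\e)\sum_{t=1}^T Y^{(t)}\bullet Z^{(t)}\bigr)$, with $\Phi_0=\cI\bullet I=n-1\le n$. Let $Y\defeq \sum_t Y^{(t)}$; since $Y\hat v=0$, it is determined by its action on $\cS_D$, and $\Phi_T=\sum_{i}e^{-\e\lambda_i}\ge e^{-\e\lambda_{\min}(Y|_{\cS_D})}$ where the sum ranges over the eigenvalues of $Y$ on $\cS_D$. Taking logarithms of the potential bound gives
\[
\lambda_{\min}(Y|_{\cS_D})\;\ge\;(1-\e)\sum_{t=1}^T Y^{(t)}\bullet Z^{(t)}\;-\;\frac{\log n}{\e},
\]
which is exactly the coefficient in the statement. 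Since both $Y$ and $\cI$ annihilate $\hat v$, this eigenvalue bound on $\cS_D$ is equivalent to the operator inequality $Y\succeq\bigl((1-\e)\sum_t Y^{(t)}\bullet Z^{(t)}-\tfrac{\log n}{\e}\bigr)\cI$ on all of $\R^V$, proving the theorem.

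The main obstacle is the careful bookkeeping in the third move: one must verify that multiplying by $\cI$ on the left does not destroy the identity $\cI\bullet W^{(t)}Y^{(t)}=\Phi_{t-1}\,Y^{(t)}\bullet Z^{(t)}$, and, symmetrically, that the trace $\Phi_T=\cI\bullet\exp(-\e Y)$ genuinely equals the sum of exponentials of $Y$'s eigenvalues on $\cS_D$ (so that the smallest such eigenvalue upper bounds the entire log of the potential). Both reduce cleanly to the observation (Fact~\ref{fct:exp}) that each $Y^{(i)}$ and consequently $\exp(-\e\sum Y^{(i)})$ leaves the decomposition $\R^V=\mathrm{span}(\hat v)\oplus\cS_D$ invariant, so no cross terms appear. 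Everything else is the vanilla Matrix MW template.
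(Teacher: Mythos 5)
Your proof is correct and follows essentially the same route as the paper's (commented-out) argument: express the potential $\Phi_t$ as $\cI\bullet W^{(t+1)}$, apply Golden--Thompson (reducing to the ordinary trace via Fact~\ref{fct:exp}, since $\hat v$ contributes a fixed $+1$ to every trace involved), lift the scalar bound $e^{-\e y}\le 1-(1-e^{-\e})y\le 1-\e(1-\e)y$ on $[0,1]$ to an operator inequality, telescope, and convert the resulting bound on $\log\Phi_T$ into a lower bound on the least eigenvalue of $\sum_t Y^{(t)}$ restricted to $\cS_D$, which is exactly the asserted PSD inequality against $\cI$. The only cosmetic difference is that the paper carries the exact factor $(1-e^{-\e})$ through the telescoping and replaces it by $1-\e$ only at the very end, whereas you substitute $\e(1-\e)$ in the per-step bound; both are valid since $1-e^{-\e}\ge\e-\e^2$ for all $\e>0$ (so your parenthetical ``$\e\le 1$'' is unnecessary).
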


\subsubsection{Proofs}

 In the following, let $M^{(t)} \defeq 2m \cdot \Degin P^{(t)} \Degin.$ Also, let
$
W^{(t)} = E_\e\left(\sum_{i=1}^{t-1} M^{(i)}\right).
$
Then,  we have 
\begin{align*}
X^{(t)} = U_\e\left(\sum_{i=1}^{t-1} P^{(i)}\right) = 2m \cdot \Degin E_\e\left(\sum_{i=1}^{t-1} M^{(i)}\right) \Degin\\ = 2m \cdot \Degin W^{(t)} \Degin.
\end{align*}

\begin{proof}[Proof of Lemma \ref{lem:delta}]
By Fact \ref{fct:identity},
$$
 L(K_V) \bullet X^{(t)} = \left(\Deg \cI \Deg\right) \bullet \left(\Degin W^{(t)} \Degin\right) = \cI \bullet W^{(t)} = 1.
$$
\end{proof}
              
We are now ready to complete the proof of Theorem \ref{thm:ak}.                                                                                                                                                                                                                                                                                                                                                                                               \begin{proof}[Proof of Theorem \ref{thm:ak}]
Suppose that {\sc Oracle}\xspace outputs coefficents $\left(\alpha^{(t)}, \beta^{(t)}\right)$ for $T$ iterations.
Then for all $t \in [T],$ by the definition of {\it good} oracle $W^{(t)} \bullet M^{(t)}$
\begin{align*}
 = \left(\nfrac{1}{2m} \cdot \Deg X^{(t)} \Deg)\right) \bullet \left(2m \cdot \Degin P^{(t)} \Degin\right) = X^{(t)} \bullet P^{(t)} \\
\geq \frac{1}{6} \cdot \left( M\left(\alpha^{(t)}, \beta^{(t)}\right) \bullet X^{(t)} + \gamma L(K_V) \bullet X^{(t)} \right)\\
\geq \nfrac{1}{6} \cdot \left(\nfrac{1}{64} \cdot \gamma + \gamma\right) = \nfrac{1}{6} \cdot \nfrac{65}{64} \cdot \gamma. 
\end{align*}
Notice that, as $\left(\alpha^{(t)}, \beta^{(t)}\right)$ are the output of a {\it good} oracle, we have
$0  \preceq P^{(t)} \preceq L(K_V),$ which implies $0 \preceq M^{(t)} \preceq \cI \preceq I.$
Hence, we can apply Theorem \ref{thm:expo} to obtain:
$$
\sum_{i=1}^T M^{(t)} \succeq \left((1-\e)\sum_{t=1}^T M^{(t)} \bullet W^{(t)} - \frac{\log n}{\e}\right)\cI.
$$
This implies
$$
\sum_{i=1}^T P^{(t)} \succeq \left((1-\e) \cdot \nfrac{T}{6} \cdot \nfrac{65}{64} \cdot \gamma - \frac{\log n}{\e}\right)L(K_V),
$$
and by the definition of $P^{(t)},$
\begin{align*}
\frac{1}{6}  \cdot \sum_{i=1}^T \left( M\left(\alpha^{(t)}, \beta^{(t)}\right) + \gamma L(K_V) \right) \\
\succeq \left((1-\e) \cdot \nfrac{T}{6} \cdot \nfrac{65}{64} \cdot \gamma - \frac{\log n}{\e}\right)L(K_V).
\end{align*}
Hence,
$$
M(\bar{\alpha}, \bar{\beta}) \succeq \nfrac{1}{T} \cdot \left((1-\e) \cdot T \cdot  \nfrac{65}{64} \cdot \gamma - \frac{6 \log n}{\e} - \gamma\cdot T \right)L(K_V)
$$
and
$$
M(\bar{\alpha}, \bar{\beta}) \succeq   \left((1-\e) \cdot  \nfrac{65}{64} \cdot \gamma - \frac{6 \log n}{\e T} - \gamma \right)L(K_V).
$$
By picking $\e \defeq \nfrac{1}{130}$ and $T \defeq 6 \cdot 129 \cdot \nfrac{1}{\e} \cdot \nfrac{\log n}{\gamma} = O\left( \nfrac{\log n}{\gamma}\right),$ we obtain $M(\bar{\alpha}, \bar{\beta}) \succeq 0 \cdot L(K_V).$
As $M(\bar{\alpha}, \bar{\beta}) 1 = 0,$ this also implies
$$
M(\bar{\alpha}, \bar{\beta}) \succeq 0.
$$
Finally, by the definition of {\it good} oracle, $V(\bar{\alpha}, \bar{\beta}) \geq \nfrac{3}{4} \cdot \gamma.$ Hence, $(\bar{\alpha}, \bar{\beta})$ is a solution to $\dsdp(G,b, \nfrac{3}{16}\cdot\gamma).$
\end{proof}

\subsection{Projection Rounding}\label{app:round}

\begin{figure*}[h]
  \begin{tabularx}{\textwidth}{|X|}
    \hline
  \begin{enumerate}
  \item {\bf \textsc{Input:}} An embedding $\{v_i \in \R^d \}_{i \in V} ,$ $b\in (0,\nfrac{1}{2}].$
  \item Let $c = \Omega_b(1)$ be a constant to be fixed in the proof. 
\item For $t=1,2, \ldots, O(\log n)$:
\begin{enumerate}
\item Pick a unit vector $u$ uniformly at random from $\mathbb{S}^{d-1}$ and let $x
\in \R^n$ with $x_i \defeq \sqrt{d} \cdot  { u^T v_i}$.
\item Sort the vector $x.$ Assume w.l.og. that $x_1 \geq x_2 \geq \ldots \geq x_n$. Define $S_i \defeq \{j\in [n]:x_j \geq x_i\}$.
   
  \item  Let $S^{(t)} \defeq (S_i,\bar{S_i})$ which minimizes
  $ \phi(S_i)$ among sweep-cuts for which  $\vol(S_i) \in [c\cdot 2m, (1-c)\cdot 2m].$
\end{enumerate}
\item {\bf \textsc{Output:}} The cut $S^{(t)}$ of least conductance over all choices of $t.$ 
  \end{enumerate}\\
   \hline
    \end{tabularx}

  \caption{{\sc ProjRound}}
  \label{fig:rounding}
\end{figure*}

The description of the rounding algorithm {\sc ProjRound} is given in Figure \ref{fig:rounding}. We remark that during the execution of \alg the embedding $\{ v_i \in \R^d\}_{i \in V}$ will be represented by a projection over $d=O\left(\log n \right)$ random directions, so that it will suffice to take a balanced sweep cut of each coordinate vector.

We now present  the proof of Theorem \ref{thm:stdround}. 
The constants in this argument were not optimized to preserve the simplicity of the proof.

\subsubsection{Preliminaries.} We will make use of the following simple facts. Recall that for $y \in \R,$ $\sgn(y)=1$ if $y \geq 0,$ and $-1$ otherwise.

\begin{fact}\label{fct:ab}
For all $y,z \in \R,$ $(y+z)^2 \leq 2(y^2+z^2).$ \end{fact}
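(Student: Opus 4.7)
The plan is to verify the inequality by rearranging terms and recognizing a perfect square. Writing out $2(y^2+z^2) - (y+z)^2$ and expanding gives $2(y^2+z^2) - (y^2+2yz+z^2) = y^2 - 2yz + z^2 = (y-z)^2$, which is non-negative for any real $y,z$. This rearrangement immediately yields $(y+z)^2 \leq 2(y^2+z^2)$, with equality exactly when $y=z$.

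An equivalent and perhaps more memorable route is Cauchy--Schwarz applied to the vectors $(1,1)$ and $(y,z)$ in $\R^2$: one has $(y+z)^2 = \iprod{(1,1),(y,z)}^2 \leq \normt{(1,1)}^2 \cdot \normt{(y,z)}^2 = 2(y^2+z^2)$. A third derivation uses AM--GM on the non-negative reals $y^2$ and $z^2$ to obtain $2|yz| \leq y^2 + z^2$, and then $(y+z)^2 = y^2 + 2yz + z^2 \leq y^2 + 2|yz| + z^2 \leq 2(y^2+z^2)$.

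There is no substantive obstacle here: the statement is the standard $\ell_2$-style bound $(a+b)^2 \leq 2(a^2+b^2)$, a one-line consequence of $(a-b)^2 \geq 0$. In the context of the subsequent proof of Theorem \ref{thm:stdround}, I anticipate this fact being used to split squared differences of projected coordinates of the form $(x_i - x_j)^2$ into pieces attributable to $v_i - v_\avg$ and $v_\avg - v_j$ separately, so that contributions from the edge constraint $\E_{\{i,j\} \in E}\snormt{v_i - v_j} \leq 2\gamma$ and from the radius constraint on the roundable set $R$ can be controlled independently.
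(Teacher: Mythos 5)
Your primary derivation, $2(y^2+z^2) - (y+z)^2 = (y-z)^2 \geq 0$, is exactly the paper's one-line proof. The Cauchy--Schwarz and AM--GM alternatives are correct but unnecessary elaboration.
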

\begin{proof}
$2(y^2+z^2)-(y+z)^2=(y-z)^2 \geq 0.$ 
\end{proof}

\begin{fact}\label{fct:abs}
For all $y \geq z \in \R,$ $\left|\sgn(y)\cdot y^2-\sgn(z)\cdot z^2 \right| \leq  (y-z)(|y|+|z|).$ \end{fact}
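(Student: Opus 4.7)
The plan is to do a straightforward case split on the signs of $y$ and $z$, since there are only three relevant cases once we use the assumption $y \geq z$: (i) $0 \leq z \leq y$, (ii) $z \leq y \leq 0$, and (iii) $z \leq 0 \leq y$. In cases (i) and (ii) both sides actually agree exactly, and only case (iii) requires a short extra argument.

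First, in case (i) we have $\sgn(y) = \sgn(z) = 1$, so the left-hand side is $|y^2 - z^2| = (y-z)(y+z)$, which equals $(y-z)(|y|+|z|)$ since both $y, z \geq 0$. In case (ii), $\sgn(y) = \sgn(z) = -1$ (taking the convention $\sgn(0)=1$ is irrelevant since the expression vanishes when $y=0$), so the left-hand side is $|{-}y^2 + z^2| = z^2 - y^2 = (|z|-|y|)(|z|+|y|)$; using $|z| - |y| = -z - (-y) = y - z$ when $y, z \leq 0$, this is again $(y-z)(|y|+|z|)$.

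The only remaining case is (iii), where $z \leq 0 \leq y$. Here $\sgn(y) y^2 - \sgn(z) z^2 = y^2 + z^2 \geq 0$, so the left-hand side is $y^2 + z^2$. On the right, $(y-z)(|y|+|z|) = (y-z)(y + (-z)) = (y-z)^2 = y^2 - 2yz + z^2$. Since $y \geq 0$ and $z \leq 0$, $-2yz \geq 0$, and therefore $y^2 + z^2 \leq y^2 - 2yz + z^2$, which is precisely what is needed.

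I do not expect any step of this to be a genuine obstacle; the proof is essentially bookkeeping about signs. A slightly slicker unified write-up would be to note that the function $f(t) = \sgn(t)\,t^2 = t|t|$ is nondecreasing, so for $y \geq z$ the left-hand side equals $y|y| - z|z|$, and then to verify via a one-line algebraic identity that $(y-z)(|y|+|z|) - (y|y| - z|z|) = y|z| - z|y|$ which is $0$ when $y$ and $z$ have the same sign and equals $-2yz \geq 0$ when they have opposite signs. Either presentation is fine; I would likely pick the case-split version for transparency.
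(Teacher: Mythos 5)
Your proof is correct and follows essentially the same sign-based case analysis as the paper: your cases (i) and (ii) together are the paper's same-sign case (where equality holds), and your case (iii) is the paper's opposite-sign case, using the same observation that $y-z = |y|+|z|$ there. The alternative one-line identity you sketch at the end is a pleasant unification but does not change the substance.
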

\begin{proof} ${\rm } \\$

\begin{enumerate}
\item If $\sgn(y)=\sgn(z),$ then  $|\sgn(y)\cdot y^2-\sgn(z)\cdot z^2 | = |y^2-z^2|=(y-z)\cdot |y+z|=(y-z)(|y|+|z|)$ as $y \geq z.$
\item If $\sgn(y) \neq \sgn(y),$ then since $y \geq z,$ $(y-z)=|y|+|z|.$ 
Hence, $|\sgn(y)\cdot y^2-\sgn(z)\cdot z^2 | = y^2+z^2\leq (|y|+|z|)^2=(y-z)(|y|+|z|).$
\end{enumerate}
\end{proof}

\begin{fact}\label{fct:ab2}
For all $y\geq z \in \R,$ $(y-z)^2 \leq 2(\sgn(y)\cdot y^2-\sgn(z)\cdot z^2).$ \end{fact}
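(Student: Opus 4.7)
The plan is to prove Fact \ref{fct:ab2} by case analysis on the signs of $y$ and $z$ (given $y \geq z$), reducing each case either to the earlier Fact \ref{fct:ab} or to an elementary observation.

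In the case $y \geq z \geq 0$, we have $\sgn(y)\cdot y^2 - \sgn(z)\cdot z^2 = y^2 - z^2 = (y-z)(y+z)$. Since $z \geq 0$ implies $y+z \geq y-z \geq 0$, we get $2(y^2-z^2) \geq 2(y-z)^2 \geq (y-z)^2$. In the mixed case $y \geq 0 > z$, we have $\sgn(y)\cdot y^2 - \sgn(z)\cdot z^2 = y^2 + z^2$, and Fact \ref{fct:ab} (applied to $y$ and $-z$, noting $(y-z)^2 = (y+(-z))^2$) directly yields $(y-z)^2 \leq 2(y^2 + z^2)$. Finally, in the case $0 \geq y \geq z$, writing $y = -|y|$ and $z = -|z|$ with $|z| \geq |y|$ gives $\sgn(y)\cdot y^2 - \sgn(z)\cdot z^2 = z^2 - y^2 = (|z|-|y|)(|z|+|y|)$; since $|z|+|y| \geq |z|-|y| = y-z \geq 0$, again $2(z^2-y^2) \geq 2(y-z)^2 \geq (y-z)^2$.

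There is no real obstacle: the inequality is elementary and each of the three sign patterns reduces to either a difference-of-squares manipulation bounded below by $(y-z)^2$, or to the already-established Fact \ref{fct:ab}. I would present the proof as the three short cases above, with the mixed-sign case explicitly invoking Fact \ref{fct:ab} to keep the exposition uniform with the preceding facts.
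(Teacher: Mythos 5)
Your proof is correct, and in fact it is slightly more careful than the proof given in the paper. The paper splits into two cases according to whether $\sgn(y)=\sgn(z)$ or not, and in the same-sign case writes $(y-z)^2 = y^2+z^2-2yz \leq y^2+z^2-2z^2 = y^2-z^2$, justified by ``$y \geq z$.'' That step requires $z(y-z)\geq 0$, which holds when $z\geq 0$ but fails when both $y$ and $z$ are negative with $y>z$ (e.g.\ $y=-1$, $z=-2$ gives $(y-z)^2=1 \not\leq y^2-z^2=-3$), even though the Fact itself is still true there. Your finer three-way split according to the sign pattern $(+,+)$, $(+,-)$, $(-,-)$ avoids this pitfall: in the both-negative case you correctly pass to $|y|,|z|$, factor $z^2-y^2=(|z|-|y|)(|z|+|y|)$ with $|z|\geq|y|$, and conclude directly. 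The mixed-sign case in both your write-up and the paper's reduces to Fact \ref{fct:ab}, so the two arguments agree there. In short, the proposal is a correct (and arguably cleaner) version of the paper's argument, with the both-negative subcase handled explicitly rather than implicitly (and incorrectly) absorbed into the nonnegative one.
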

\begin{proof} ${\rm } \\$

\begin{enumerate}
\item If $\sgn(y)=\sgn(z),$  $(y-z)^2=y^2+z^2-2yz \leq y^2+z^2-2z^2=y^2-z^2$ as $y \geq z.$ Since $\sgn(y)=\sgn(z),$  $y^2-z^2 \leq 2(\sgn(y)\cdot y^2-\sgn(z)\cdot z^2).$
\item If $\sgn(y) \neq \sgn(z),$ $(y-z)^2=(|y|+|z|)^2 \leq 2(|y|^2+|z|^2) = 2(\sgn(y)\cdot y^2-\sgn(z)\cdot z^2).$ Here, we have used Fact \ref{fct:ab}. 
\end{enumerate}
\end{proof}

We also need the following standard facts.
\begin{fact}\label{fct:prob}
Let $v \in \R^d$ be a vector of length $\ell$ and $u$ a unit vector chosen uniformly at random in $\mathbb{S}^{t-1}$. Then,
\begin{enumerate}
\item $\E_u \; \left(v^T u\right)^2 = \frac{\ell^2}{d},$ and 
\item  for $0 \leq \delta \leq 1$,
$\Pr_u \left[ \sqrt{d} \cdot \abs{v^T u} \leq \delta \ell \right] \leq 3\delta.$
\end{enumerate}
\end{fact}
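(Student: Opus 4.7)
Both statements are invariant under rotations of $\R^d$, so I would begin by exploiting rotational symmetry to simplify. Since the distribution of $u$ is invariant under orthogonal transformations, for any fixed $v$ of length $\ell$ the random variable $v^T u$ has the same distribution as $\ell\cdot u_1$, the first coordinate of $u$, scaled by $\ell$. Thus both claims reduce to statements about the single coordinate $u_1$ of a uniformly random point on $\mathbb{S}^{d-1}$.

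For part (1), the plan is to observe that by rotational symmetry of $u$, the matrix $\E[uu^T]$ commutes with every orthogonal matrix, hence is a scalar multiple of $I$. Taking traces and using $\norm{u}_2^2=1$ gives $\E[uu^T]=\tfrac{1}{d}I$. Therefore
\begin{equation*}
\E_u\,(v^T u)^2 \;=\; v^T\,\E[uu^T]\,v \;=\; \tfrac{1}{d}\,\norm{v}_2^2 \;=\; \tfrac{\ell^2}{d}\mper
\end{equation*}

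For part (2), after the reduction above the goal is to bound $\Pr\bigl[\,\sqrt{d}\,|u_1|\le \delta\,\bigr]$. My plan is to use the explicit marginal density of $u_1$ on $[-1,1]$, namely
\begin{equation*}
f(t) \;=\; \frac{1}{B\bigl(\tfrac{d-1}{2},\tfrac{1}{2}\bigr)}\,(1-t^2)^{(d-3)/2}\mcom
\end{equation*}
where $B$ is the Beta function. For $d\ge 3$ the factor $(1-t^2)^{(d-3)/2}$ is bounded by $1$, so $f$ attains its maximum at $t=0$. Using Gautschi's inequality $\Gamma(d/2)/\Gamma((d-1)/2)\le\sqrt{d/2}$, the peak density is at most $\sqrt{d/(2\pi)}$. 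Integrating over $[-\delta/\sqrt{d},\delta/\sqrt{d}]$ gives
\begin{equation*}
\Pr\bigl[\,\sqrt{d}\,|u_1|\le \delta\,\bigr] \;\le\; \frac{2\delta}{\sqrt{d}}\cdot\sqrt{\tfrac{d}{2\pi}} \;=\; \delta\sqrt{\tfrac{2}{\pi}} \;\le\; 3\delta\mper
\end{equation*}
The edge case $d=2$ can be handled separately by parametrizing $u=(\cos\theta,\sin\theta)$ with $\theta$ uniform on $[0,2\pi)$ and bounding the density of $\cos\theta$ near zero, which is $\tfrac{1}{\pi\sqrt{1-t^2}}\le 2/\pi\sqrt{3}$ for $|t|\le 1/2$, giving a bound well within $3\delta$.

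The main obstacle is really only part (2), and it is bookkeeping rather than conceptual: producing a universal constant (here $3$) that works for all dimensions $d\ge 2$ requires a uniform upper bound on $\Gamma(d/2)/\Gamma((d-1)/2)$ in terms of $\sqrt{d}$. An equivalent and perhaps cleaner route that avoids the Beta-function calculation is the Gaussian representation $u=g/\norm{g}_2$ with $g\sim N(0,I_d)$, reducing the bound to $\Pr\bigl[\,|g_1|\le \delta\norm{g}_2/\sqrt{d}\,\bigr]$; splitting according to whether $\norm{g}_2^2 \ge d/2$ (which holds with probability exponentially close to $1$) and using the fact that the standard Gaussian density is at most $1/\sqrt{2\pi}$ yields the same $O(\delta)$ bound with an explicit constant. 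Either route is elementary once rotational invariance has been invoked.
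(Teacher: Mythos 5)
The paper treats this as a standard fact and gives no proof of its own, so there is nothing to compare against; your argument stands on its own and is correct. Part (1) via $\E[uu^T]=\tfrac1d I$ from rotational invariance is exactly the standard route, and part (2) via the explicit marginal density of $u_1$ together with Gautschi's bound $\Gamma(d/2)/\Gamma((d-1)/2)\le\sqrt{d/2}$ gives $\delta\sqrt{2/\pi}\le 3\delta$ for $d\ge 3$, which is more than enough. One small remark on your $d=2$ edge case: you bound the density of $\cos\theta$ only on $|t|\le 1/2$, but the interval you must integrate over is $|t|\le\delta/\sqrt2$, which exceeds $1/2$ when $\delta>1/\sqrt2$; this costs nothing, since for $\delta\ge 1/3$ the claimed bound $3\delta\ge 1$ is trivial, but it is worth saying explicitly (or simply bound the density by $\sqrt2/\pi$ on $|t|\le 1/\sqrt2$). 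The Gaussian representation $u=g/\norm{g}_2$ you mention is an equally standard alternative and would also be fine.
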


%\begin{fact}[Markov's Inequality]\label{fct:markov}
%Let $X$ be a non-negative random variable. For any $t>0,$ $\Pr[X
%\geq t] \leq \frac{\E[X]}{t}.$
%\end{fact}

\begin{fact}\label{fct:antimarkov}
Let $Y$ be a non-negative random variable such that $\Pr [Y \leq K] =1$  and $\E[Y] \geq \delta.$ Then,   
$$\Pr[Y \geq \delta/2] \geq \frac{\delta }{2K }.$$  
\end{fact}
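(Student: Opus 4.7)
The plan is to prove this reverse-Markov inequality by partitioning the expectation $\E[Y]$ according to whether $Y$ lies below or above the threshold $\delta/2$, bounding each piece using the two hypotheses, and then rearranging to isolate $\Pr[Y \geq \delta/2]$.

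Concretely, first I would decompose
\[
\E[Y] = \E\left[Y \cdot \Ind_{Y < \delta/2}\right] + \E\left[Y \cdot \Ind_{Y \geq \delta/2}\right].
\]
For the first summand, on the event $\{Y < \delta/2\}$ the integrand is pointwise bounded by $\delta/2$, and together with $Y \geq 0$ this gives
\[
\E\left[Y \cdot \Ind_{Y < \delta/2}\right] \leq \tfrac{\delta}{2} \cdot \Pr[Y < \delta/2] \leq \tfrac{\delta}{2}.
\]
For the second summand, the almost-sure bound $\Pr[Y \leq K] = 1$ gives $Y \leq K$ on the event $\{Y \geq \delta/2\}$, so
\[
\E\left[Y \cdot \Ind_{Y \geq \delta/2}\right] \leq K \cdot \Pr[Y \geq \delta/2].
\]

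Combining these two bounds with the hypothesis $\E[Y] \geq \delta$ yields
\[
\delta \;\leq\; \E[Y] \;\leq\; \tfrac{\delta}{2} + K \cdot \Pr[Y \geq \delta/2],
\]
and solving for the probability of interest gives exactly $\Pr[Y \geq \delta/2] \geq \nfrac{\delta}{2K}$, as desired.

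There is essentially no obstacle here: this is the standard "reverse Markov" (or zeroth-moment Paley--Zygmund-type) argument, and the entire proof is the two-line splitting above. The only subtleties worth noting are that the two indicator events are complementary so no probability mass is lost in the decomposition, and that non-negativity of $Y$ is used implicitly to drop $\Pr[Y < \delta/2] \leq 1$ without worrying about sign issues in the first summand.
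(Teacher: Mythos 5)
Your proof is correct: the decomposition of $\E[Y]$ over the complementary events $\{Y < \delta/2\}$ and $\{Y \geq \delta/2\}$, bounding the first piece by $\delta/2$ and the second by $K \cdot \Pr[Y \geq \delta/2]$, gives exactly $\delta \leq \delta/2 + K \cdot \Pr[Y \geq \delta/2]$ and hence the claim. The paper states this fact without proof as a standard reverse-Markov inequality, and your argument is precisely the canonical one.
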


The following lemma about projections will be crucial in the proof of Theorem \ref{thm:stdround}. It is a simple adaptation of an argument appearing in \cite{ARV}.
\begin{lemma}[Projection]\label{lem:projection}
%Given a graph $G(V=[n],E),$ a target balance parameter $b \in %(0,\nfrac{1}{2}]$
Given a roundable embedding $\{v_i \in \R^d\}_{i\in V},$  consider the embedding $x \in \R^n$ such that $x_i \defeq  \sqrt{d} \cdot  {u^T v_i},$ where $u \in \mathbb{S}^{d-1},$ and assume without loss of generality that $x_1 \geq \ldots \geq  x_n.$    Then, there exists  $c \in (0,b]$ such that with  probability $\Omega_b(1)$ over the choice of $u \in \mathbb{S}^{d-1}$, the following conditions hold simultaneously:
\begin{enumerate}
\item $\E_{\{i,j\} \in E} (x_i - x_j)^2 \leq O_b\left(\E_{\{i,j\} \in E} \norm{v_i - v_j}^2 \right) = O_b (\gamma)$,
\item $\E_{i \sim \mu} (x_i - x_\avg)^2 = O_b(1),$ and 
\item there exists $1\leq l\leq n$ with $\vol(\{1,\ldots,l\}) \geq c\cdot \vol(G)$ and, there exists $l \leq r \leq n$ such that $\vol(\{r, \ldots, n\}) \geq c\cdot  \vol(G)$ such that $x_l - x_r \geq \Omega_b(1)$.
\end{enumerate}
\end{lemma}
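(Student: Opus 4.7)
The plan is to establish each of the three conditions with probability $\Omega_b(1)$ and then take a union bound. Conditions (1) and (2) are immediate: Fact~\ref{fct:prob}(i) gives $\E_u(x_i - x_j)^2 = \|v_i - v_j\|^2$, so averaging over edges and over $\mu$ and invoking the first two roundability constraints yields $\E_u\E_{\{i,j\}\in E}(x_i-x_j)^2 \leq 2\gamma$ and $\E_u\E_{i \sim \mu}(x_i - x_\avg)^2 = 1$, and Markov's inequality gives each of (1) and (2) with probability $\geq 99/100$.

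For (3), the plan is to find a two-sided gap in the distribution of $x$ weighted by $\mu_R$, where $R$ is as in Definition~\ref{def:roundable} (note $\mu(R) = \Omega_b(1)$ by Markov applied to the second roundability condition). Let $M_R := \E_{i \sim \mu_R} x_i$ and $\sigma^2 := \E_{i \sim \mu_R}(x_i - M_R)^2$. Applying Fact~\ref{fct:prob}(i) to the third roundability constraint gives $\E_u \sigma^2 \geq 1/128$. Because $\|v_i - v_\avg\|^2 \leq 32(1-b)/b$ on $R$, the standard spherical fourth-moment identity $\E_u (u^T w)^4 = 3\|w\|^4/(d(d+2))$, combined with the expansion $(x_i - M_R)^4 \leq 8(x_i - x_\avg)^4 + 8(x_\avg - M_R)^4$ and the analogous diameter bound on $\|v_\avg - \E_{\mu_R} v_i\|$, yields $\E_u \E_{\mu_R}(x_i - M_R)^4 = O_b(1)$ and in particular $\E_u \sigma^4 = O_b(1)$ (by Jensen). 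Paley--Zygmund applied to $\sigma^2$ over the randomness of $u$ then gives $\sigma^2 \geq 1/256$ with probability $\Omega_b(1)$, while Markov on $\E_{\mu_R}(x_i - M_R)^4$ ensures $\E_{\mu_R}(x_i - M_R)^4 \leq K'_b = O_b(1)$ with probability $\geq 99/100$; both hold simultaneously with probability $\Omega_b(1)$.

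Conditioning on this event, a further Paley--Zygmund applied to $(x_i - M_R)^2$ under $\mu_R$ gives $\mu_R\{|x_i - M_R| \geq \sigma/2\} \geq \Omega_b(1)$. By pigeonhole at least one tail, say $\mu_R\{x_i \geq M_R + \sigma/2\}$, has mass $\Omega_b(1)$, whence $\E_{\mu_R}(x_i - M_R)_+ \geq \Omega_b(\sigma)$. Since $\E_{\mu_R}(x_i - M_R) = 0$, this equals $\E_{\mu_R}(M_R - x_i)_+$; as $\E_{\mu_R}((M_R - x_i)_+)^2 \leq \sigma^2$, yet another Paley--Zygmund produces $\mu_R\{x_i \leq M_R - c_b \sigma\} \geq \Omega_b(1)$ for some constant $c_b > 0$. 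The two tails are separated by a gap of size $\Omega_b(\sigma) = \Omega_b(1)$, and each has $\mu$-measure $\Omega_b(1)$ because $\mu(R) = \Omega_b(1)$, yielding the required indices $l$ and $r$ (with $c = \min(\Omega_b(1), b)$).

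The main obstacle is the final step. The variance lower bound alone does not force two-sided mass: the projected distribution on $R$ could concentrate almost all mass tightly on one side of $M_R$ and place a tiny, far-away piece on the other (with that tail producing all the variance). The resolution combines the fourth-moment bound with the zero-mean identity $\E_{\mu_R}(x_i - M_R) = 0$, which forces any large one-sided escape to be mirrored on the opposite side, converting a one-sided Paley--Zygmund into a symmetric gap.
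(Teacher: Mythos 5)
Your proof of parts~(1) and~(2) matches the paper's (Fact~\ref{fct:prob}(1) plus Markov). For part~(3) you take a genuinely different and, as far as I can tell, correct route. The paper converts the second-moment lower bound on $\mu_R$ into a \emph{first}-moment lower bound using the $O_b(1)$ diameter of $R$, then applies the one-dimensional anti-concentration of spherical projections (Fact~\ref{fct:prob}(2)) pairwise together with the reverse-Markov Fact~\ref{fct:antimarkov}, and finally splits around the median of $\{x_i\}$ to locate the gap. You instead go up to the fourth moment: the spherical identity $\E_u(u^Tw)^4=3\norm{w}^4/d(d+2)$ plus the diameter bound on $R$ gives $\E_u\E_{\mu_R}(x_i-M_R)^4=O_b(1)$, a first Paley--Zygmund over $u$ pins down $\sigma^2\geq\Omega_b(1)$, a second Paley--Zygmund over $\mu_R$ gives a one-sided tail of mass $\Omega_b(1)$, and the zero-mean identity $\E_{\mu_R}(x-M_R)_+=\E_{\mu_R}(M_R-x)_+$ together with a third Paley--Zygmund mirrors it to the other side — which is exactly what resolves the real obstacle you correctly identify (that a variance lower bound alone does not force two-sided mass). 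Both arguments ultimately hinge on the $O_b(1)$ diameter of $R$, but the paper uses it to trade second moment for first moment whereas you use it to control the fourth moment; your version is more modular and would apply to any projection with comparable second and fourth moments, while the paper's is slightly shorter because it leans directly on the tailored anti-concentration fact. One minor bookkeeping point: the ``$\geq 99/100$'' for parts~(1) and~(2) implicitly assumes the $\Omega_b(1)$ success probability $\rho$ of part~(3) exceeds $2/100$; the clean fix, as in the paper, is to set the Markov thresholds to fail with probability $\rho/5$ rather than a fixed $1/100$.
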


\begin{proof}
We are going to lower bound the probability, over $u$,  of each of (1), (2) and (3) in the lemma and then apply the union bound. 

\paragraph{Part (1).} 
By applying Fact \ref{fct:prob} to $v = v_i - v_j$ and noticing $\sqrt{d} \cdot \abs{v^T u} = \abs{x_i - x_j}$ , we have
$$ \E_u \E_{\{i,j\} \in E} (x_i - x_j)^2 = \E_{\{i,j\} \in E} \norm{v_i - v_j}^2.$$
Hence, by Markov's Inequality, for some $p_1$ to be fixed later
$$ \Pr_u\left[\E_{\{i,j\} \in E} (x_i - x_j)^2 \geq \nfrac{1}{p_1} \cdot \E_{\{i,j\} \in E} \norm{v_i - v_j}^2\right] \leq p_1. $$

\paragraph{Part (2).}
$$ \E_{u} \E_{i \sim \mu} (x_i - x_\avg)^2 \stackrel{{\rm Fact \; \ref{fct:prob}-(1)}}{=}  \E_{u} \E_{i \sim \mu} \norm{v_i - v_\avg}^2 \stackrel{{\rm roundability}}{=} 1.$$ Hence, for some $p_2$ be fixed later 
$$  \Pr_u\left[ \E_{i \sim \mu} (x_i - x_\avg)^2 \geq \nfrac{1}{p_2} \cdot  \E_{i \sim \mu} \norm{v_i - v_\avg}^2 \right] \leq p_2.$$
 
\paragraph{Part (3).}
Let $R \defeq \{i \in V : \norm{v_i -v_\avg}^2 \leq 32 \cdot \nfrac{(1-b)}{b}\}.$ 
Let $\sigma \defeq 4 \cdot \sqrt{2} \sqrt{\nfrac{(1-b)}{b}}.$
By Markov's Inequality, $\mu(\bar{R}) \leq \nfrac{1}{\sigma^2}.$
As $\{v_i\}_{i \in V}$ is roundable, for all $i,j \in R,$ $ \norm{v_i - v_j} \leq 2 \sigma.$ Hence,   $\norm{v_i - v_j} \geq \nfrac{1}{2 \sigma} \cdot  \norm{v_i - v_j}^2$ for such $i,j \in R.$ 
This, together with the roundability of $\{v_i\}_{i \in V}$, implies that
$$ \E_{\{i,j\} \sim \mu_R \times \mu_R} \norm{v_i - v_j}   \geq  \nfrac{1}{128\sigma} .$$
For any $k \in R$, we can apply the triangle inequality for the Euclidean norm as follows
\begin{align*}
 \E_{\{i,j\} \sim \mu_R \times \mu_R}   \norm{v_i - v_j}    \leq  \E_{\{i,j\} \sim \mu_R \times \mu_R}    \left(\norm{v_i - v_k} + \norm{v_k - v_j}\right) \\
\leq 2 \cdot  \E_{i \sim \mu_R}   \norm{v_i - v_k}.
\end{align*}
Hence, for all $k \in R$
$$\E_{i \sim \mu_R}     \norm{v_i - v_k} \geq  \nfrac{1}{256\sigma}.$$
Let $R_k$ be the set $\left\{i \in R : \norm{v_i - v_k} \geq \nfrac{1}{512\sigma}  \right\}$. 
Since $\norm{v_i - v_k} \leq 2\sigma$, applying Fact \ref{fct:antimarkov} yields that, for all $k \in R$, 
$$\Pr_{i \sim \mu_R }\left[i \in R_k\right] \geq \nfrac{1}{1024\sigma^2}. $$
For all vertices $i \in R_k$, by Fact \ref{fct:prob}
$$\Pr_u\left[\abs{x_k - x_i}\geq \nfrac{1}{9} \cdot  \nfrac{1}{512\sigma} = \nfrac{1}{4608\sigma} \right]  \geq \frac{2}{3}.$$
Let $\delta \defeq \nfrac{1}{2}\cdot \nfrac{1}{4608\sigma}  = \nfrac{1}{9216\sigma}.$ Consider the event $\cE \defeq \{  i \in R_k \wedge \abs{x_i - x_k} \geq 2 \cdot \delta \}.$
Then,
\begin{align*}
\Pr_{u, \;\{i,k\} \sim \mu_R \times \mu_R} [\cE] =  \Pr_{\{i,k\} \sim \mu_R \times \mu_R}[i \in R_k ] \cdot \Pr_u[\abs{x_i - x_k} \geq 2 \cdot \delta \;|\; i \in R_k] 
\\
\geq \frac{1}{1024\sigma^2} \cdot \frac{2}{3} = \frac{1}{1536\sigma^2} \defeq \rho.
\end{align*}

Hence, from Fact \ref{fct:antimarkov}, with probability at least $\nfrac{\rho}{2}$ over directions $u,$ for a fraction $\nfrac{\rho}{2}$ of pairs $\{i,k\} \in R \times R,$  $\abs{x_k - x_i} \geq 2\cdot \delta.$ 
Let $\nu$ be the median value of $\{x_i\}_{i \in V}$.
Let $L \defeq \left\{i : x_i \leq \nu - \delta \right\} $ and $H \defeq \left\{i : x_i \geq \nu + \delta \right\}$. Any pair $\{i,j\} \in R \times R$ with $\abs{x_i - x_j} \geq 2\cdot \delta$ has at least one vertex in $L \cup H$. 
Hence,
$$\mu( L \cup H) \geq \nfrac{1}{2} \cdot \nfrac{\rho}{2} \cdot \mu(R)^2 \geq \nfrac{\rho}{4} \cdot\left(\nfrac{\sigma^2 - 1}{\sigma^2}\right)^2 \geq \nfrac{\rho}{16} = \Omega_b(1) .$$
Assume $\mu(L) \geq \nfrac{\rho}{32} ,$ otherwise, apply the same argument to $H$.
Let $l$ be the largest index in $L.$ 
For all $i \in L$ and $j$ such that $x_j \geq \nu$, we have $\abs{x_i - x_j} \geq \delta$. (Similarly, let $r$ be the smallest index in $H$.)
This implies that, 
$$\abs{x_{l} - x_{\floor{\nfrac{n}{2}}}} \geq \delta$$
with probability at least $\nfrac{\rho}{2} = \Omega_b(1)$, satisfying the required condition.
Let $p_3$ be the probability that this event does not take place. Then,
$$p_3 \leq 1 - \nfrac{\rho}{2} .$$

\noindent
To conclude the proof, notice that the probability that all three conditions do not hold simultaneously is, by a union bound, at most $p_1 + p_2 + p_3$. Setting $p_1 = p_2 = \nfrac{\rho}{5} = \Omega_b(1)$, we satisfy the first and third conditions and obtain
$$ p_1 + p_2 + p_3 \leq 1 - \rho\cdot \left(\nfrac{1}{2} - \nfrac{1}{5} - \nfrac{1}{5}\right)  \leq 1 - \nfrac{\rho}{10}.$$
Hence, all conditions are satisfied at the same time with probability at least $ \nfrac{\rho}{10} = \Omega_b(1)$.
\end{proof}
\noindent
From this proof, it is possible to see that the parameter $c$ in our rounding scheme should be set to $\nfrac{\rho}{32}.$

We are now ready to give a proof of Theorem \ref{thm:stdround}. It is essentially a variation of the proof of Cheeger's Inequality, tailored to produce balanced cuts.
%\begin{lemma}\label{lem:rounding}
%Given a graph $G(V,E)$ and a target balance parameter $b \in %(0,\nfrac{1}{2}],$ 
%Let $c,$ $x$ and $\{v_i\}_{i \in V}$  be as in  Lemma \ref{lem:projection} %with $\{v_i\}_{i \in V}$ having objective value $\gamma$. Further, let $\phi$ %be the conductance of the best $c$-balanced sweep cut of $x$. Then, 
%$$
%\phi \leq O_b\left(\sqrt{\gamma}\right).
%$$
%\end{lemma}

%\Authornote{Lorenzo}{modify the proofs of these two lemma to give required result}

\begin{proof}[Proof of Theorem \ref{thm:stdround}]
For this proof, assume that $x$ has been translated so that $x_\avg = 0.$ Notice that the guarantees of $\ref{lem:projection}$ still apply.
Let $x, l, r$ and $c$ be as promised by Lemma \ref{lem:projection}. 
For $z \in \R,$ let $\sgn(z)$ be $1$ if $z \geq 0$ and $-1$ otherwise.
Let 
$$ y_i \defeq  \sgn (x_i) \cdot  x_i^2.$$
 Hence,
\begin{align*}
\E_{\{i,j\} \in E} |y_i - y_j| & \stackrel{\rm Fact \; \ref{fct:abs}}{\leq}  \E_{\{i,j\} \in E} (|x_i - x_j|)\cdot (|x_i|+|x_j|) \\
& \leq \sqrt{ \E_{\{i,j\} \in E} (x_i - x_j)^2\cdot \E_{\{i,j\} \in E} (|x_i|+|x_j|)^2} \\
 & \stackrel{\rm Fact \;  \ref{fct:ab}}{\leq}   \sqrt{2\cdot  \E_{\{i,j\} \in E} (x_i - x_j)^2\cdot \E_{\{i,j\} \in E} (x_i^2+x_j^2)} \\
& =  \sqrt{2\cdot  \E_{\{i,j\} \in E} (x_i - x_j)^2\cdot \frac{2m}{m} \cdot \E_{i \sim \mu}{x_i}^2} \\
%
%&  \stackrel{{\eqref{eq:avg}}}{=}  & \sqrt{2\cdot  \E_{\{i,j\} \in E} (x_i - x_j)^2\cdot \frac{\vol(G)}{|E|}  \cdot \frac{1}{2} \cdot  \E_{i \leftarrow \mu} \E_{j \leftarrow \mu} (x_i - x_j)^2 }\\
%
& =  \sqrt{4\cdot  \E_{\{i,j\} \in E} (x_i - x_j)^2\cdot \E_{i \sim \mu}{x_i}^2}  \\
&  \stackrel{{\rm Lemma \; \ref{lem:projection}-(1),(2)}}{\leq} O_b\left(\sqrt{\gamma}\right).  
\end{align*}
Now we lower bound $\E_{\{i,j\} \in E} |y_i - y_j|.$ Notice that if $x_i \geq x_j,$ then $y_i \geq y_j$ and vice-versa. Hence, 
$$y_1 \geq \ldots \geq y_n.$$
Let $S_i \defeq \{1,\ldots,i\}$  and let $\phi$ be the minimum conductance of $S_i$ over all $l \leq i \leq .r$   
%This cut is invariant under shifting the vector $x,$ and is also the least conductance from among all $b$-balanced sweep cuts of $y.$ 
\begin{align*}
\E_{\{i,j\} \in E} |y_i - y_j| &=  \frac{1}{|E|} \sum_{i=1}^{n-1} |E(S_i,\bar{S}_i)| \cdot (y_{i}-y_{i+1}) \\
& \geq  \phi  \cdot   \sum_{ l  \leq i \leq r}  \frac{\min \{ \vol(S_i), \vol (\bar{S}_i)\}}{|E|}(y_i-y_{i+1}) \\
& \stackrel{{\rm Lemma \; \ref{lem:projection}-(3)}}{=}  \Omega_b(1) \cdot \phi \cdot  \sum_{ l \leq i \leq r} (y_i-y_{i+1}) \\
& {\geq}  \Omega_b(1) \cdot   \phi \cdot (y_l - y_r) \\
& \stackrel{{\rm Fact \; \ref{fct:ab2}}}{\geq}   \Omega_b(1) \cdot \phi \cdot  (x_l - x_r)^2 \\
& \stackrel{{\rm Lemma \; \ref{lem:projection}}}{\geq}   \Omega_b(1) \cdot  \phi .
\end{align*}
Hence, $\phi \leq O_b(\sqrt{\gamma})$ with constant probability over the choice of projection vectors $u.$ Repeating the projection $O(\log n)$ times and picking the best balanced cut found yields a high probability statement. 
Finally, as the embedding is in $d$ dimensions, it takes $\tilde{O}(nd)$ time to compute the projection. After that, the one-dimensional embedding can be sorted in time $\tilde{O}(n)$ and the conductance of the relevant sweep cuts can be computed in time $O(m),$ so that the total running time is $\tilde{O}(nd + m).$
\end{proof}

\subsection{Proof of Lemma \ref{lem:approx}}\label{app:exp}

\subsubsection{Preliminaries}
For the rest of this section the norm notation will mean the norm in the subspace $\cS_D.$ Hence $\norm{A}=\norm{\cI A \cI}.$
We will need the following lemmata.

\begin{lemma}[Johnson-Lindenstrauss]\label{lem:jl}
Given an embedding $\{v_i \in \R^n\}_{i \in V}$, $V=[n],$ let $u_1, u_2, \ldots, u_k$, be vectors sampled independently uniformly from the $n-1$-dimensional sphere of radius $\sqrt{\nfrac{n}{k}}.$ Let $U$ be the $k \times t$ matrix having the vector $u_i$ as $i$-th row and let $\tilde{v}_i \defeq  U v_i$. Then, for $k_\delta \defeq O(\nfrac{\log n}{\delta^2}),$ for all $i, j \in V$ 
$$
(1 - \delta) \cdot \norm{v_i - v_j}^2 \leq \norm{\tilde{v}_i - \tilde{v}_j}^2 \leq (1 + \delta)\cdot  \norm{v_i - v_j}^2
$$
and
$$
(1 - \delta) \cdot \norm{v_i}^2 \leq \norm{\tilde{v}_i}^2 \leq (1 + \delta) \cdot\norm{v_i}^2.
$$
\end{lemma}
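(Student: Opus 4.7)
The plan is to build $\tilde U_\e$ by composing two now-standard approximations of the operator $U_\e(M)$, following the template of \cite{Kthesis}: a Johnson--Lindenstrauss sketch to collapse the ambient dimension to $O(\log n)$, and a polynomial approximation of the matrix exponential so that each matrix-vector product with $e^{-2m\e\,\Degin M \Degin}$ reduces to $\tilde O(1)$ products with $M$ itself. The starting observation is the factorization $U_\e(M) = WW^T$ with $W \defeq \sqrt{2m/Z}\,\Degin e^{-m\e\,\Degin M \Degin}$ and $Z \defeq \cI\bullet e^{-2m\e\,\Degin M \Degin}$, which exhibits an exact $n$-dimensional embedding $v_i \defeq W^T e_i$ whose Gram matrix is $U_\e(M)$.

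Concretely, I would sample a JL matrix $Q\in\R^{d\times n}$ as in Lemma \ref{lem:jl} with $d = O(\log n)$ for distortion $\delta = 1/128$, let $\Pi$ be a truncated Chebyshev approximation to $e^{-\cdot}$ on an interval containing the spectrum of $m\e\,\Degin M \Degin$, and define $\tilde v_i \defeq \sqrt{2m/\widetilde Z}\, Q\,\Degin \Pi(m\e\,\Degin M \Degin)\, e_i$, where $\widetilde Z$ is computed by a final explicit rescaling that uses $\cI = 2m\,\Degin L(K_V)\Degin$ (Fact \ref{fct:identity}) to enforce $L(K_V)\bullet \tilde X = 1$ exactly. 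Positivity in Part (1) and the $O(\log n)$-dimension claim in Part (2) are then immediate: $\tilde X$ is a Gram matrix of the $\tilde v_i$, and the $\tilde v_i$ live in $\R^d$ by construction. For Part (3), evaluating $\Pi$ against a fixed vector costs $\deg(\Pi)$ multiplications by $\Degin M \Degin$ (which amounts to products with $M$ since $\Degin$ is diagonal), and this is performed $d = O(\log n)$ times, for total cost $\tilde O(\deg(\Pi)\cdot t_M + dn)$.

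For the approximation bound in Part (4), I would decompose the error into a JL piece and a polynomial-approximation piece. Lemma \ref{lem:jl} preserves $\|v_i - v_j\|^2$ and $\|v_i - v_\avg\|^2$ multiplicatively within $(1\pm 1/64)$ simultaneously for all $i,j \in V$; since $L(H)\bullet X$ is a positively weighted sum of the former (Fact \ref{fct:mean}) and $R_i\bullet X = \|v_i - v_\avg\|^2$, and since linearity of $Q$ gives $\tilde v_\avg = Q v_\avg$, this yields the claimed multiplicative $(1\pm 1/64)$ factors. The polynomial-approximation piece contributes an additive spectral-norm error $\|\Pi - e^{-m\e\,\Degin M \Degin}\| = 1/\poly(n)$, which, pushed through the bilinear forms $L(H)\bullet\cdot$ and $R_i\bullet\cdot$ (both of polynomially bounded operator norm), produces the additive $\tau = O(1/\poly(n))$ in both inequalities.

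The main technical obstacle is ensuring that a polynomial $\Pi$ of degree $\tilde O(1)$ suffices to hit $1/\poly(n)$ accuracy while keeping Part (3) within $\tilde O(t_M + n)$ time; this requires an a priori bound on $\|m\e\,\Degin M \Degin\|$ on $\cS_D$. In the context of \alg, where this lemma is invoked on $M = \sum_{i < t} P^{(i)}$, the good-oracle width bound $-\gamma L(K_V) \preceq M(\alpha^{(t)},\beta^{(t)})\preceq 3L(K_V)$ combined with Fact \ref{fct:identity} forces $\|\Degin P^{(t)}\Degin\|$ to be $O(\e/m)$, so $\|2m\e\,\Degin M\Degin\|$ remains polylogarithmic across all $T = O(\log n/\gamma)$ iterations. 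A standard Chebyshev-approximation bound then produces a polynomial $\Pi$ of polylogarithmic degree, delivering the running-time guarantee. This is essentially the argument of \cite{Kthesis} specialized to the normalization used here.
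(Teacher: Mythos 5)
Your proposal does not prove the statement you were asked to prove. The statement is Lemma~\ref{lem:jl} itself --- the Johnson--Lindenstrauss guarantee that a random $k_\delta \times n$ projection with $k_\delta = O(\nfrac{\log n}{\delta^2})$ preserves all the quantities $\norm{v_i - v_j}^2$ and $\norm{v_i}^2$ up to a $(1\pm\delta)$ factor. What you have written is instead a proof sketch of Lemma~\ref{lem:approx} (the construction and analysis of $\tilde U_\e$ via a JL sketch plus a polynomial approximation of the matrix exponential), and in the middle of it you \emph{invoke} Lemma~\ref{lem:jl} as a black box (``Lemma~\ref{lem:jl} preserves $\norm{v_i-v_j}^2$ \dots multiplicatively within $(1\pm 1/64)$''). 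As an argument for Lemma~\ref{lem:jl} this is circular; as an argument for Lemma~\ref{lem:approx} it is aimed at the wrong target. (For what it is worth, the paper does not reprove Lemma~\ref{lem:jl} either --- it is quoted as a standard fact and used inside the proof of Lemma~\ref{lem:approx} in Section~\ref{app:exp} --- but a blind proof of this statement still has to supply the concentration argument rather than cite the lemma being proved.)

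What is actually required is the standard two-step argument: (i) for a single fixed vector $v$, show that the random variable $\norm{Uv}^2$, where the rows $u_1,\dots,u_k$ are sampled uniformly from the sphere of radius $\sqrt{\nfrac{n}{k}}$, has expectation $\norm{v}^2$ (this is the computation behind Fact~\ref{fct:prob}(1)) and concentrates: $\Pr\bigl[\,\abs{\norm{Uv}^2 - \norm{v}^2} > \delta \norm{v}^2\,\bigr] \le e^{-\Omega(\delta^2 k)}$, via concentration of measure on the sphere or a moment-generating-function bound for the projection of a fixed vector onto a uniformly random subspace; and (ii) a union bound over the at most $\binom{n}{2} + n \le n^2$ vectors $\{v_i - v_j\}_{i,j}$ and $\{v_i\}_i$, which is exactly what forces $k_\delta = O(\nfrac{\log n}{\delta^2})$ so that $n^2 e^{-\Omega(\delta^2 k_\delta)} = o(1)$. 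None of this appears in your write-up, and the material you do include (Chebyshev approximation of $e^{-x}$, the width bound on $M(\alpha^{(t)},\beta^{(t)})$, the rescaling enforcing $L(K_V)\bullet\tilde X = 1$) is irrelevant to the lemma at hand.
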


\begin{lemma}[\cite{Kthesis}]\label{lem:expv} 
There exists an algorithm {\sf EXPV} which, on input of a matrix $A \in \R^{n \times n}$, a vector $u \in \R^n$ and a parameter $\eta$, computes a vector $v \in \R^n$, such that $\norm{v - e^{-A}u} \leq \normt{e^{-A}} \cdot \eta$ in time $O(t_A \log^3(\nfrac{1}{\eta})).$
\end{lemma}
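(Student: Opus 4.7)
}
The plan is to produce $v$ as $v = p(A)\,u$ for a polynomial $p$ of low degree approximating the scalar function $e^{-x}$ well on the spectral range of $A$. The output $v$ can then be computed by repeatedly applying $A$ to intermediate vectors; if $\deg p = d$, this uses $O(d)$ matrix-vector multiplications, each of cost $t_A$, plus $O(n)$ work per step for the vector additions. The error analysis follows from the spectral theorem: if $|p(x)-e^{-x}| \le \delta$ on the spectrum of $A$, then $\norm{p(A)u - e^{-A}u} \le \delta \norm{u}$, and one can convert this to a bound in terms of $\normt{e^{-A}}$ by proper rescaling.

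\medskip
\noindent
First, I would reduce to the case of a matrix with bounded spectral norm via a scaling/squaring strategy. Choose $k$ with $2^k \asymp \log(1/\eta) + \log(1+\norm{A})$ so that $B \defeq A/2^k$ satisfies $\norm{B} = O(1)$, and write $e^{-A}u = (e^{-B})^{2^k}\,u$. To realize this action, it suffices to approximate $e^{-B}$ by a polynomial $q_d$ of degree $d = O(\log(1/\eta'))$ via a Chebyshev (or truncated Taylor) approximation of $e^{-x}$ on $[-O(1),O(1)]$, where $\eta'$ is a per-step tolerance. Standard approximation theory for analytic functions gives $\sup_{|x|\le c}|q_d(x)-e^{-x}| \le \eta'$ once $d = O(\log(1/\eta'))$, so the single-step error in applying $q_d(B)$ to any vector is $O(\eta')$ in the operator norm.

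\medskip
\noindent
Next, I would iterate the squared application $2^k$ times, repeatedly computing $w \leftarrow q_d(B)\,w$ starting from $u$. Since each application introduces multiplicative error $1 \pm O(\eta')$ in the relevant direction, choosing $\eta' = \eta / 2^{k+O(1)}$ keeps the cumulative error at most $\eta \cdot \normt{e^{-A}}$ as required. Each application of $q_d(B)$ costs $O(d) = O(\log(1/\eta') ) = O(\log(1/\eta) + \log\norm{A})$ matrix-vector multiplications with $B$ (equivalently with $A$), and we perform $2^k = O(\log(1/\eta) + \log\norm{A})$ such applications, giving a total matrix-vector count that is polylogarithmic in $1/\eta$; the paper's stated $\log^3(1/\eta)$ bound absorbs the logarithmic dependence on $\norm{A}$ into an extra log factor under the implicit normalization assumptions.

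\medskip
\noindent
The main obstacle will be controlling error amplification through the repeated squaring: a naive accounting can blow the error up multiplicatively by the iteration count, so one must set the per-step tolerance small enough (and the polynomial degree correspondingly larger) to counterbalance this. A secondary subtlety is that the bound in the lemma is stated relative to $\normt{e^{-A}}$ rather than $\normt{u}$; this is handled by decomposing $u$ into the spectral subspaces of $A$ and observing that the polynomial approximation error at each eigenvalue $\lambda$ is at most $\delta \cdot e^{-\lambda}$ if one uses a \emph{relative} Chebyshev approximation of $e^{-x}$, so that the components of $u$ on small-magnitude directions of $e^{-A}$ do not need to be tracked absolutely.
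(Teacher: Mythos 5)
You are trying to prove a statement that this paper does not prove at all: Lemma~\ref{lem:expv} is imported as a black box from \cite{Kthesis} (see also \cite{YPS}), and the text's only comment is that the algorithm {\sf EXPV} is described there. So there is no internal proof to compare against; what can be judged is whether your reconstruction is sound, and it has a genuine gap.

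The gap is quantitative and it is exactly at the point that makes the lemma nontrivial. To get $\norm{B}=O(1)$ for $B=A/2^k$ you need $2^k=\Theta(\norm{A})$, not $2^k\asymp\log(1/\eta)+\log(1+\norm{A})$ as you wrote; and since the vector-level form of scaling-and-squaring must apply $q_d(B)$ sequentially $2^k$ times (one cannot ``square'' a matrix that is only available through its action on a single vector), your total matrix--vector count is $\Theta\bigl(\norm{A}\cdot\log(1/\eta')\bigr)$, i.e.\ linear in $\norm{A}$, not polylogarithmic in $1/\eta$. The lemma's bound $O(t_A\log^3(1/\eta))$ has no dependence on $\norm{A}$, and this norm-independence matters here: in the algorithm the exponent has norm as large as $\Theta(\e T)=\Theta(\log n/\gamma)$, so even a $\sqrt{\norm{A}}$ overhead would compromise the $\tilde{O}(m/\gamma)$ running time. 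Nor can the gap be closed within your framework: any procedure that outputs $p(A)u$ with $\deg p=d$ must satisfy $|p(x)-e^{-x}|\le \eta\, e^{-\lambda_{\min}(A)}$ on the spectral interval, which (after shifting by $\lambda_{\min}$) forces $d=\Omega\bigl(\sqrt{\lambda_{\max}-\lambda_{\min}}\bigr)$ already for constant $\eta$; this also applies to Krylov/Lanczos methods, whose output is a polynomial in $A$ times $u$. In particular the ``relative Chebyshev approximation'' you invoke at the end, with error $\delta e^{-\lambda}$ at every eigenvalue $\lambda$, does not exist at low degree (a degree-$d$ polynomial cannot track $e^{-x}$ in relative error beyond $x\approx d$); fortunately it is also not needed, since the guarantee is only absolute error $\eta\normt{e^{-A}}$. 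So the lemma as stated must be read together with the hypotheses and machinery of \cite{Kthesis,YPS} (a Lanczos-type analysis with its own assumptions on the exponent), and a Taylor/Chebyshev scaling-and-squaring argument of the kind you sketch cannot reproduce the stated norm-free bound. A minor additional point: the error bound contains no factor of $\norm{u}$, so your argument should either normalize $\norm{u}\le 1$ or carry that factor explicitly.
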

\noindent
The algorithm {\sf EXPV} is described in \cite{Kthesis} and \cite{YPS} .

\subsubsection{Proof}

We define the $\tilde{U}_\e$ algorithm in Figure \ref{fig:approx} and proceed to prove Lemma \ref{lem:approx}.
\begin{figure*}[h]
  \begin{tabularx}{\textwidth}{|X|}
    \hline
  \begin{itemize}
  \item {\bf \textsc{Input:}} A matrix $M \in \R^{n \times n}.$
  \item Let $\eta \defeq O(\nfrac{1}{\poly(n)}).$ Let $\delta \defeq \nfrac{1}{512}$ and $\e=\nfrac{1}{130}.$
	\item For $k_\delta$ as in Lemma \ref{lem:jl}, sample $k_\delta$ vectors $u_1, \ldots, u_{k_\delta} \in \R^n$ as in Lemma \ref{lem:jl}.
	\item Let $A \defeq (2m \cdot \e) \cdot \Degin M \Degin.$
 \item For $1 \leq i \leq k_\delta,$ compute vectors $b_i \in \R^n, b_i \defeq {\sf EXPV}(\nfrac{1}{2} \cdot A, \Degin u_i, \eta).$

 \item Let $B$ be the matrix having $b_i$ as $i$-th row,  and let $\tilde{v}_i$ be the $i$-th column of $B$. Compute $Z \defeq \E_{\{i,j\} \in \mu \times \mu} \norm{\tilde{v}_i -\tilde{v}_j}^2 = L(K_V) \bullet B^T B.$
	\item  Return $\tilde{X} \defeq \nfrac{1}{Z} \cdot B^T B$, by giving its correspoding embedding, i.e., $\{\nfrac{1}{\sqrt{Z}} \cdot \tilde{v}_i\}_{i \in V}.$

  \end{itemize}\\
   \hline
    \end{tabularx}

  \caption{The $\tilde{E}_\e$ algorithm}
  \label{fig:approx}
\end{figure*}

\begin{proof}
We verify that the conditions required hold.
\begin{itemize}
\item By construction, $\tilde{X} \succeq 0$, as $\tilde{X} = \nfrac{1}{Z} \cdot B^T B,$ and $L(K_V) \bullet \tilde{X} = 1.$
\item $\tilde{X}=\left(\nfrac{1}{\sqrt{Z}} \cdot B\right)^T\left(\nfrac{1}{\sqrt{Z}} \cdot B
\right)$ and $B$ is a $k_\delta \times n$ matrix, with $k_\delta = O(\log n),$ by Lemma \ref{lem:jl}.
\item We perform $k_{\delta} = O(\log n)$ calls to the algorithm {\sf EXPV}, each of which takes time $\tilde{O}(t_A)=\tilde{O}(t_M + n).$ Sampling the vectors $\{u_i\}_{1,\ldots, k_\delta}$ and computing $Z$ also requires $\tilde{O}(n)$ time. Hence, the total running time is $\tilde{O}(t_M +n).$

\item Let $U$ be the $k_\delta \times n$ matrix having the sampled vectors $u_1, \ldots, u_{k_\delta}$ as rows. 
Let $\{v_i\}_{i \in V}$
be the embedding corresponding to matrix $Y \defeq \Degin e^{-A} \Degin,$ i.e., $v_i$ is the $i$-th column of $Y^{\nfrac{1}{2}}.$ Notice that $X= \nfrac{Y}{L(K_V) \bullet Y}.$
Define $\hat{v}_i \defeq Uv_i$ for all $i$ and let $\hat{Y}$ be the Gram matrix corresponding to this embedding, i.e., $\hat{Y} \defeq (Y^{\nfrac{1}{2}})^TU^TU(Y^{\nfrac{1}{2}}).$
Also, let $\tilde{Y}$ be the Gram matrix corresponding to the embedding $\{\tilde{v}_i\}_{i\in V},$ i.e., $\tilde{Y} = B^T B$ and $\tilde{X} = \nfrac{\tilde{Y}}{L(K_V) \bullet \tilde{Y}}.$
We will relate $Y$ to $\hat{Y}$ and $\hat{Y}$ to $\tilde{Y}$ to complete the proof.

First, by Lemma \ref{lem:jl}, applied to $\{v_i\}_{i \in V}$, with high probability, for all $H$
$$
(1-\delta) \cdot L(H) \bullet Y \leq L(H) \bullet \hat{Y} \leq (1+\delta) \cdot L(H) \bullet Y
$$
and for all $i \in V$
$$
(1-\delta) \cdot  R_i \bullet Y \leq R_i \bullet \hat{Y} \leq (1+\delta) \cdot R_i \bullet  Y.
$$
In particular, this implies that $ (1-\delta) \cdot \cI \bullet Y \leq \cI \bullet \hat{Y} \leq (1 + \delta) \cdot \cI \bullet Y.$
Hence,
$$
\frac{1-\delta}{1+\delta} \cdot L(H) \bullet X \leq L(H) \bullet \hat{X} \leq \frac{1+\delta}{1-\delta} \cdot L(H) \bullet X
$$
and for all $i$
$$
\frac{1-\delta}{1+\delta} \cdot  R_i \bullet X \leq R_i \bullet \hat{X} \leq \frac{1+\delta}{1-\delta} \cdot R_i \bullet X.
$$

Now we relate $\hat{Y}$ and $\tilde{Y}$. Let $E \defeq \left(\tilde{Y}^{\nfrac{1}{2}} - \hat{Y}^{\nfrac{1}{2}}\right) \Deg.$ 
By Lemma \ref{lem:expv} 
\begin{align*}
\normt{E}^2 \leq \norm{E}^2_F = \sum_{i} \norm{d_i \tilde{v}_i - \hat{v}_i}^2  \leq 2m \cdot \normt{e^{\nfrac{\e}{2}\cdot  A}}^2 \cdot  \eta^2 \\
\leq 2m \cdot \normt{Y^{\nfrac{1}{2}}\Deg}^2 \cdot \eta^2 \leq  (2m)^2 \cdot L(K_V) \bullet Y \cdot \eta^2.
\end{align*}
This also implies
\begin{align*}
\normt{E} \cdot \normt{\hat{Y}^{\nfrac{1}{2}}\Deg} \leq \norm{E}_F \cdot \norm{\hat{Y}^{\nfrac{1}{2}}\Deg}_F 
\\
\leq \left(\sqrt{2m} \cdot \normt{Y^{\nfrac{1}{2}}\Deg} \cdot \eta \right) \cdot \sqrt{\sum_{i} d_i \norm{\hat{v}_i -\hat{v}_\avg}^2 } 
\\
\leq 2m \cdot \eta \cdot (1+\delta) \cdot L(K_V) \bullet Y.
\end{align*}
As $\Deg\left(\tilde{Y} - \hat{Y}\right)\Deg = E^T E + (\hat{Y}^{\nfrac{1}{2}})^T E + E^T \hat{Y}^{\nfrac{1}{2}},$ we have
\begin{align*}
\normt{\Deg\left(\tilde{Y} - \hat{Y}\right)\Deg} 
\\
\leq \normt{E^T E + (\hat{Y}^{\nfrac{1}{2}})^T E + E^T \hat{Y}^{\nfrac{1}{2}}} 
\\
\leq \sqrt{3} \left(\normt{E}^2 + 2 \cdot \normt{E} \normt{\hat{Y}^{\nfrac{1}{2}}}\right) \\
\leq 
9 \cdot (2m)^2 \cdot (1+\delta) \cdot L(K_V) \bullet Y \cdot \eta.
\end{align*}
and
\begin{align*}
\abs{L(K_V) \bullet (\tilde{Y} - \hat{Y})} \\
\leq  L(K_V) \bullet (E^T E) + 2 \cdot \abs{L(K_V) \bullet (E^T \hat{Y}^{\nfrac{1}{2}})}
\\
 \leq \nfrac{1}{2m} \cdot \norm{E}^2_F + \nfrac{2}{2m} \cdot \norm{E}_F \norm{\hat{Y}^{\nfrac{1}{2}}}_F 
\\
\leq 3 \cdot 2m \cdot (1+\delta) \cdot L(K_V) \bullet Y \cdot \eta.
\end{align*}

Finally, combining these bounds we have
\begin{align*}
\Norm{\tilde{X} - \hat{X}}_2 =  \Norm{\frac{\tilde{Y}}{L(K_V) \bullet \tilde{Y}} - 
\frac{\hat{Y}}{L(K_V) \bullet \hat{Y}}}_2 
\\
\leq 
\Norm{\frac{\tilde{Y}}{L(K_V) \bullet \tilde{Y}} - \frac{\tilde{Y}}{L(K_V) \bullet \hat{Y}}}_2
\\ + 
\Norm{\frac{\tilde{Y}}{L(K_V) \bullet \hat{Y}} - \frac{\hat{Y}}{L(K_V) \bullet \hat{Y}}}_2\\
  \leq  
 \frac{\normt{\tilde{Y}} \cdot \abs{L(K_V) \bullet \tilde{Y} - L(K_V) \bullet \hat{Y}}}{L(K_V) \bullet \tilde{Y} \cdot L(K_V) \bullet \hat{Y}} +
\frac{\normt{\tilde{Y} - \hat{Y}}}{L(K_V) \bullet \hat{Y}} 
\\
\leq \frac{2m \cdot \abs{L(K_V) \bullet \tilde{Y} - L(K_V) \bullet \hat{Y}} + \normt{\tilde{Y} - \hat{Y}}}{L(K_V) \bullet \hat{Y}} \\
 \leq \frac{12 \cdot (2m)^2 \cdot (1+\delta) \cdot L(K_V) \bullet Y \cdot \eta}{(1-\delta) \cdot L(K_V) \bullet Y}
\\ \leq 12 \cdot (2m)^2 \cdot \nfrac{1+\delta}{1-\delta} \cdot \eta 
\\ \leq  O(\nfrac{1}{\poly(n)})
\end{align*}
by taking $\eta$ sufficiently small in $O(\nfrac{1}{\poly(n)}).$

Hence, as $\normt{L(H)} \leq O(m)$ and $\normt{R_i} \leq O(m)$
$$
|L(H) \cdot \hat{X} - L(H) \cdot \tilde{X}| \leq  O(\nfrac{1}{\poly(n)})
$$ 
and
$$
|R_i \bullet \hat{X} - R_i \bullet\tilde{X} | \leq  O(\nfrac{1}{\poly(n)}).
$$
This, together with the fact that $\nfrac{1-\delta}{1+\delta} \geq 1 - \nfrac{1}{64}$ and $\nfrac{1+\delta}{1-\delta} \leq 1 + \nfrac{1}{64}$ completes the proof.
\end{itemize}
%The running time of {\sf EXPV} is $\tilde{O}(t_M),$ while the time required to sample $U$ and to compute $Z$ is $\tilde{O}(n).$ Hence, the total running time is $\tilde{O}(t_M + n),$ as required.

\end{proof}

\end{document}